\newcommand{\rnote}[1]{}
\newcommand{\jnote}[1]{}
\newcommand{\unitary}[1]{\mathrm{U}(#1)}
\newcommand{\gl}[1]{\mathrm{GL}_d}
\newcommand{\glrep}[1]{\pi_{#1}} \newcommand{\uirrep}{\pi}
\newcommand{\symm}[1]{{\mathfrak S}(#1)}
\newcommand{\symdim}[1]{\dim(#1)}
\newcommand{\SW}[2]{\mathrm{SW}^{#1}(#2)}
\newcommand{\rsk}{\mathrm{RSK}}
\newcommand{\shRSK}{\mathrm{shRSK}}
\newcommand{\keyl}[2]{\mathrm{K}_{#1}(#2)}
\newcommand{\rank}{\mathrm{rank}}
\newcommand{\diag}{\mathrm{diag}}
\newcommand{\spec}{\mathrm{spec}}
\newcommand{\power}[1]{\Delta_{#1}}
\newcommand{\dtvk}[3]{d_{\mathrm{TV}}^{(#1)}(#2,#3)}
\newcommand{\SWdist}[2]{\mathrm{SW}^{#1}(#2)}
\newcommand{\RSK}{\mathrm{RSK}}
\newcommand{\SYT}{\mathrm{SYT}}
\newcommand{\sh}{\mathrm{sh}}
\newcommand{\LIS}{\mathrm{LIS}} 
\newcommand{\high}{\mathrm{ht}}
\newcommand{\behead}{\mathrm{behead}}
\newcommand{\beheaded}{beheaded\xspace}
\newcommand{\beheading}{beheading\xspace}
\newcommand{\beheadings}{beheadings\xspace}
\newcommand{\curtail}{\mathrm{curtail}}
\newcommand{\curtailment}{curtailment\xspace}
\newcommand{\ssLIS}{\mathbin{\rhd\!\!\!\!\gg}}
\newcommand{\Ham}[3]{{#1}^{#2}_{#3}}
\newcommand{\SymHam}[3]{{#1}^{#2}_{#3,#2-#3}}
\newcommand{\dd}{\,..\,}
\newcommand{\bm}{\boldsymbol{m}}
\newcommand{\ullam}{\underline{\lambda}}
\newcommand{\ulblam}{\underline{\blambda}}
\newcommand{\upperp}[2]{{#1}^{[#2]}}
\newcommand{\lowerp}[2]{{#1}_{[#2]}}
\newcommand{\Specht}[1]{\mathrm{Sp}_{#1}}
\newcommand{\Weyl}[2]{\mathrm{V}_{#1}^{#2}}
\newcommand{\sch}{\Phi}
\newcommand{\prm}{\mathrm{pm}}
\theoremstyle{definition}
\newtheorem*{nameddefinition}{\theoremname}
\begin{document}
\title{Efficient quantum tomography}

\author{Ryan O'Donnell$^*$
\and
John Wright\thanks{Department of Computer Science, Carnegie Mellon University.  Supported by NSF grants CCF-0747250 and CCF-1116594.  The second-named author is also supported by a Simons Fellowship in Theoretical Computer Science.
\texttt{\{odonnell,jswright\}@cs.cmu.edu}}}

\maketitle

\begin{abstract}
    In the quantum state tomography problem, one wishes to estimate an unknown $d$-dimensional mixed quantum state~$\rho$, given few copies.  We show that $O(d/\eps)$ copies suffice to obtain an estimate $\hat{\rho}$ that satisfies $\|\hat{\rho} - \rho\|_F^2 \leq \eps$ (with high probability).  An immediate consequence is that $O(\rank(\rho) \cdot d/\eps^2) \leq O(d^2/\eps^2)$ copies suffice to obtain an $\eps$-accurate estimate in the standard trace distance.  This improves on the best known prior result of~$O(d^3/\eps^2)$ copies for full tomography, and even on the best known prior result of $O(d^2\log(d/\eps)/\eps^2)$ copies for spectrum estimation.  Our result is the first to show that nontrivial tomography can be obtained using a number of copies that is just \emph{linear} in the dimension.

    Next, we generalize these results to show that one can perform efficient principal component analysis on~$\rho$.
    Our main result is that $O(k d/\eps^2)$ copies suffice to output a rank-$k$ approximation $\hat{\rho}$ whose trace distance error is at most $\eps$ more than that of the best rank-$k$ approximator to $\rho$.  This subsumes our above trace distance tomography result and generalizes it to the case when~$\rho$ is not guaranteed to be of low rank.
    A key part of the proof is the analogous generalization of our spectrum-learning results: we show that the largest $k$ eigenvalues of~$\rho$ can be estimated to trace-distance error~$\eps$ using $O(k^2/\eps^2)$ copies.  In turn, this result  relies on a new coupling theorem concerning the Robinson--Schensted--Knuth algorithm that should be of independent combinatorial interest.
\end{abstract}

\section{Introduction}

Quantum state tomography refers to the task of estimating an unknown $d$-dimensional quantum mixed quantum state,~$\rho$, given the ability to prepare and measure~$n$ copies, $\rho^{\otimes n}$.  It is of enormous practical importance for experimental detection of entanglement and the verification of quantum technologies.  For an anthology of recent advances in the area, the reader may consult~\cite{BCG13}.  As stated in its introduction,
\begin{quote}
    \emph{The bottleneck limiting further progress in estimating the states of [quantum] systems has shifted from physical controllability to the problem of handling\dots the exponential scaling of the number of parameters describing quantum many-body states.}
\end{quote}
Indeed, a system consisting of $b$ qubits has dimension $d = 2^b$ and is described by a density matrix with $d^2 = 4^b$ complex parameters.  For practical experiments with, say, $b \leq 10$, it is imperative to use tomographic methods in which~$n$ grows as slowly as possible with~$d$.  For $20$ years or so, the best known method
used $n = O(d^4)$ copies to estimate~$\rho$ to constant error; just recently this was improved~\cite{KRT14} to $n = O(d^3)$.  Despite the practical importance and mathematical elegance of the quantum tomography problem, the optimal dependence of~$n$ on~$d$ remained ``shockingly unknown''~\cite{Har15} as of early 2015.

In this work we analyze known measurements arising from the representation theory of the symmetric and general linear groups $\symm{n}$ and $\gl{d} = \gl{d}(\C)$ --- specifically, the ``Empirical Young Diagram (EYD)'' measurement considered by~\cite{ARS88,KW01}, followed by Keyl's~\cite{KW01,Key06} state estimation measurement based on projection to highest weight vectors.  The former produces a random height-$d$ partition $\blambda \vdash n$ according to the \emph{Schur--Weyl distribution} $\SW{n}{\alpha}$, which depends only on the spectrum $\alpha_1 \geq \alpha_2 \geq \cdots \geq \alpha_d$ of~$\rho$; the latter produces a random $d$-dimensional unitary~$\bU$ according to what may be termed the \emph{Keyl distribution} $\keyl{\lambda}{\rho}$.  Writing $\ullam$ for $(\lambda_1/n, \dots, \lambda_d/n)$, we show the following results:

\begin{theorem}                                     \label{thm:EYD-error-main}
    $\displaystyle \E_{\blambda \sim \SW{n}{\alpha}} \|\ulblam - \alpha\|_2^2 \leq \frac{d}{n}$.
\end{theorem}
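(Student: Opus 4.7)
The plan is to expand
\[
\E\|\ullam-\alpha\|_2^2 \;=\; \E\|\ullam\|_2^2 \;-\; 2\,\E\langle \ullam,\alpha\rangle \;+\; \|\alpha\|_2^2
\]
and bound the two expectations separately, aiming to show $\E\|\ullam\|_2^2 \le \|\alpha\|_2^2 + d/n$ and $\E\langle \ullam,\alpha\rangle \ge \|\alpha\|_2^2$. Combining these immediately yields the claimed bound.

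For the inner-product lower bound I will use the word description of $\SW{n}{\alpha}$: draw $w_1,\dots,w_n$ i.i.d.\ from $\alpha$ (viewed as a distribution on $[d]$, with $\alpha_1\ge\cdots\ge\alpha_d$), apply $\rsk$, and let $\blambda$ be the shape of the output tableaux. If $N_j$ denotes the number of $j$'s in the word, then the $K$ subsequences consisting of all occurrences of each of the letters $1,2,\dots,K$ are weakly increasing, so Greene's theorem gives $\lambda_1+\cdots+\lambda_K \ge N_1+\cdots+N_K$ for every $K\le d$. Taking expectations yields $\E[\ullam_1+\cdots+\ullam_K] \ge \alpha_1+\cdots+\alpha_K$. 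Summation by parts,
\[
\langle \ullam,\alpha\rangle \;=\; \sum_{K=1}^{d}(\alpha_K-\alpha_{K+1})\,(\ullam_1+\cdots+\ullam_K) \qquad (\alpha_{d+1}:=0),
\]
then delivers $\E\langle \ullam,\alpha\rangle \ge \sum_K(\alpha_K-\alpha_{K+1})(\alpha_1+\cdots+\alpha_K) = \|\alpha\|_2^2$, since $\alpha_K-\alpha_{K+1}\ge 0$.

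For the squared-norm upper bound I will use the character-theoretic identity $\Pr[\blambda=\lambda] = f^\lambda\,s_\lambda(\alpha)$ together with Frobenius's formula $\sum_\lambda s_\lambda(\alpha)\,\chi^\lambda(\sigma) = p_{\cyctype{\sigma}}(\alpha)$, applied to a transposition $\sigma=(1\,2)$. Combined with the classical content evaluation $\chi^\lambda((1\,2))/f^\lambda = \frac{2}{n(n-1)}\sum_{x\in\lambda} c(x)$, this gives $\E[\sum_{x\in\blambda} c(x)] = \frac{n(n-1)}{2}\|\alpha\|_2^2$. Expanding $\sum_x c(x) = \frac12\sum_i \lambda_i(\lambda_i-2i+1)$ rearranges to
\[
\E\Bigl[\sum_i \lambda_i^2\Bigr] \;=\; n(n-1)\|\alpha\|_2^2 \;+\; 2\,\E\Bigl[\sum_i i\,\lambda_i\Bigr] \;-\; n.
\]
The final ingredient is the elementary bound $\sum_i i\lambda_i \le n(d+1)/2$, valid for every partition $\lambda\vdash n$ with $\ell(\lambda)\le d$ (one-line Lagrangian maximization on the simplex, attained at $\lambda=(n/d,\dots,n/d)$). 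Substituting gives $\E[\sum_i\lambda_i^2] \le n(n-1)\|\alpha\|_2^2 + nd$, i.e., $\E\|\ullam\|_2^2 \le \|\alpha\|_2^2 + d/n$, as required.

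The only delicate step is the character computation, where the normalizations of $\chi^\lambda$, $f^\lambda$, and $s_\lambda$ must line up correctly; everything else is essentially a one-liner once the ingredients are in place. I expect the resulting constant in the $d/n$ bound to be tight in the worst case, matched (up to lower-order terms) by the maximally mixed $\alpha=(1/d,\dots,1/d)$, which is precisely the extremal case for the combinatorial inequality $\sum_i i\lambda_i\le n(d+1)/2$.
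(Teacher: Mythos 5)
Your proof is correct and follows essentially the same path as the paper's: you lower-bound the cross term via Greene's theorem plus the majorization inequality (your summation by parts is exactly that inequality), and upper-bound $\E\sum_i\blambda_i^2$ via the identity $\E\bigl[\sum_{x\in\blambda}c(x)\bigr] = \tfrac{n(n-1)}{2}\|\alpha\|_2^2$ together with $\sum_i i\lambda_i \le n(d+1)/2$ coming from $\lambda \succ (n/d,\dots,n/d)$. The only difference is that the paper cites the identity from~\cite{OW15} (in the form $\E[p^*_2(\blambda)] = n(n-1)\sum_i\alpha_i^2$, noting $p^*_2(\lambda) = 2\sum_{x\in\lambda}c(x)$), while you rederive it from Frobenius's formula and the content evaluation of the normalized character at a transposition --- a worthwhile self-contained derivation, but the same underlying fact and the same overall structure.
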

\begin{theorem}                                     \label{thm:tomography-error-main}
    $\displaystyle \E_{\substack{\blambda \sim \SW{n}{\alpha} \\ \bU \sim \keyl{\blambda}{\rho}}} \|\bU \diag(\ulblam) \bU^\dagger - \rho\|_F^2 \leq \frac{4d-3}{n}$.
\end{theorem}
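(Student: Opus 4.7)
The strategy is to combine the spectrum-learning bound of Theorem~\ref{thm:EYD-error-main} with a tight analysis of how well the Keyl POVM reconstructs the eigenbasis of $\rho$. The first reduction is to assume $\rho = \diag(\alpha)$: if $\rho = V\diag(\alpha)V^\dagger$, then the Schur--Weyl projectors commute with $V^{\otimes n}$, so $\blambda$ has the same distribution under either state; and the Keyl POVM is covariant in the sense that $V^\dagger\bU \sim \keyl{\blambda}{\diag(\alpha)}$ whenever $\bU \sim \keyl{\blambda}{\rho}$.  Since $\|V A V^\dagger\|_F = \|A\|_F$, this reduction is free.

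With $\rho = \diag(\alpha)$, expanding the Frobenius distance gives
\begin{align*}
\|\bU\diag(\ulblam)\bU^\dagger - \diag(\alpha)\|_F^2
&= \|\ulblam\|_2^2 + \|\alpha\|_2^2 - 2\sum_{i,j}\alpha_i\,\ulblam_j\,|\bU_{ij}|^2 \\
&= \|\ulblam - \alpha\|_2^2 + 2\sum_{i,j}\alpha_i\,\ulblam_j\bigl(\delta_{ij} - |\bU_{ij}|^2\bigr),
\end{align*}
and by Theorem~\ref{thm:EYD-error-main} the first summand contributes at most $d/n$ in expectation. It therefore suffices to show that the ``basis-error'' second summand is at most $(3d-3)/n$ in expectation.

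For that, I would use the explicit form of the Keyl POVM: at $U$ it is $\dim(\Weyl{\lambda}{d}) \cdot \pi_\lambda(U)\ket{v_\lambda}\bra{v_\lambda}\pi_\lambda(U)^\dagger\,dU$, where $v_\lambda$ is a highest-weight vector of $\pi_\lambda$. Because $\pi_\lambda(\diag(\alpha))$ is diagonal in the weight basis with $v_\lambda$ as its top eigenvector, the conditional integral $\E[|\bU_{ij}|^2 \mid \blambda = \lambda]$ reduces to a Haar-type integral over the unitary group, which evaluates to a closed-form ratio of Schur-polynomial specializations via Weyl's integration formula.  Substituting and averaging over $\blambda \sim \SW{n}{\alpha}$, with a second application of Theorem~\ref{thm:EYD-error-main} to absorb the leading correction, should yield the $(3d-3)/n$ target.

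The hard part will be this representation-theoretic calculation: a crude estimate of $\E|\bU_{ij}|^2$ gives only $O(d/n)$, and to land on the sharp constant $4d-3$ one needs exact weight-space identities and careful cancellation between the Keyl ``tilt'' (which scales with spectral gaps $\alpha_i - \alpha_j$) and the fluctuations of $\ulblam$ around $\alpha$. A potentially cleaner route is to first compute the operator $\E[\bU\diag(\ulblam)\bU^\dagger \mid \blambda]$ in closed form and only then take its Frobenius distance to $\rho$; this packages the double sum into a single operator identity but ultimately rests on the same character-theoretic computation.
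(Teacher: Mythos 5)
Your opening reduction to $\rho = \diag(\alpha)$ and the decomposition
\[
\|\bU\diag(\ulblam)\bU^\dagger - \diag(\alpha)\|_F^2 = \|\ulblam - \alpha\|_2^2 + 2\sum_{i,j}\alpha_i\,\ulblam_j\bigl(\delta_{ij} - |\bU_{ij}|^2\bigr)
\]
are both correct and are essentially the paper's first line rewritten in a slightly different coordinate system (the paper works with $\|\diag(\ulblam) - \bU^\dagger\rho\bU\|_F^2$ and the diagonal entries $(\bU^\dagger\rho\bU)_{i,i}=\sum_j\alpha_j|\bU_{ji}|^2$ rather than the individual $|\bU_{ij}|^2$). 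So far, so good.

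The genuine gap is that the entire technical content of the proof is the step you leave as a sketch, and the route you sketch is unlikely to work. You propose to compute $\E\bigl[|\bU_{ij}|^2 \mid \blambda=\lambda\bigr]$ in closed form ``via Weyl's integration formula,'' but the Keyl distribution is not Haar: its density against Haar is $\sch_\lambda(\alpha)^{-1}\prod_k \prm_k(\bU^\dagger\rho\bU)^{\lambda_k-\lambda_{k+1}}$, a product of powers of nested principal minors. Individual entry moments $\E|\bU_{ij}|^2$ under this measure do not reduce to a tidy ratio of Schur specializations, and the paper in fact never touches them. Instead it works with the aggregated diagonals $\E_{\bU}(\bU^\dagger\rho\bU)_{m,m}$, and the crucial trick (Lemma~\ref{lem:gonna-apply-this-twice}) is to insert an auxiliary Haar unitary $\bV\in\unitary{m}$ acting on the top-left $m\times m$ block, use translation-invariance, and then invoke the spherical-function identity $\E_{\bV}\power{\upperp{\lambda}{m}}(\bar{\bV}^\dagger\,\cdot\,\bar{\bV}) = \sch_{\upperp{\lambda}{m}}(\cdot)$. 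Even then the paper does not get an exact formula for $\E_{\bU}(\bU^\dagger\rho\bU)_{m,m}$; it obtains a convex-combination structure (Lemma~\ref{lem:diagonal}) and a one-sided bound (Corollary~\ref{cor:diagonal}), whose proof requires Sra's recent resolution of the Cuttler--Greene--Skandera monotonicity conjecture for normalized Schur polynomials. After that there is still a nontrivial chain: the inequality $r \ge 2 - 1/r$, the majorization facts Proposition~\ref{prop:random-walk-majorization} and Lemma~\ref{lem:lambda-majorize}, and a shift-of-index computation using the Schur--Weyl probability formula~\eqref{eqn:sw-probs} and the hook-content formula~\eqref{eqn:ssyt-formula}, which is where the sharp constant~$4d-3$ actually comes from. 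None of this is present in your proposal.

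A secondary issue: your split of the target as ``$d/n$ for $\|\ulblam-\alpha\|_2^2$ plus $(3d-3)/n$ for the basis-error term'' is a guess obtained by subtraction, not something you derive, and the paper does not produce the bound by such a clean additive split. (Nonnegativity of your basis-error term does hold --- it is von Neumann's trace inequality --- but that is the easy direction.) In short: the framing is right, but the proof you would need for the second summand is exactly the part you have not supplied, and the computation you propose to do instead is harder than what actually works.
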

In particular, up to a small constant factor, full tomography is no more expensive than spectrum estimation.  These theorems have the following straightforward consequences:
\begin{corollary}                                       \label{cor:EYD-main}
    The spectrum of an unknown rank-$r$ mixed state $\rho \in \C^{d \times d}$ can be estimated to error~$\eps$ in $\ell_2$-distance using $n = O(r / \eps^2)$ copies, or to error~$\eps$ in total variation distance using $n = O(r^2 / \eps^2)$ copies.
\end{corollary}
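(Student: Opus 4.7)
The plan is to reduce the rank-$r$ corollary to Theorem~\ref{thm:EYD-error-main} by observing that only the $r$ nonzero entries of the spectrum $\alpha$ contribute to the Schur--Weyl distribution, then pass from the expected $\ell_2^2$-bound to the two claimed high-probability guarantees via Markov and Cauchy--Schwarz.

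First, I would record the key structural point: if $\rho$ has rank $r$, then $\alpha = (\alpha_1,\dots,\alpha_r,0,\dots,0)$. Since the Schur polynomial $s_\lambda$ satisfies $s_\lambda(\alpha_1,\dots,\alpha_r,0,\dots,0) = 0$ whenever $\lambda$ has more than $r$ parts, the distribution $\SW{n}{\alpha}$ is supported entirely on partitions $\blambda$ of height at most $r$. Thus when we apply Theorem~\ref{thm:EYD-error-main}, both $\alpha$ and $\ulblam$ may be regarded as elements of $\R^r$, and the theorem's effective dimension is $r$ rather than $d$, giving
\[
\E_{\blambda \sim \SW{n}{\alpha}} \|\ulblam - \alpha\|_2^2 \leq \frac{r}{n}.
\]

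For the $\ell_2$-claim I would then apply Markov's inequality: taking $n = C r/\eps^2$ for a sufficiently large constant $C$ makes $\E\|\ulblam - \alpha\|_2^2 \leq \eps^2/3$, so $\|\ulblam - \alpha\|_2 \leq \eps$ holds with probability at least $2/3$ (which is the standard ``high probability'' convention; one can amplify by repetition and median-of-means if a stronger bound is desired). For the total variation claim, since both $\alpha$ and $\ulblam$ are probability distributions on an $r$-element support, the total variation distance equals $\tfrac{1}{2}\|\ulblam - \alpha\|_1$, and Cauchy--Schwarz gives $\|\ulblam - \alpha\|_1 \leq \sqrt{r}\,\|\ulblam - \alpha\|_2$. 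Hence ensuring $\|\ulblam-\alpha\|_2 \leq 2\eps/\sqrt r$ suffices, which needs only $n = O(r^2/\eps^2)$ copies by the same Markov argument applied to $r/n$.

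The only substantive step is the first paragraph's observation that the rank of $\rho$ caps the height of $\blambda$; the rest is entirely routine. I do not expect any real obstacle, since Theorem~\ref{thm:EYD-error-main} is given as a black box and the corollary amounts to interpreting the bound in the right effective dimension and then converting norms in the standard way.
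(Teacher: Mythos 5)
Your proposal is correct and mirrors the paper's argument: both reduce to the $r/n$ bound from Theorem~\ref{thm:EYD-error-main} by observing that for rank-$r$ $\rho$ the Schur--Weyl distribution is supported on height-$\le r$ partitions, then use Cauchy--Schwarz to pass from $\ell_2$ to total variation and Markov to get a constant-probability guarantee. The only minor quibble is your aside about amplification: median-of-means is the right tool for scalar estimates, whereas the paper uses the standard trick for vector/distribution estimates (repeat $O(\log 1/\delta)$ times and output the run that is within $2\eps$ of a plurality of the others), but this is peripheral since the corollary itself only claims a high-probability bound.
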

\begin{corollary}                                       \label{cor:tomography-main}
    An unknown rank-$r$ mixed state $\rho \in \C^{d \times d}$ may be estimated to error~$\eps$ in Frobenius distance using $n = O(d/\eps^2)$ copies, or to error~$\eps$ in trace distance using $n = O(rd/\eps^2)$ copies.
\end{corollary}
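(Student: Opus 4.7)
The plan is to derive both bounds directly from Theorem~\ref{thm:tomography-error-main}. For the Frobenius claim, I would take the estimator $\hat{\rho} := \bU \diag(\ullam) \bU^\dagger$ produced by running the EYD measurement followed by Keyl's measurement on $\rho^{\otimes n}$. Theorem~\ref{thm:tomography-error-main} gives $\E\|\hat{\rho} - \rho\|_F^2 \leq (4d-3)/n$, so choosing $n = Cd/\eps^2$ for a sufficiently large absolute constant $C$ and applying Markov's inequality to the nonnegative random variable $\|\hat{\rho}-\rho\|_F^2$ yields $\|\hat{\rho}-\rho\|_F \leq \eps$ with high constant probability. Note that this step does not use the rank hypothesis at all.

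For the trace-distance claim I would exploit the rank-$r$ assumption on $\rho$ by post-processing $\hat{\rho}$ into its best rank-$r$ approximation $\hat{\rho}_r$ (retain the $r$ largest eigenvalues of $\hat{\rho}$, zero out the rest). Since $\rho$ is itself a valid rank-$r$ competitor, the Eckart--Young theorem in Frobenius norm gives $\|\hat{\rho} - \hat{\rho}_r\|_F \leq \|\hat{\rho} - \rho\|_F$, and then by the triangle inequality
\[
\|\hat{\rho}_r - \rho\|_F \leq \|\hat{\rho}_r - \hat{\rho}\|_F + \|\hat{\rho} - \rho\|_F \leq 2\|\hat{\rho} - \rho\|_F.
\]
The matrix $\hat{\rho}_r - \rho$ has rank at most $2r$, so Cauchy--Schwarz applied to its singular values gives the standard inequality $\|M\|_1 \leq \sqrt{\rank(M)}\,\|M\|_F$. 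Combining,
\[
\|\hat{\rho}_r - \rho\|_1^2 \leq 2r \cdot \|\hat{\rho}_r - \rho\|_F^2 \leq 8r \cdot \|\hat{\rho} - \rho\|_F^2,
\]
and taking expectations via Theorem~\ref{thm:tomography-error-main} yields $\E\|\hat{\rho}_r - \rho\|_1^2 \leq 32\,rd/n$. Setting $n = C' rd/\eps^2$ and applying Markov delivers trace distance at most $\eps$ with high constant probability.

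No step above is hard; the content is a routine combination of Theorem~\ref{thm:tomography-error-main} with the Eckart--Young truncation and the rank-versus-norm inequality, and is the same mechanism by which Corollary~\ref{cor:EYD-main} will be deduced from Theorem~\ref{thm:EYD-error-main} for spectrum estimation. The only mild subtlety worth flagging is the truncation step itself: without replacing $\hat{\rho}$ by $\hat{\rho}_r$, the rank of $\hat{\rho}-\rho$ could be as large as $d+r$, and we would only recover the weaker bound $O(d^2/\eps^2)$ independent of~$r$.
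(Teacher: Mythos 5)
Your proof is correct, and the Frobenius part matches the paper exactly (Theorem~\ref{thm:tomography-error-main} plus Markov). For the trace-distance part you take a slightly different route: you insert a post-processing truncation of $\hat\rho$ to its best rank-$r$ approximation $\hat\rho_r$, invoking Eckart--Young and a triangle inequality to get the factor $2$. The paper instead observes that this truncation is unnecessary, because the estimator $\hat\rho = \bU\diag(\ulblam)\bU^\dagger$ \emph{already} has rank at most~$r$ whenever $\rank(\rho) \leq r$. The reason is that if $\alpha$ has at most $r$ nonzero entries, then $s_\lambda(\alpha) = 0$ for every $\lambda$ with $\ell(\lambda) > r$ (equivalently, via RSK, a random word drawn from a distribution supported on $[r]$ yields a partition with at most $r$ rows), so the Schur--Weyl measurement almost surely returns a $\blambda$ of height $\leq r$, making $\diag(\ulblam)$ --- and hence $\hat\rho$ --- rank at most~$r$. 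Then $\hat\rho - \rho$ has rank at most $2r$ directly, and Cauchy--Schwarz on singular values finishes the job with no triangle inequality or Eckart--Young step. Your closing remark --- that without truncation $\hat\rho - \rho$ could have rank as large as $d+r$, forcing an $O(d^2/\eps^2)$ bound --- is therefore mistaken: the output is automatically low-rank. The truncation step you add is harmless (it only costs a constant factor of $4$ in the squared bound), but recognizing the automatic rank constraint is both cleaner and the reason the paper needs no post-processing.
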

\noindent  (These bounds are with high probability; confidence $1-\delta$ may be obtained by increasing the copies by a factor of $\log(1/\delta)$.)\\

The previous best result for spectrum estimation~\cite{HM02,CM06} used $O(r^2 \log(r/\eps)/\eps)$ copies for an $\eps$-accurate estimation in KL-divergence, and hence $O(r^2 \log(r/\eps)/\eps^2)$ copies for an $\eps$-accurate estimation in total variation distance.  The previous best result for tomography is the very recent~\cite[Theorem~2]{KRT14}, which uses $n = O(r d /\eps^2)$ for an $\eps$-accurate estimation in Frobenius distance, and hence $n = O(r^2 d / \eps^2)$ for trace distance.

As for lower bounds, it follows immediately from~\cite[Lemma~5]{FGLE12} and Holevo's bound that $\wt{\Omega}(r d)$ copies are necessary for tomography with trace-distance error~$\eps_0$, where $\eps_0$ is a universal constant.  (Here and throughout $\wt{\Omega}(\cdot)$ hides a factor of $\log d$.) Also, Holevo's bound combined with the existence of $2^{\Omega(d)}$ almost-orthogonal pure states shows that $\wt{\Omega}(d)$ copies are necessary for tomography with Frobenius error~$\eps_0$, even in the rank-$1$ case.  Thus our tomography bounds are optimal up to at most an $O(\log d)$ factor when $\eps$ is a constant.  (Conversely, for constant~$d$, it is easy to show that $\Omega(1/\eps^2)$ copies are necessary even just for spectrum estimation.)  Finally, we remark that $\wt{\Omega}(d^2)$ is a lower bound for tomography with Frobenius error $\eps = \Theta(1/\sqrt{d})$; this also matches our $O(d/\eps^2)$ upper bound.  This last lower bound follows from Holevo and the existence~\cite{Sza82} of $2^{\Omega(d^2)}$ normalized rank-$d/2$ projectors with pairwise Frobenius distance at least $\Omega(1/\sqrt{d})$.

\subsection{Principal component analysis}

Our next results concern principal component analysis (PCA), in which the goal is to find the best rank-$k$ approximator to a mixed state~$\rho \in \C^{d \times d}$, given $1 \leq k \leq d$.
Our algorithm is identical to the Keyl measurement from above, except rather than outputting $\bU \diag(\ulblam) \bU^\dagger$, it outputs  $\bU \diag^{(k)}(\ulblam) \bU^\dagger$ instead, where $\diag^{(k)}(\ulblam)$ means $\diag(\ulblam_1, \dots, \ulblam_k, 0, \dots, 0)$. Writing $\alpha_1 \geq \alpha_2 \geq \ldots \geq \alpha_d$ for the spectrum of~$\rho$, our main result is:

\begin{theorem}                                     \label{thm:pca-error-main}
    $\displaystyle \E_{\substack{\blambda \sim \SW{n}{\alpha} \\ \bU \sim \keyl{\blambda}{\rho}}} \|\bU \diag^{(k)}(\ulblam) \bU^\dagger - \rho\|_1
    	\leq \alpha_{k+1} + \ldots + \alpha_d + 6\sqrt{\frac{k d}{n}}$.
\end{theorem}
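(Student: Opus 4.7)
The plan is to decompose $\hat\rho_k - \rho$ in a block-structured way aligned with the random projection onto the top-$k$ eigenspace of $\hat\rho$, and then bound the two pieces using Theorems \ref{thm:EYD-error-main} and \ref{thm:tomography-error-main}. Write $\hat\rho := \bU\diag(\ulblam)\bU^\dagger$, let $P$ be the projector onto the first $k$ columns of $\bU$, and set $Q := I - P$. Then $\hat\rho_k = P\hat\rho P$, and because $P$ commutes with $\hat\rho$ we have $\hat\rho - \hat\rho_k = Q\hat\rho Q$. Splitting $\rho - \hat\rho$ into its four $P/Q$ blocks and observing the cancellation $Q(\rho-\hat\rho)Q + Q\hat\rho Q = Q\rho Q$ yields the key identity
\[
\rho - \hat\rho_k \;=\; \underbrace{P(\rho-\hat\rho)P + P(\rho-\hat\rho)Q + Q(\rho-\hat\rho)P}_{E_1} \;+\; Q\rho Q,
\]
i.e.\ three $P/Q$-blocks of the \emph{noise} $\rho - \hat\rho$, together with the full $QQ$-block of $\rho$ itself (not of $\rho - \hat\rho$).

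Step one bounds $\|E_1\|_1$. Since $E_1 = P(\rho-\hat\rho) + Q(\rho-\hat\rho)P$ is a sum of two rank-$\leq k$ matrices, it has rank $\leq 2k$, so $\|E_1\|_1 \leq \sqrt{2k}\,\|E_1\|_F$. In block form, $E_1$ is $\rho-\hat\rho$ with its $QQ$-block zeroed out, so $\|E_1\|_F \leq \|\rho-\hat\rho\|_F$. Theorem \ref{thm:tomography-error-main} and Jensen's inequality then give $\E\|E_1\|_1 \leq \sqrt{2k(4d-3)/n} < 2\sqrt{2}\,\sqrt{kd/n}$.

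Step two bounds $\|Q\rho Q\|_1 = \tr(Q\rho) = 1 - \tr(P\rho)$. I would split
\[
\tr(P\rho) \;=\; \tr(P\hat\rho) + \tr\bigl(P(\rho-\hat\rho)\bigr) \;=\; \textstyle\sum_{i\leq k}\ulblam_i + \tr\bigl(P(\rho-\hat\rho)\bigr)
\]
and bound the pieces separately. For the spectrum piece, Cauchy--Schwarz with Theorem \ref{thm:EYD-error-main} gives $\E|\sum_{i\leq k}(\alpha_i - \ulblam_i)| \leq \sqrt{k}\cdot\sqrt{\E\|\ulblam-\alpha\|_2^2} \leq \sqrt{kd/n}$ (a sharper $O(k/\sqrt{n})$ bound via the top-$k$ spectrum-learning result advertised in the abstract could also be used here). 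For the tomography piece, $\tr(P(\rho-\hat\rho)) = \tr(P(\rho-\hat\rho)P)$ is the trace of a rank-$\leq k$ matrix, so $|\tr(P(\rho-\hat\rho))| \leq \sqrt{k}\|\rho-\hat\rho\|_F$ and has expectation at most $\sqrt{k(4d-3)/n} < 2\sqrt{kd/n}$. Hence $\E\|Q\rho Q\|_1 \leq \sum_{i>k}\alpha_i + 3\sqrt{kd/n}$.

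Combining the two steps,
\[
\E\|\hat\rho_k - \rho\|_1 \;\leq\; \textstyle\sum_{i>k}\alpha_i + (3 + 2\sqrt{2})\sqrt{kd/n} \;<\; \sum_{i>k}\alpha_i + 6\sqrt{kd/n}.
\]

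The main obstacle is engineering the right decomposition in step one: a naive triangle inequality like $\|\hat\rho_k - \rho\|_1 \leq \|\hat\rho_k - \hat\rho\|_1 + \|\hat\rho - \rho\|_1$ forces one to use the trace-norm tomography bound $\|\hat\rho-\rho\|_1 = O(d/\sqrt{n})$, which is too weak by a factor of $\sqrt{d/k}$. The decomposition above instead confines all of the ``noisy'' $\rho - \hat\rho$ into a rank-$2k$ piece (where the rank bound converts the sharp Frobenius tomography estimate into trace norm with only a $\sqrt{k}$ loss) while isolating the deterministic PSD remainder $Q\rho Q$, whose trace norm reduces to the trace computation $1 - \tr(P\rho)$ and is controlled by spectrum estimation.
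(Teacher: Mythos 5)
Your proof is correct, and although it begins with the same decomposition as the paper — your $\hat\rho_k - \rho = E_1 + Q\rho Q$ is exactly the paper's $\diag^{(k)}(\ulblam) - \bU^\dagger\rho\bU = \bigl(\diag^{(k)}(\ulblam) - \bGamma\bigr) - \bR$, conjugated by $\bU$, and your Step One is the paper's bound on $\|\diag^{(k)}(\ulblam)-\bGamma\|_1$ verbatim — your Step Two takes a genuinely different and appreciably simpler route. The paper bounds $\E\tr(\bR) = 1 - \E\tr(\bGamma)$ by a second pass through the Keyl machinery: Corollary~\ref{cor:diagonal} to lower-bound the diagonal expectations, the $r \geq 2 - 1/r$ trick to split into two Schur-polynomial sums, Proposition~\ref{prop:random-walk-majorization} for one, and finally Lemma~\ref{lem:k-root-n} for the other — the last of which is the point where the entire RSK coupling theorem (Theorem~\ref{thm:coupling-main}, Section~\ref{sec:coupling}) enters. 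You instead observe that $\tr(Q\rho Q) = 1 - \tr(P\rho) = 1 - \sum_{i\le k}\ulblam_i - \tr\bigl(P(\rho-\hat\rho)\bigr)$ and bound the two error terms by Cauchy--Schwarz against the already-established Theorems~\ref{thm:EYD-error-main} and~\ref{thm:tomography-error-main}; this is a ``black-box'' argument that never touches the combinatorics. What the paper's harder route buys is a bound on $\E\tr(\bR)$ of order $k/\sqrt{n} + kd/n$, i.e.\ with the leading term independent of $d$; your route gives only $O(\sqrt{kd/n})$ for this piece. But since the rank-$2k$ block $E_1$ already costs $\Theta(\sqrt{kd/n})$, the weaker bound does not change the order of the final result, and indeed your constant $3 + 2\sqrt{2} \approx 5.83$ is slightly smaller than the paper's~$6$. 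So as a standalone proof of Theorem~\ref{thm:pca-error-main} your argument is a clean simplification that decouples it from the coupling theorem; the paper's extra effort is really in service of the dimension-free Theorem~\ref{thm:top-k-eigs-main}, which is a separate result and which your argument does not reproduce.
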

\noindent
As the best rank-$k$ approximator to~$\rho$ has trace-distance error $\alpha_{k+1} + \ldots + \alpha_d$, we may immediately conclude:
\begin{corollary}                                       \label{cor:pca-main}
    Using $n = O(kd/\eps^2)$ copies of an unknown mixed state $\rho \in \C^{d \times d}$, one may find a rank-$k$ mixed state $\hat{\rho}$ such that the trace distance of $\hat{\rho}$ from $\rho$ is at most $\eps$ more than that of the optimal rank-$k$ approximator.
\end{corollary}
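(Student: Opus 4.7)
The corollary is an essentially immediate consequence of Theorem~\ref{thm:pca-error-main}, so my plan is correspondingly short. Write $\mathrm{opt} := \alpha_{k+1} + \cdots + \alpha_d$ for the trace-distance error of the optimal rank-$k$ approximator to~$\rho$. I will set $n = C k d/\eps^2$ for a constant $C$ large enough that $6\sqrt{kd/n} \leq \eps/2$ (so $C = 144$ suffices); the theorem then yields
\[
    \E\|\bU \diag^{(k)}(\ulblam) \bU^\dagger - \rho\|_1 \leq \mathrm{opt} + \eps/2,
\]
and the algorithm outputs $\hat{\rho} := \bU \diag^{(k)}(\ulblam) \bU^\dagger$, which is PSD of rank at most~$k$.

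To upgrade this expectation bound to a high-probability statement, I will use the fact that $\mathrm{opt}$ is the minimum of $\|\cdot - \rho\|_1$ over all rank-$\leq k$ matrices, so $\|\hat{\rho} - \rho\|_1 - \mathrm{opt}$ is a nonnegative random variable with mean at most $\eps/2$. Markov's inequality then delivers excess error at most $\eps$ with probability at least $1/2$. For confidence $1-\delta$, I will repeat the procedure $O(\log(1/\delta))$ times independently and return the estimator whose median trace-distance to the others is smallest (a standard median-of-estimators trick), which multiplies the copy complexity by $\log(1/\delta)$, consistent with the parenthetical remark after Corollary~\ref{cor:tomography-main}.

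The one subtlety worth flagging is that $\hat{\rho}$ as defined has trace $\ulblam_1 + \cdots + \ulblam_k \leq 1$, generally strictly less than~$1$, so strictly speaking it is a \emph{subnormalized} density operator. Under the usual PCA convention, where a rank-$k$ approximator is any rank-$\leq k$ PSD matrix, no further work is needed. If a genuine unit-trace output is insisted upon, one can rescale to $\hat{\rho}/\tr(\hat{\rho})$ or else dump the missing trace onto the top eigenvector of $\hat{\rho}$; either way, since $1 - \tr(\hat{\rho}) = |\tr(\rho) - \tr(\hat{\rho})| \leq \|\hat{\rho} - \rho\|_1$, the triangle inequality shows the renormalized error at most doubles, a factor absorbed into the constant~$C$. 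I do not anticipate any deeper obstacle: all the substance already lives in Theorem~\ref{thm:pca-error-main}, and the only ``work'' left in this corollary is the bookkeeping above.
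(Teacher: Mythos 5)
Your core argument is exactly the paper's: Theorem~\ref{thm:pca-error-main} combined with the observation that the optimal rank-$k$ approximator has $\|\cdot - \rho\|_1$-error $\alpha_{k+1} + \cdots + \alpha_d$, followed by Markov applied to the nonnegative excess $\|\hat{\rho} - \rho\|_1 - \mathrm{opt}$. The constant accounting is fine, and your point about $\tr(\hat{\rho}) \leq 1$ (with the renormalization fix) is a legitimate detail the paper leaves implicit.

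There is, however, a gap in the amplification step. The median-of-estimators trick works for spectrum estimation and full tomography because all good estimates lie within $\eps$ of a single ground truth, hence within $2\eps$ of each other. For PCA this fails: two estimates can each satisfy $\|\hat{\rho}^{(i)} - \rho\|_1 \leq \mathrm{opt} + \eps$ yet be as far as $2\,\mathrm{opt} + 2\eps$ apart in trace norm, since the set of near-optimal rank-$k$ approximators need not be small when $\mathrm{opt}$ is large. Running your selection rule then only guarantees the output is within roughly $3\,\mathrm{opt} + 3\eps$ of $\rho$, so the additive $\eps$-guarantee degrades to a multiplicative loss in $\mathrm{opt}$. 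Note that the paper's $\log(1/\delta)$ parenthetical is attached to Corollaries~\ref{cor:EYD-main} and~\ref{cor:tomography-main}, not to Corollary~\ref{cor:pca-main}; the PCA corollary as proved is a constant-success-probability statement, and a $1-\delta$ version with additive slack would require a genuinely different selection argument rather than the standard median trick.
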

Since $\alpha_{k+1} = \ldots = \alpha_d = 0$ when $\rho$ has rank~$k$, Corollary~\ref{cor:pca-main} strictly generalizes the trace-distance tomography result from Corollary~\ref{cor:tomography-main}.  We also remark that one could consider performing Frobenius-norm PCA on~$\rho$, but it turns out that this is unlikely to give any improvement in copy complexity over full tomography; see Section~\ref{sec:pca} for details.

As a key component of our PCA result, we investigate the problem of estimating just the largest~$k$ eigenvalues, $\alpha_1, \dots, \alpha_k$, of $\rho$.  The goal here is to use a number of copies depending only on~$k$ and not on $d$ or $\rank(\rho)$.  We show that the standard EYD algorithm achieves this:
\begin{theorem} \label{thm:top-k-eigs-main}
    $\displaystyle \E_{\blambda \sim \SW{n}{\alpha}} \dtvk{k}{\underline{\blambda}}{\alpha} \leq \frac{1.92\,k + .5}{\sqrt{n}}$, where $\dtvk{k}{\beta}{\alpha}$ denotes $\frac12\sum_{i=1}^k |\beta_i - \alpha_i|$.
\end{theorem}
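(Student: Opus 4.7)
My plan is to show $\E|\ulblam_i-\alpha_i|=O(1/\sqrt n)$ for each $i\leq k$ with a constant that does not depend on the ambient dimension $d$; summing over $i=1,\ldots,k$ and inserting the factor $\tfrac12$ in the definition of $\dtvk{k}{\cdot}{\cdot}$ then yields the $(1.92k+0.5)/\sqrt n$ bound. The immediate obstacle is that Theorem~\ref{thm:EYD-error-main} combined with Cauchy--Schwarz only delivers $\E|\ulblam_i-\alpha_i|\leq\sqrt{d/n}$; the $\sqrt d$ factor must be eliminated.

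The strategy is a coupling-based reduction to a low-dimensional instance of Theorem~\ref{thm:EYD-error-main}. Realize $\blambda\sim\SW{n}{\alpha}$ via the RSK word model: draw an iid word $\bw=\bw_1\cdots\bw_n\in[d]^n$ with $\bw_j\sim\alpha$ and set $\blambda=\sh(\rsk(\bw))$. For a suitable lower-dimensional auxiliary spectrum $\alpha^\star$ on at most $O(k)$ letters, let $\bw^\star$ be the word obtained from $\bw$ by an appropriate letter-merging map and set $\blambda^\star=\sh(\rsk(\bw^\star))\sim\SW{n}{\alpha^\star}$. Greene's theorem immediately gives the majorization $\sum_{j\leq m}\blambda_j\leq\sum_{j\leq m}\blambda^\star_j$ for all $m$, while Theorem~\ref{thm:EYD-error-main} applied to $\alpha^\star$ together with Cauchy--Schwarz gives $\E\sum_{i=1}^{k+1}|\ulblam^\star_i-\alpha^\star_i|=O(k/\sqrt n)$.

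The main work of the proof is to establish a new, per-row RSK coupling inequality---the ``new coupling theorem concerning the Robinson--Schensted--Knuth algorithm'' advertised in the abstract---that strengthens Greene's cumulative majorization into a per-coordinate comparison of the form $\E\sum_{i=1}^{k}|\blambda_i-\blambda^\star_i|=O(\sqrt n)$. Once this is available, a triangle inequality combines the RSK-coupling bound with the EYD bound on $\alpha^\star$ and yields $\E\sum_{i=1}^k|\ulblam_i-\alpha_i|=O(k/\sqrt n)$, as required.

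The main obstacle is precisely this coupling. Greene's theorem alone is insufficient: for the uniform spectrum $\alpha=(1/d,\ldots,1/d)$, a naive tail-merge produces an auxiliary spectrum whose top rows end up $\Theta(1)$ away from those of $\alpha$ after the relevant sorting, so the per-row bound cannot be deduced from cumulative majorization. The authors' contribution must therefore be a genuinely combinatorial argument specific to RSK that controls the effect of letter-merging on individual rows, not merely on cumulative sums; I expect this to occupy the bulk of the proof, with the remaining EYD application and triangle-inequality bookkeeping (including the careful tracking of the $1.92$ and $0.5$ constants) being routine.
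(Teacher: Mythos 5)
Your proposal differs from the paper at the critical step, and the difference is a genuine gap rather than a stylistic one. The centerpiece you want to prove --- a per-row RSK coupling giving $\E\sum_{i\le k}|\blambda_i - \blambda^\star_i| = O(\sqrt n)$ when $\alpha^\star$ is an order-preserving merge of $\alpha$ onto $O(k)$ letters --- is false, not merely hard. Any merge onto $m=O(k)$ letters forces some merged class to carry probability at least $1/m$, so $\E\blambda^\star_1 \ge n\max_j\alpha^\star_j \ge n/m = \Omega(n/k)$. But if $\alpha$ is uniform on a large alphabet $[d]$, Theorem~\ref{thm:k-root-n-unif} gives $\E\blambda_1 \le n/d + 2\sqrt n$. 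Since Greene's theorem forces $\blambda^\star_1 \ge \blambda_1$ pointwise, we get $\E|\blambda_1-\blambda^\star_1| = \Omega(n/k)$ for $d\gg k$ and $n\gg k^2$, far exceeding $O(\sqrt n)$. You flag the naive tail-merge as problematic and hope a cleverer RSK-specific argument will rescue the per-row comparison, but the obstruction applies to \emph{every} $O(k)$-letter merge, because it comes from the mass, not the combinatorics; no lemma about RSK can repair it. (A secondary issue: the merged class's weight typically exceeds $\alpha_k$, so after sorting $\alpha^\star$ no longer agrees with $\alpha$ on the first $k$ coordinates, and the deterministic triangle term is not small either.)

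The paper routes around this by replacing merging with \emph{deletion}. It forms $\bw'$ by deleting all letters larger than $k$ from $\bw$; by the restriction property of RSK, the insertion tableau of $\bw'$ sits inside that of $\bw$, hence $\blambda'_i \le \blambda_i$ for every $i$. The first triangle-inequality term thus becomes $\E\sum_{i\le k}(\blambda_i - \blambda'_i)$ with every summand nonnegative, and since $\E\sum_{i\le k}\blambda'_i = \E\bm = n\sum_{i\le k}\alpha_i$ exactly (with $\bm$ the number of retained letters), it suffices to prove a \emph{cumulative} upper bound $\E\sum_{i\le k}\blambda_i \le n\sum_{i\le k}\alpha_i + O(k\sqrt n)$, which is Lemma~\ref{lem:k-root-n}. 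That lemma, not Theorem~\ref{thm:top-k-eigs-main} directly, is where the new coupling result Theorem~\ref{thm:coupling-main} enters, and that coupling delivers $\bmu\unrhd\blambda$ --- cumulative majorization in exactly the Greene sense, with no per-row strengthening. The other two triangle terms (the $k$-dimensional EYD fluctuation via Theorem~\ref{thm:EYD-error-main}, and the binomial fluctuation of $\bm$) you anticipated correctly; the missing idea is deletion-plus-cumulative-bound in place of low-dimensional-merge-plus-per-row-coupling.
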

From this we immediately get the following strict generalization of (the total variation distance result in) Corollary~\ref{cor:EYD-main}:
\begin{corollary}                                       \label{cor:truncated-EYD-main}
    The largest $k$ eigenvalues of an unknown mixed state $\rho \in \C^{d \times d}$ can be estimated to error~$\eps$ in in total variation distance using $n = O(k^2 / \eps^2)$ copies.
\end{corollary}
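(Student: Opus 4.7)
The plan is to derive Corollary~\ref{cor:truncated-EYD-main} as an immediate consequence of Theorem~\ref{thm:top-k-eigs-main}. The estimator suggested by the theorem is transparent: perform the EYD measurement on $\rho^{\otimes n}$ to sample $\blambda \sim \SW{n}{\alpha}$, and output the top-$k$ entries of $\ulblam = (\lambda_1/n, \dots, \lambda_d/n)$ as our estimate for $(\alpha_1, \dots, \alpha_k)$. No additional quantum measurement or post-processing is needed beyond truncating to the $k$ largest parts of the sampled partition.

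The quantitative step is to choose $n = C k^2/\eps^2$ with $C$ large enough that $(1.92\,k + .5)/\sqrt{n} \leq \eps/2$. Theorem~\ref{thm:top-k-eigs-main} then gives
\[
    \E_{\blambda \sim \SW{n}{\alpha}} \dtvk{k}{\ulblam}{\alpha} \;\leq\; \frac{\eps}{2},
\]
and Markov's inequality converts this expectation bound into a constant-probability bound: $\Pr\bigl[\dtvk{k}{\ulblam}{\alpha} > \eps\bigr] \leq 1/2$. This already establishes the corollary at constant confidence. To match the ``confidence $1-\delta$ by paying a $\log(1/\delta)$ factor'' remark accompanying the earlier corollaries, one runs the procedure $T = O(\log(1/\delta))$ times independently to obtain estimates $\ulblam^{(1)}, \dots, \ulblam^{(T)}$ and applies a standard majority-of-estimates boosting step (output any $\ulblam^{(t)}$ that is within TV distance $2\eps$ of at least half of the other $\ulblam^{(t')}$). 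A Chernoff bound on the number of ``good'' estimates together with the triangle inequality shows that the output lies within TV distance $3\eps$ of $(\alpha_1,\dots,\alpha_k)$ with probability at least $1-\delta$; rescaling $\eps$ by a constant absorbs this blow-up.

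There is no real obstacle: the corollary is a routine deduction once Theorem~\ref{thm:top-k-eigs-main} is in hand. All of the substantive combinatorial and representation-theoretic content lives in the theorem itself, whose proof (as the abstract indicates) rests on a new coupling theorem for the RSK correspondence and a refined analysis of the Schur--Weyl distribution restricted to its top $k$ rows.
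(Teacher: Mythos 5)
Your proposal is correct and follows exactly the paper's route: the paper explicitly states that Corollary~\ref{cor:truncated-EYD-main} follows from Theorem~\ref{thm:top-k-eigs-main} ``in the same way as Corollary~\ref{cor:EYD-main},'' namely Markov's inequality to get a constant-probability guarantee followed by the standard median/majority amplification. One small nit: for the Chernoff-based majority amplification you want the per-trial failure probability bounded strictly below $1/2$ (the paper aims for $1/4$ by targeting $\eps/4$ rather than $\eps/2$ in expectation), but this is just a constant adjustment that is absorbed into the $O(\cdot)$.
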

The fact that this result has no dependence on the ambient dimension~$d$ or the rank of~$\rho$ may make it particularly interesting in practice.

\subsection{A coupling result concerning the RSK algorithm}
For our proof of Theorem~\ref{thm:top-k-eigs-main}, we will need to establish a new combinatorial result concerning the Robinson--Schensted--Knuth (RSK) algorithm applied to random words. We assume here the reader is familiar with the RSK correspondence; see Section~\ref{sec:prelims} for a few basics and, e.g.,~\cite{Ful97} for a comprehensive treatment.
\begin{notation}
    Let $\alpha$ be a probability distribution on~$[d] = \{1, 2, \dots, d\}$, and let $\bw \in [d]^n$ be a random word formed by drawing each letter $\bw_i$ independently according to~$\alpha$.  Let $\blambda$ be the shape of the Young tableaus obtained by applying the RSK correspondence to~$\bw$.  We write $\SW{n}{\alpha}$ for the resulting probability distribution on~$\blambda$.
\end{notation}
\begin{notation}
    For $x, y \in \R^d$, we say $x$ \emph{majorizes} $y$, denoted $x \succ y$, if  $\sum_{i=1}^k x_{[i]} \geq \sum_{i=1}^k y_{[i]}$ for all $k \in [d] = \{1,2, \dots, d\}$, with equality for $k = d$.  Here the notation $x_{[i]}$ means the $i$th largest value among $x_1, \dots, x_d$.  We also use the traditional notation $\lambda \unrhd \mu$ instead when $\lambda$ and $\mu$ are partitions of~$n$ (Young diagrams).
\end{notation}

In Section~\ref{sec:coupling} we prove the following theorem.   The proof is entirely combinatorial, and can be read independently of the quantum content in the rest of the paper.
\begin{theorem}                                     \label{thm:coupling-main}
    Let $\alpha$, $\beta$ be probability distributions on $[d]$ with $\beta \succ \alpha$. Then for any $n \in \N$ there is a coupling $(\blambda, \bmu)$ of $\SWdist{n}{\alpha}$ and $\SWdist{n}{\beta}$ such that $\bmu \unrhd \blambda$ always.
\end{theorem}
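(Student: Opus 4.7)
The plan is to reduce to a single Robin Hood transfer and then invoke Strassen's theorem to convert the coupling problem into a stochastic-dominance statement that can be attacked by a Schur-convexity argument.

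Since the Schur polynomials are symmetric, $\SW{n}{\alpha}$ depends on $\alpha$ only as an unordered multiset, so I may permute coordinates freely. Because $\beta \succ \alpha$ is reached from $\alpha$ by a finite chain of Robin Hood transfers and couplings compose, it suffices to handle a single such transfer, WLOG between coordinates $d-1$ and $d$: $\alpha_{d-1} = \beta_{d-1} - \delta$ and $\alpha_d = \beta_d + \delta$ for some $\delta > 0$ (with $\beta_{d-1} \geq \beta_d$). By Strassen's theorem for partial orders on finite sets, the coupling $(\blambda, \bmu)$ with $\bmu \unrhd \blambda$ exists iff
\[F_U(\beta) \;:=\; \sum_{\mu \in U} \dim(\Specht_\mu)\, s_\mu(\beta) \;\geq\; F_U(\alpha)\]
for every up-set $U$ in the dominance order on partitions of $n$; equivalently, each $F_U$ must be Schur-convex on the probability simplex. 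Since only the variables $p := \alpha_{d-1}$ and $q := \alpha_d$ change in the Robin Hood step, it further suffices to prove two-variable Schur-convexity in $(p, q)$ with the other coordinates fixed.

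Using the iterated branching rule
\[s_\mu(\alpha_1, \dots, \alpha_{d-2}, p, q) \;=\; \sum_{\sigma \subseteq \mu}\, s_\sigma(\alpha_1, \dots, \alpha_{d-2})\, s_{\mu/\sigma}(p, q),\]
the function $F_U$ becomes $\sum_\sigma s_\sigma(\alpha_1, \dots, \alpha_{d-2}) \cdot G_\sigma(p, q)$ with nonnegative coefficients $s_\sigma$ and $G_\sigma(p, q) := \sum_{\mu \in U,\; \mu \supseteq \sigma} \dim(\Specht_\mu)\, s_{\mu/\sigma}(p, q)$, so it suffices to show each $G_\sigma$ is Schur-convex in $(p, q)$. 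By symmetry $G_\sigma(p, q)$ is a polynomial in $p + q$ and $pq$; with $p + q$ fixed by the transfer, this reduces to the monotonicity statement that $G_\sigma$ is a nonincreasing function of $pq$.

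The hard part will be establishing this monotonicity, a combinatorial statement mixing $\dim(\Specht_\mu)$ factors with two-variable skew-Schur expansions. A natural strategy is a bijective or mass-transport argument exploiting the up-set closure of $U$, showing that as $pq$ grows each contribution to $G_\sigma$ at interior $(p, q)$ is dominated by an extreme contribution at $pq = 0$, with $U$-closure used to redirect weight from shapes involving mixed $\{1, 2\}$-content skew tableaux to shapes with uniform content. A backup route, bypassing Schur-convexity entirely, is an inductive coupling on $n$ via the RSK-shape Markov chain on Young's lattice---whose transitions $p_\alpha(\mu, \mu + e_r) = s_{\mu+e_r}(\alpha)/s_\mu(\alpha)$ follow from a standard RSK bijection---built step-by-step using the elementary fact that $\mu \unrhd \lambda$ together with $a \leq b$ imply $\mu + e_a \unrhd \lambda + e_b$, but here one must exploit the slack $\sum_{i \leq k}(\mu_i - \lambda_i) > 0$ whenever the naive ``$a \leq b$'' sufficient condition fails (as, e.g., at the pair $((2), (1,1))$ with $d = 2$).
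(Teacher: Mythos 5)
Your reduction to a single Robin Hood transfer via Muirhead's lemma is exactly the paper's first step, and the Strassen/up-set reframing together with the branching-rule decomposition is sound: after these reductions, the theorem is indeed equivalent to the two-variable Schur-convexity of each $G_\sigma(p,q)$ over the simplex slice $p+q$ fixed. But that equivalence is where the content of the theorem begins, not where it ends, and the proposal leaves the key inequality unproved. The ``natural strategy'' (a mass-transport argument exploiting $U$-closure) and the ``backup route'' (an inductive, step-by-step coupling of the Schur--Weyl growth processes) are both only sketches; the backup route in particular runs into precisely the obstacle you flag, namely that the local rule ``$\mu \unrhd \lambda$ and $a\leq b$ imply $\mu+e_a \unrhd \lambda+e_b$'' is false (your $((2),(1,1))$ example), and no mechanism for ``exploiting the slack'' is supplied. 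As written, the argument reformulates the statement as a Schur-convexity inequality and stops.

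The paper does genuinely different work at exactly this point. For the two-coordinate case it builds the coupling explicitly: it draws a shared background string $\bz$ on the letters outside $\{1,2\}$, interleaves it with a coupled pair $(\bw,\bx)$ of $p$-biased and $q$-biased binary strings satisfying the stronger \emph{substring-LIS-dominance} relation $\bx \ssLIS \bw$ of Theorem~\ref{thm:coupling-ssLIS}, and then uses Greene's theorem to transfer per-window LIS domination to dominance $\bmu \unrhd \blambda$ of the full RSK shapes (the case analysis over how many of the $k$ increasing subsequences touch letters in $\{1,2\}$ is where the strength of the \emph{substring} version is actually needed). The substring-LIS coupling is itself the heart of the matter, proved via an explicit Dyck-path bijection (Theorem~\ref{thm:dyck-bij}) tracking how \beheading and \curtailment act on $2$-row recording tableaux. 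If you wished to complete your Strassen route you would need to prove, for every dominance up-set $U$ and every $\sigma$, the monotonicity of $\sum_{\mu \in U,\;\mu\supseteq\sigma}\symdim{\mu}\,s_{\mu/\sigma}(p,q)$ in $pq$; this is essentially the $d=2$ stochastic-dominance statement the paper establishes constructively, and it does not look easier than the bijection it would replace.
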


\subsection{Independent and simultaneous work.}
Independently and simultaneously of our work, Haah et~al.~\cite{HHJ+15} have given a slightly different measurement that also achieves Corollary~\ref{cor:tomography-main}, up to a log factor.  More precisely, their measurement achieves error~$\eps$ in infidelity with $n = O(rd/\eps)\cdot \log(d/\eps)$ copies, or error~$\eps$ in trace distance with $n = O(rd/\eps^2)\cdot \log(d/\eps)$ copies.  They also give a lower bound of $n \geq \Omega(rd/\eps^2)/\log(d/r\eps)$ for quantum tomography with trace distance error~$\eps$.  After seeing a draft of their work, we observed that their measurement can also be shown to achieve expected squared-Frobenius error $\frac{4d-3}{n}$, using the techniques in this paper; the brief details appear at~\cite{Wri15}.

\subsection{Acknowledgments.} We thank Jeongwan Haah and Aram Harrow (and by transitivity, Vlad Voroninski) for bringing~\cite{KRT14} to our attention. We also thank Aram Harrow for pointing us to~\cite{Key06}.  The second-named author would also like to thank  Akshay Krishnamurthy and Ashley Montanaro for helpful discussions.

\section{Preliminaries}\label{sec:prelims}

    We write $\lambda \vdash n$ to denote that $\lambda$ is a \emph{partition} of~$n$; i.e., $\lambda$ is a finite sequence of integers $\lambda_1 \geq \lambda_2 \geq \lambda_3 \geq \cdots$ summing to~$n$. We also say that the \emph{size} of $\lambda$ is $|\lambda| = n$. The \emph{length} (or \emph{height}) of $\lambda$, denoted $\ell(\lambda)$, is the largest~$d$ such that $\lambda_d \neq 0$.  We identify partitions that only differ by trailing zeroes.  A \emph{Young diagram} of \emph{shape}~$\lambda$ is a left-justified set of boxes arranged in rows, with $\lambda_i$ boxes in the $i$th row from the top.  We write $\mu \nearrow \lambda$ to denote that $\lambda$ can be formed from~$\mu$ by the addition of a single box to some row.  A \emph{standard Young tableau}~$T$ of \emph{shape}~$\lambda$ is a filling of the boxes of~$\lambda$ with $[n]$ such that the rows and columns are strictly increasing.  We write $\lambda = \sh(T)$.  Note that $T$ can also be identified with a chain $\emptyset =\lambda^{(0)} \nearrow \lambda^{(1)} \nearrow \cdots \nearrow \lambda^{(n)} = \lambda$, where $\lambda^{(t)}$ is the shape of the Young tableau formed from~$T$ by entries $1 \dd t$.  A \emph{semistandard Young tableau} of shape~$\lambda$ and alphabet~$\calA$ is a filling of the boxes with letters from~$\calA$ such that rows are increasing and columns are strictly increasing.  Here an \emph{alphabet} means a totally ordered set of ``letters'', usually $[d]$.

The quantum measurements we analyze involve the \emph{Schur--Weyl duality theorem}.  The symmetric group $\symm{n}$ acts on $(\C^d)^{\otimes n}$ by permuting factors, and the general linear group $\gl{d}$  acts on it diagonally; furthermore, these actions commute. Schur--Weyl duality states that as an $\symm{n} \times \gl{d}$ representation, we have the following unitary equivalence:
\[
    (\C^d)^{\otimes n} \cong \bigoplus_{\substack{\lambda \vdash n \\ \ell(\lambda) \leq d}} \Specht{\lambda} \otimes \Weyl{\lambda}{d}.
\]
Here we are using  the following notation:    The \emph{Specht modules} $\Specht{\lambda}$ are the irreducible representations spaces of $\symm{n}$, indexed by partitions $\lambda \vdash n$.  We will use the abbreviation $\symdim{\lambda}$ for $\dim(\Specht{\lambda})$; recall this equals the number of standard Young tableaus of shape~$\lambda$.  The \emph{Schur (Weyl) modules} $\Weyl{\lambda}{d}$ are the irreducible polynomial representation spaces of~$\gl{d}$, indexed by partitions (highest weights) $\lambda$ of length at most~$d$. (For more background see, e.g.,~\cite{Har05}.) We will write $\glrep{\lambda} \co \gl{d} \to \mathrm{End}(\Weyl{\lambda}{d})$ for the (unitary) representation itself; the domain of~$\glrep{\lambda}$ naturally extends to all of $\C^{d \times d}$ by continuity. We also write $\ket{T_\lambda}$ for the highest weight vector in~$\Weyl{\lambda}{d}$; it is characterized by the property that $\glrep{\lambda}(A) \ket{T_\lambda} = (\prod_{k=1}^d A_{kk}^{\lambda_i})\ket{T_\lambda}$ if $A = (A_{ij})$ is upper-triangular.

    The character of $\Weyl{\lambda}{d}$ is the \emph{Schur polynomial} $s_\lambda(x_1, \dots, x_d)$, a symmetric, degree-$|\lambda|$, homogeneous polynomial in $x = (x_1, \dots, x_d)$ defined by $s_\lambda(x) = a_{\lambda+\delta}(x)/a_\delta(x)$, where $\delta = (d-1, d-2, \dots, 1, 0)$ and $a_\mu(x) = \det(x_i^{\mu_j})$.  Alternatively, it may be defined as $\sum_{T} \prod_{i=1}^d x_i^{\#_iT}$, where $T$ ranges over all semistandard tableau of shape~$\lambda$ and alphabet~$[d]$, and $\#_i T$ denotes the number of occurrences of~$i$ in~$T$.  We have $\dim(\Weyl{\lambda}{d}) = s_\lambda(1, \dots, 1)$, the number of semistandard Young tableaus in the sum. We'll write $\sch_\lambda(x)$ for the \emph{normalized Schur polynomial} $s_\lambda(x_1, \dots, x_d)/s_\lambda(1, \dots, 1)$.  Finally, we recall the following two formulas, the first following from Stanley's hook-content formula and the Frame--Robinson--Thrall hook-length formula, the second being the Weyl dimension formula:
    \begin{equation}            \label{eqn:ssyt-formula}
        s_\lambda(1, \dots, 1) = 
        \frac{\symdim{\lambda}}{|\lambda|!} \prod_{(i,j) \in \lambda} (d+j-i) = \prod_{1 \leq i < j \leq d} \frac{(\lambda_i - \lambda_j) + (j-i)}{j-i} .
    \end{equation}

    Given a positive semidefinite matrix $\rho \in \C^d$, we typically write $\alpha \in \R^d$ for its \emph{sorted spectrum}; i.e., its eigenvalues $\alpha_1 \geq \alpha_2 \geq \cdots \geq \alpha_d \geq 0$.  When $\rho$ has trace~$1$ it is called a \emph{density matrix} (or \emph{mixed state}), and in this case $\alpha$ defines a (sorted) probability distribution on~$[d]$.  

    We will several times use the following elementary majorization inequality:
    \begin{equation} \label{eqn:majorization-ineq}
        \text{If } c, x, y \in \R^d \text{ are sorted (decreasing) and } x \succ y \text{ then } c \cdot x \geq c \cdot y.
    \end{equation}

    Recall~\cite{Ful97} that the Robinson--Schensted--Knuth correspondence is a certain bijection between strings $w \in \calA^n$ and pairs $(P,Q)$, where $P$ is a semistandard \emph{insertion tableau} filled by the multiset of letters in~$w$, and $Q$ is a standard \emph{recording tableau}, satisfying $\sh(Q) = \sh(P)$.  We write $\RSK(w) = (P,Q)$ and write $\shRSK(w)$ for the common shape of~$P$ and~$Q$, a partition of~$n$ of length at most~$|\calA|$.  One way to characterize $\lambda = \shRSK(w)$ is by \emph{Greene's Theorem}~\cite{Gre74}:  $\lambda_1 + \cdots + \lambda_k$ is the length of the longest disjoint union of~$k$ increasing subsequences in~$w$.  In particular, $\lambda_1 = \LIS(w)$, the length of the longest increasing (i.e., nondecreasing) subsequence in~$w$.  We remind the reader here of the distinction between a \emph{subsequence} of a string, in which the letters need not be consecutive, and a \emph{substring}, in which they are.  We use the notation $w[i \dd j]$ for the substring $(w_i, w_{i+1}, \dots, w_{j}) \in \calA^{j-i+1}$.

    Let $\alpha = (\alpha_1, \dots, \alpha_d)$ denote a probability distribution on alphabet~$[d]$, let $\alpha^{\otimes n}$ denote the associated product probability distribution on~$[d]^n$, and write $\alpha^{\otimes \infty}$ for the product probability distribution on infinite sequences.  We define the associated \emph{Schur--Weyl growth process} to be the (random) sequence
    \begin{equation}			\label{eqn:growth}
        \emptyset = \blambda^{(0)} \nearrow \blambda^{(1)} \nearrow \blambda^{(2)} \nearrow \blambda^{(3)} \nearrow \cdots
    \end{equation}
    where $\bw \sim \alpha^{\otimes \infty}$ and $\blambda^{(t)} = \shRSK(\bw[1 \dd t])$. Note that the marginal distribution on~$\blambda^{(n)}$ is what we call $\SW{n}{\alpha}$.  The Schur--Weyl growth process was studied in, e.g.,~\cite{OC03}, wherein it was noted that the RSK correspondence implies
    \begin{equation}                    \label{eqn:oconnell}
        \Pr[\blambda^{(t)} = \lambda^{(t)}\quad\forall t \leq n] = s_{\lambda^{(n)}}(\alpha)
    \end{equation}
    for any chain $\emptyset = \lambda^{(0)} \nearrow \cdots \nearrow \lambda^{(n)}$. (Together with the fact that $s_\lambda(\alpha)$ is homogeneous of degree~$|\lambda|$, this gives yet another alternate definition of the Schur polynomials.) One consequence of this is that for any $i \in [d]$ we have
    \begin{equation}    \label{eqn:pieri-interp}
        \Pr[\blambda^{(n+1)} = \lambda + e_i \mid \blambda^{(n)} = \lambda] = \frac{s_{\lambda+e_i}(\alpha)}{s_{\lambda}(\alpha)}.
    \end{equation}
    (This formula is correct even when $\lambda + e_i$ is not a valid partition of $n+1$; in this case $s_{\lambda + e_i} \equiv 0$ formally under the determinantal definition.) The above equation is also a probabilistic interpretation of the following special case of \emph{Pieri's rule}:
    \begin{equation}    \label{eqn:pieri}
        (x_1 + \cdots + x_d) s_\lambda(x_1, \dots, x_d) = \sum_{i=1}^d s_{\lambda+e_i}(x_1, \dots, x_d).
    \end{equation}
    We will need the following consequence of~\eqref{eqn:pieri-interp}:
    \begin{proposition}\label{prop:random-walk-majorization}
        Let $\lambda \vdash n$ and let $\alpha \in \R^d$ be a sorted probability distribution.  Then
    \begin{equation}        \label{eqn:rw-maj}
        \left(\frac{s_{\lambda+ e_1}(\alpha)}{s_{\lambda}(\alpha)}, \ldots, \frac{s_{\lambda+ e_d}(\alpha)}{s_{\lambda}(\alpha)}\right)
            \succ
        (\alpha_1, \ldots, \alpha_d).
    \end{equation}
    \end{proposition}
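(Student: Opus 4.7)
The plan is to prove the stronger statement that for every $k \in [d]$,
\[
\sum_{r=1}^{k} \frac{s_{\lambda+e_r}(\alpha)}{s_\lambda(\alpha)} \;\geq\; \alpha_1 + \cdots + \alpha_k. \qquad (\star)
\]
This implies the claimed majorization: the sum of the top $k$ entries of any vector is at least the sum of its first $k$, and Pieri's rule~\eqref{eqn:pieri} together with $\sum_i \alpha_i = 1$ forces both the ratio vector and $\alpha$ to sum to~$1$, matching the equality required at $k = d$. Via the probabilistic interpretation~\eqref{eqn:pieri-interp}, $(\star)$ is equivalent to the following assertion about the Schur--Weyl growth process: conditional on $\blambda^{(n)} = \lambda$, the new box of $\blambda^{(n+1)}$ is added to some row in $\{1,\dots,k\}$ with probability at least $\Pr[\bw_{n+1} \leq k]$. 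By Greene's theorem, the event ``new box in rows~$1,\dots,k$'' is precisely the event that the length of the longest disjoint union of $k$ increasing subsequences in $\bw[1\dd n]$ strictly grows upon appending $\bw_{n+1}$.

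I would attack $(\star)$ by a weighted bijective argument on the side of semistandard tableaux. Expanding $s_\lambda(\alpha) = \sum_T \prod_j \alpha_j^{\#_j T}$ and using the RSK bijection underlying Pieri's rule, $(\star)$ asserts that among pairs $(T, v)$ with $T \in \mathrm{SSYT}(\lambda, [d])$ and $v \in [d]$, the total $\alpha$-weight $\alpha_v \prod_j \alpha_j^{\#_j T}$ of pairs whose RSK row-insertion places the new box in a row of index at most $k$ dominates that of pairs with $v \leq k$. One could hope to establish this via a weight-preserving injection from the ``failing'' set $B = \{(T,v): v \leq k,\ I_T(v) > k\}$ into the ``surplus'' set $A = \{(T,v): v > k,\ I_T(v) \leq k\}$. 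A natural candidate is the bumping-truncation map: perform only the first~$k$ bumps of $T \leftarrow v$, yielding a same-shape tableau~$T''$ and a pushed-out letter $v_{k+1} > k$ whose joint $\alpha$-weight equals that of the original $(T, v)$.

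The main obstacle will be verifying that the image lands in~$A$, i.e., that $I_{T''}(v_{k+1}) \leq k$: small examples with $d \geq 4$ show this can fail for the naive truncation, so the injection must be refined, perhaps by iterating the truncation or combining it with a reverse bumping step that uses the sortedness of~$\alpha$. An alternative route I would try in parallel bypasses the bijection entirely: for any fixed word~$w$ with $\shRSK(w) = \lambda$, the map $v \mapsto I_w(v)$ is non-increasing (small letters produce long bumping chains, by the standard bumping analysis), so $\Pr_v[I_w(v) \leq k]$ equals a tail sum $\alpha_{T_k(w)} + \cdots + \alpha_d$ for some threshold $T_k(w)$; the inequality $(\star)$ then reduces to showing this tail sum, averaged over such~$w$ weighted by $\prod_i \alpha_{w_i}$, is at least $\alpha_1 + \cdots + \alpha_k$. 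This averaging step, which crucially exploits the sortedness of~$\alpha$ (per-word the inequality fails, as the example $w = (2,\dots,2)$ with $d = 2$ already shows), is where the heart of the proof lies.
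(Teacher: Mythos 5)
Your reduction of the proposition to the tail inequality $(\star)$ and its reformulation via the Schur--Weyl growth process (``conditional on $\blambda^{(n)} = \lambda$, the new box enters rows $1,\dots,k$ with probability at least $\alpha_1+\cdots+\alpha_k$'') is exactly the right starting point, and your observation that $v \mapsto I_T(v)$ is non-increasing in the inserted letter~$v$ is correct. However, both routes you then sketch are left genuinely incomplete, and by your own admission the incompleteness is located precisely ``where the heart of the proof lies'': the weight-preserving injection fails for the naive truncation and you do not produce a repaired version, while the averaging argument over words $w$ is stated as the thing that remains to be done, not something you establish. As you correctly note, the pointwise bound $\Pr_v[I_w(v)\leq k] = \alpha_{T_k(w)}+\cdots+\alpha_d$ can be \emph{smaller} than $\alpha_1+\cdots+\alpha_k$ (your $w=(2,\dots,2)$, $d=2$ example), so something nontrivial must happen in the averaging, and no mechanism for it is supplied. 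As written, this is a plan with an acknowledged hole, not a proof.

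The paper's argument sidesteps the averaging entirely by a small but decisive trick you haven't found: replace $\alpha$ by its \emph{reversal} $\beta$ (so $\beta_i = \alpha_{d-i+1}$). Schur polynomials are symmetric, so the ratio vector on the left of~\eqref{eqn:rw-maj} is unchanged and, via~\eqref{eqn:pieri-interp}, it is the transition vector of the $\beta$-growth process. Under $\beta$, the event ``next letter is $\geq d-k+1$'' has probability $\beta_{d-k+1}+\cdots+\beta_d = \alpha_1+\cdots+\alpha_k$, and \emph{any} such letter lands in one of the first $k$ rows under RSK (its bumping chain involves only letters from $\{d-k+1,\dots,d\}$, a set of size $k$, and is strictly increasing). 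This gives the desired bound \emph{per word}, with no averaging. In short: the reversal converts your ``tail sum $\alpha_{T_k(w)}+\cdots+\alpha_d$'' situation, where the inequality goes the wrong way pointwise, into the situation ``$\{d-k+1,\dots,d\} \subseteq \{v : I_w(v)\leq k\}$ always,'' where it goes the right way pointwise. Your monotonicity observation about $I_T(\cdot)$ is exactly the ingredient that makes this containment immediate; you had the right lemma but applied it to the wrong end of the alphabet.
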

    \begin{proof}
        Let $\beta$ be the reversal of $\alpha$ (i.e.\ $\beta_i = \alpha_{d-i+1}$) and let $(\blambda^{(t)})_{t \geq 0}$ be a Schur--Weyl growth process corresponding to~$\beta$.  By~\eqref{eqn:pieri-interp} and the fact that the Schur polynomials are symmetric, we conclude that the vector on the left of~\eqref{eqn:rw-maj} is $(p_1, \dots, p_d)$, where $p_i = \Pr[\blambda^{(n+1)} = \lambda+e_i \mid \blambda^{(n)} = \lambda]$. Now $p_1 + \cdots + p_k$ is the probability, conditioned on $\blambda^{(n)} = \lambda$, that the $(n+1)$th box in the process enters into one of the first~$k$ rows.  But this is indeed at least $\alpha_1 + \cdots + \alpha_k = \beta_{d} + \cdots + \beta_{d-k+1}$, because the latter represents the probability that the $(n+1)$th letter is $d-k+1$ or higher, and such a letter will always be inserted within the first~$k$ rows under RSK.
    \end{proof}
    A further consequence of~\eqref{eqn:oconnell} (perhaps first noted in~\cite{ITW01}) is that for $\lambda \vdash n$,
    \begin{equation}    \label{eqn:sw-probs}
        \Pr_{\blambda \sim \SW{n}{\alpha}}[\blambda = \lambda] = \symdim{\lambda}s_\lambda(\alpha).
    \end{equation}
    At the same time, as noted in~\cite{ARS88} (see also~\cite[Equation~(36)]{Aud06}) it follows from Schur--Weyl duality that if $\rho \in \C^{d \times d}$ is a density matrix with spectrum~$\alpha$ then
    \[
        \tr(\Pi_\lambda \rho^{\otimes n}) = \symdim{\lambda}s_\lambda(\alpha),
    \]
    where $\Pi_\lambda$ denotes the isotypic projection onto $\Specht{\lambda} \otimes \Weyl{\lambda}{d}$.  Thus we have the identity
    \begin{equation} \label{eqn:eyd-probability}
        \tr(\Pi_\lambda \rho^{\otimes n}) = \Pr_{\blambda \sim \SW{n}{\alpha}}[\blambda = \lambda].
    \end{equation}

\section{Spectrum estimation}

Several groups of researchers suggested the following method for estimating the sorted spectrum $\alpha$ of a quantum mixed state $\rho \in \C^{d \times d}$: measure $\rho^{\otimes n}$ according to the isotypic projectors $\{\Pi_\lambda\}_{\lambda \vdash n}$; and, on obtaining $\blambda$, output the estimate $\hat{\alpha} = \ulblam = (\blambda_1/n, \dots, \blambda_d/n)$.  The measurement is sometimes called ``weak Schur sampling''~\cite{CHW07} and we refer to the overall procedure as the ``Empirical Young Diagram (EYD)'' algorithm.  We remark that the algorithm's behavior depends only on the rank~$r$ of $\rho$; it is indifferent to the ambient dimension~$d$.  So while we will analyze the EYD algorithm in terms of $d$, we will present the results in terms of~$r$.

In~\cite{HM02,CM06} it is shown that  $n = O(r^2 \log(r/\eps)/\eps^2)$ suffices for EYD to obtain $\dkl{\ulblam}{\alpha} \leq 2\eps^2$ and hence $\dtv{\ulblam}{\alpha} \leq \eps$ with high probability.  However we give a different analysis.  By equation~\eqref{eqn:eyd-probability}, the expected $\ell_2^2$-error of the EYD algorithm is precisely $\E_{\blambda \sim \SW{n}{\alpha}} \|\ulblam - \alpha\|_2^2$.  Theorem~\ref{thm:EYD-error-main}, which we prove in this section, bounds this quantity by $\frac{r}{n}$.  Thus
\[
    \E \dtv{\ulblam}{\alpha} = \tfrac12 \E \|\ulblam - \alpha\|_1 \leq \tfrac12 \sqrt{r} \E \|\ulblam - \alpha\|_2 \leq \tfrac12 \sqrt{r} \sqrt{\E \|\ulblam - \alpha\|_2^2} \leq \frac{r}{2\sqrt{n}},
\]
which is bounded by $\eps/4$, say, if $n = 4r^2/\eps^2$.  Thus in this case $\Pr[\dtv{\ulblam}{\alpha} > \eps] < 1/4$.  By a standard amplification (repeating the EYD algorithm $O(\log 1/\delta)$ times and outputting the estimate which is within $2\eps$ total variation distance of the most other estimates), we obtain Corollary~\ref{cor:EYD-main}. 
\\

We give two lemmas, and then the proof of Theorem~\ref{thm:EYD-error-main}.
\begin{lemma}\label{lem:second-moment}
    Let $\alpha \in \R^d$ be a probability distribution.  Then
\begin{equation*}
\E_{\blambda \sim \SWdist{n}{\alpha}} \sum_{i=1}^d \blambda_i^2
\leq  \sum_{i=1}^d (n\alpha_i)^2 + dn.
\end{equation*}
\end{lemma}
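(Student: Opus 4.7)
The plan is to reduce bounding $\E\sum_i \blambda_i^2$ to bounding the ``column-square sum'' $\E\sum_j (\blambda'_j)^2$, where $\blambda'$ is the conjugate partition. The latter admits a trivial deterministic upper bound: since $\blambda$ has at most $d$ rows (alphabet $[d]$), every column height satisfies $\blambda'_j \leq d$, so $\sum_j (\blambda'_j)^2 \leq d \sum_j \blambda'_j = dn$. Concretely, I would prove the identity
\[
    \E_{\blambda \sim \SWdist{n}{\alpha}}\Bigl[\sum_i \blambda_i^2 - \sum_j (\blambda'_j)^2\Bigr] \;=\; n(n-1)\sum_i \alpha_i^2,
\]
and then conclude immediately $\E\sum_i \blambda_i^2 \leq dn + n(n-1)\sum_i \alpha_i^2 \leq dn + \sum_i (n\alpha_i)^2$.

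To establish the identity I would compute $\tr(T \rho^{\otimes n})$ in two ways, where $T = \sum_{1 \leq s < t \leq n} (s\,t) \in \C[\symm{n}]$ is the ``sum of transpositions,'' acting on $(\C^d)^{\otimes n}$ as the sum of factor-swaps $\mathrm{SWAP}_{s,t}$. Directly, $\tr(\mathrm{SWAP}_{s,t}\,\rho^{\otimes n}) = \tr(\rho^2) = \sum_i \alpha_i^2$ for every pair $s<t$, yielding $\tr(T\rho^{\otimes n}) = \binom{n}{2}\sum_i \alpha_i^2$. Via Schur--Weyl: $T$ is a conjugacy-class sum and hence central in $\C[\symm{n}]$; by the classical content-sum formula it acts on $\Specht{\lambda}$ by the scalar $\sum_{(i,j)\in\lambda}(j-i)$, which (by an elementary double-count of $j$ over rows and $i$ over columns, using $\sum_{(i,j)\in\lambda} j = \tfrac12(\sum_i \lambda_i^2 + n)$ and symmetrically for rows) equals $\tfrac12(\sum_i \lambda_i^2 - \sum_j (\lambda'_j)^2)$. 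Combining with~\eqref{eqn:eyd-probability} gives $\tr(T\rho^{\otimes n}) = \tfrac12 \E[\sum_i \blambda_i^2 - \sum_j (\blambda'_j)^2]$. Equating the two expressions for $\tr(T\rho^{\otimes n})$ proves the identity.

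The main obstacle is invoking the content-sum formula for the central character of $T$ on $\Specht{\lambda}$, namely $\binom{n}{2}\chi^\lambda((1\,2))/\symdim{\lambda} = \sum_{(i,j)\in\lambda}(j-i)$. This is a classical fact in symmetric-group representation theory (provable via Jucys--Murphy elements, Young's seminormal form, or Frobenius' character formula combined with the identity $p_2 p_1^{n-2} = \sum_\lambda \chi^\lambda((1\,2)) s_\lambda$), which I would simply cite. Everything else --- the SWAP trace identity, the double-count, and the concluding arithmetic --- is routine.
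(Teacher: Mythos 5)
Your proof is correct, and it is structurally very close to the paper's, but with two genuine variations worth spelling out. The paper decomposes $\sum_i \blambda_i^2 = p^*_2(\blambda) + \sum_i (2i-1)\blambda_i$, where $p^*_2(\lambda) = \sum_i\bigl((\lambda_i-i+\tfrac12)^2 - (-i+\tfrac12)^2\bigr)$, and then cites \cite{OW15} for the exact formula $\E[p^*_2(\blambda)] = n(n-1)\sum_i\alpha_i^2$. Expanding $p_2^*$ shows $p_2^*(\lambda) = \sum_i\lambda_i^2 - \sum_i(2i-1)\lambda_i = \sum_i\lambda_i^2 - \sum_j(\lambda'_j)^2$, so your identity $\E\bigl[\sum_i\blambda_i^2 - \sum_j(\blambda'_j)^2\bigr] = n(n-1)\sum_i\alpha_i^2$ is \emph{exactly} the same fact. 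Where you differ is twofold. First, you re-derive the key expectation from scratch via $\tr(T\rho^{\otimes n})$, Schur--Weyl duality, equation~\eqref{eqn:eyd-probability}, and the classical content-sum formula for the central element $\sum_{s<t}(s\,t)$ --- this makes the lemma self-contained, at the cost of importing the content-sum formula (a standard fact, as you note, but one the paper sidesteps by citing~\cite{OW15}). Second, for the ``remainder'' term the paper bounds $\E\sum_i(2i-1)\blambda_i \leq \sum_i(2i-1)(n/d) = dn$ via the majorization inequality~\eqref{eqn:majorization-ineq} and $\blambda \succ (n/d,\dots,n/d)$; you instead use the pointwise, deterministic bound $\sum_j(\blambda'_j)^2 \leq d\sum_j\blambda'_j = dn$ (valid because $\ell(\blambda)\le d$ under RSK on a $d$-letter alphabet), which gives the identical constant $dn$ by a more elementary argument. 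Both routes are sound; yours trades a citation for a short representation-theoretic computation and replaces a majorization step with a trivial column-height bound.
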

\begin{proof}
Define the polynomial function
\begin{equation*}
p^*_2(\lambda) = \sum_{i=1}^{\ell(\lambda)} \left((\lambda_i - i + \tfrac12)^2 - (-i+\tfrac12)^2\right).
\end{equation*}
By Proposition~2.34 and equation~(12) of~\cite{OW15},
$
\E_{\blambda \sim \SWdist{n}{\alpha}} [p^*_2(\blambda)] = n(n-1) \cdot \sum_{i=1}^d \alpha_i^2.
$
Hence,
\begin{equation*}
\E \sum_{i=1}^d \blambda_i^2
= \E\left[ p_2^*(\blambda) + \sum_{i=1}^d (2 i-1)\blambda_i\right]
\leq \E p_2^*(\blambda) +  \sum_{i=1}^d (2i-1)(n/d)
\leq n^2\cdot  \sum_{i=1}^d \alpha_i^2 + dn.
\end{equation*}
Here the first inequality used inequality~\eqref{eqn:majorization-ineq} and $\blambda \succ (n/d, \dots, n/d)$.
\end{proof}
\begin{lemma}\label{lem:lambda-majorize}
Let $\blambda \sim \SW{n}{\alpha}$, where $\alpha \in \R^d$ is a sorted probability distribution.  Then $(\E \blambda_1, \dots, \E \blambda_d) \succ (\alpha_1 n, \dots, \alpha_d n)$.
\end{lemma}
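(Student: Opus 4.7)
}

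The plan is to read off the majorization inequality directly from Greene's theorem applied to the random word underlying the Schur--Weyl distribution. Since both $(\E\blambda_1,\dots,\E\blambda_d)$ and $n\alpha$ are sorted in weakly decreasing order (the former because $\blambda_1 \geq \blambda_2 \geq \cdots$ always, the latter by assumption), the definition of majorization reduces to showing
\[
\E[\blambda_1 + \cdots + \blambda_k] \;\geq\; n(\alpha_1 + \cdots + \alpha_k) \qquad \text{for each } k \in [d],
\]
with equality for $k = d$. The $k = d$ case is automatic since $\sum_i \blambda_i = n$ always and $\sum_i \alpha_i = 1$.

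For the remaining cases, let $\bw \sim \alpha^{\otimes n}$ so that $\blambda = \shRSK(\bw) \sim \SW{n}{\alpha}$. By Greene's Theorem (stated in Section~\ref{sec:prelims}), $\blambda_1 + \cdots + \blambda_k$ equals the maximum possible total length of a disjoint union of $k$ nondecreasing subsequences of~$\bw$. I would exhibit one particularly simple such union: for each $j \in [k]$, take $S_j$ to be the set of positions $i \in [n]$ with $\bw_i = j$. The $S_j$'s are pairwise disjoint, each is a constant (hence nondecreasing) subsequence of~$\bw$, and their total length is $|\{i \in [n]: \bw_i \leq k\}|$, a binomial random variable with mean $n(\alpha_1 + \cdots + \alpha_k)$. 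Therefore
\[
\E[\blambda_1 + \cdots + \blambda_k] \;\geq\; \E\bigl|\{i \in [n]: \bw_i \leq k\}\bigr| \;=\; n(\alpha_1 + \cdots + \alpha_k),
\]
as required.

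There is no real obstacle: the argument is just Greene's theorem plus an explicit choice of disjoint nondecreasing subsequences. An alternative inductive proof via Proposition~\ref{prop:random-walk-majorization} is conceivable, but one has to be careful that row-by-row transition probabilities majorizing~$\alpha$ imply majorization of the expected shape; the Greene's theorem route sidesteps this bookkeeping entirely.
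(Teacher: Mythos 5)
Your proof is correct and is essentially the paper's proof: both invoke Greene's Theorem and exhibit the same $k$ disjoint nondecreasing subsequences (all $1$'s, all $2$'s, \dots, all $k$'s) to bound $\blambda_1 + \cdots + \blambda_k$ below by the count of letters $\leq k$, then take expectations. The only cosmetic difference is that the paper states the bound as a pointwise majorization $(\blambda_1,\dots,\blambda_d) \succ (\#_1\bw,\dots,\#_d\bw)$ before applying linearity of expectation, whereas you take expectations of the partial-sum inequalities directly.
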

\begin{proof}
    Let $\bw \sim \alpha^{\otimes n}$, so $\blambda$ is distributed as $\shRSK(\bw)$.  The proof is completed by linearity of expectation applied to the fact that $(\blambda_1, \dots, \blambda_d) \succ (\#_1 \bw, \dots, \#_d \bw)$ always, where $\#_k \bw$ denotes the number of times letter~$k$ appears in~$\bw$.  In turn this fact holds by Greene's Theorem: we can form $k$ disjoint increasing subsequences in~$\bw$ by taking all its $1$'s, all its $2$'s, \dots, all its $k$'s.
\end{proof}

\begin{proof}[Proof of Theorem~\ref{thm:EYD-error-main}.]
We have
\begin{multline*}
n^2 \cdot \E_{\blambda \sim \SWdist{n}{\alpha}} \Vert \underline{\blambda} - \alpha\Vert_2^2
= \E \sum_{i=1}^d (\blambda_i -  \alpha_i n)^2
= \E \sum_{i=1}^d  (\blambda_i^2 + (\alpha_i n)^2)- 2 \sum_{i=1}^d (\alpha_i n) \cdot  \E \blambda_i  \\
\leq dn +  2 \sum_{i=1}^d  (\alpha_i n)^2 - 2  \sum_{i=1}^d (\alpha_i n) \cdot  \E \blambda_i
\leq dn +  2 \sum_{i=1}^d  (\alpha_i n)^2 - 2  \sum_{i=1}^d (\alpha_i n) \cdot (\alpha_i n) = dn,
\end{multline*}
where the first inequality used Lemma~\ref{lem:second-moment} and the second used Lemma~\ref{lem:lambda-majorize} and inequality~\eqref{eqn:majorization-ineq} (recall that the coefficients $\alpha_i n$ are decreasing).
Dividing  by~$n^2$ completes the proof.
\end{proof}

\section{Quantum state tomography}
In this section we analyze the tomography algorithm proposed by Keyl~\cite{Key06} based on projection to the highest weight vector. Keyl's method, when applied to density matrix $\rho \in \C^{d \times d}$ with sorted spectrum~$\alpha$, begins by performing weak Schur sampling on~$\rho^{\otimes n}$. Supposing the partition thereby obtained from $\SW{n}{\alpha}$ is $\lambda \vdash n$, the state collapses to ${\frac{1}{s_{\lambda}(\alpha) }\uirrep_{\lambda}(\rho) \in \Weyl{\lambda}{d}}$. The main step of Keyl's algorithm is now to perform a normalized POVM within $\Weyl{\lambda}{d}$ whose outcomes are unitary matrices in $\unitary{d}$.  Specifically, his measurement maps a (Borel) subset $F \subseteq \unitary{d}$ to
\[
    M(F) \coloneqq \int_F \glrep{\lambda}(U) \ket{T_\lambda} \bra{T_\lambda} \glrep{\lambda}(U)^\dagger \cdot \dim(\Weyl{\lambda}{d})\,dU,
\]
where $dU$ denotes Haar measure on $\unitary{d}$.  (To see that this is indeed a POVM --- i.e., that $M \coloneqq M(\unitary{d}) = I$ --- first note that the translation invariance of Haar measure implies $\glrep{\lambda}(V) M \glrep{\lambda}(V)^\dagger = M$ for any $V \in \unitary{d}$.  Thinking of $\glrep{\lambda}$ as an irreducible representation of the unitary group, Schur's lemma implies~$M$ must be a scalar matrix.  Taking traces shows $M$ is the identity.)

We write $\keyl{\lambda}{\rho}$ for the probability distribution on $\unitary{d}$ associated to this POVM; its density with respect to the Haar measure is therefore
\begin{equation}
    \trace\left(\glrep{\lambda}(\tfrac{1}{s_{\lambda}(\alpha) } \rho)  \glrep{\lambda}(U) \ket{T_\lambda} \bra{T_\lambda} \glrep{\lambda}(U)^\dagger \cdot \dim(\Weyl{\lambda}{d}) \right)
    = \sch_{\lambda}(\alpha)^{-1} \cdot \bra{T_\lambda} \glrep{\lambda}(U^\dagger \rho U) \ket{T_\lambda}. \label{eqn:keyl-density}
\end{equation}
Supposing the outcome of the measurement is~$U$, Keyl's final estimate for $\rho$ is $\hat{\rho} = U \diag(\ullam)U^\dagger$.  Thus the expected Frobenius-squared error of Keyl's tomography algorithm is precisely
\[
    \E_{\substack{\blambda \sim \SW{n}{\alpha} \\ \bU \sim \keyl{\blambda}{\rho}}} \|\bU \diag(\ulblam) \bU^\dagger - \rho\|_F^2.
\]
 Theorem~\ref{thm:tomography-error-main}, which we prove in this section, bounds the above quantity by $\frac{4d-3}{n}$.  Let us assume now that $\rank(\rho) \leq r$.  Then $\ell(\ulblam) \leq r$ always and hence the estimate $\bU \diag(\ulblam) \bU^\dagger$ will also have rank at most~$r$.  Thus by Cauchy--Schwarz applied to the singular values of $\bU \diag(\ulblam) \bU^\dagger - \rho$,
\[
    \E \dtr{\bU \diag(\ulblam) \bU^\dagger}{\rho} = \tfrac12 \E \|\bU \diag(\ulblam) \bU^\dagger - \rho\|_1 \leq \tfrac12 \sqrt{2r} \E \|\ulblam - \alpha\|_F \leq \sqrt{r/2} \sqrt{\E \|\ulblam - \alpha\|_F^2} \leq \sqrt{\tfrac{O(rd)}{n}},
\]
and Corollary~\ref{cor:tomography-main} follows just as Corollary~\ref{cor:EYD-main} did.

The remainder of this section is devoted to the proof of Theorem~\ref{thm:tomography-error-main}.

\subsection{Integration formulas}
\begin{notation}
    Let $Z \in \C^{d \times d}$ and let $\lambda$ be a partition of length at most~$d$.  The \emph{generalized power function} $\power{\lambda}$ is defined by
    \[
        \power{\lambda}(Z) = \prod_{k=1}^d \prm_k(Z)^{\lambda_k - \lambda_{k+1}},
    \]
    where $\prm_k(Z)$ denotes the $k$th principal minor of~$Z$ (and $\lambda_{d+1} = 0$).
\end{notation}
As noted by Keyl~\cite[equation~(141)]{Key06}, when $Z$ is positive semidefinite we have $\bra{T_\lambda} \glrep{\lambda}(Z) \ket{T_\lambda} = \power{\lambda}(Z)$; this follows by writing $Z = LL^\dagger$ for $L = (L_{ij})$ lower triangular with nonnegative diagonal and using the fact that $\power{\lambda}(Z) = \power{\lambda}(L^\dagger)^2 = \prod_{k=1}^d L_{kk}^{2\lambda_k}$.  Putting this into~\eqref{eqn:keyl-density} we have an alternate definition for the distribution $\keyl{\lambda}{\rho}$:
\begin{equation}        \label{eqn:keyl-formula}
    \E_{\bU \sim \keyl{\lambda}{\rho}} f(\bU) = \sch_{\lambda}(\alpha)^{-1} \E_{\bU \sim \unitary{d}}  \left[f(\bU) \cdot  \power{\lambda}(\bU^\dagger \rho \bU)\right],
\end{equation}
where $\bU \sim \unitary{d}$ denotes that $\bU$ has the Haar measure. For example, taking $f \equiv 1$ yields the identity
\begin{equation}        \label{eqn:spherical}
    \E_{\bU \sim \unitary{d}} \power{\lambda}(\bU^\dagger \rho \bU) = \sch_{\lambda}(\alpha);
\end{equation}
this expresses the fact that the spherical polynomial of weight~$\lambda$ for $\gl{d}/\unitary{d}$ is precisely the normalized Schur polynomial (see, e.g.,~\cite{Far15}). For a further example, taking $f(U) = \power{\mu}(U^\dagger \rho U)$ and using the fact that $\power{\lambda} \cdot \power{\mu} = \power{\lambda + \mu}$, we obtain
\begin{equation} \label{eqn:11-formula}
    \E_{\bU \sim \keyl{\lambda}{\rho}} \power{\mu}(\bU^\dagger \rho \bU) = \frac{\sch_{\lambda+\mu}(\alpha)}{\sch_{\lambda}(\alpha)}; \quad \text{in particular, } \E_{\bU \sim \keyl{\lambda}{\rho}} (\bU^\dagger \rho \bU)_{1,1} = \frac{\sch_{\lambda+e_1}(\alpha)}{\sch_{\lambda}(\alpha)}.
\end{equation}
For our proof of Theorem~\ref{thm:tomography-error-main}, we will need to develop and analyze a more general formula for the expected diagonal entry $\E (\bU^\dagger \rho \bU)_{k,k}$. We begin with some lemmas.
\begin{definition}
    For $\lambda$ a partition and $m$ a positive integer we define the following partition of height (at most)~$m$:
    \[
        \upperp{\lambda}{m} = (\lambda_1 - \lambda_{m+1}, \ldots, \lambda_m - \lambda_{m+1}).
    \]
    We also define the following ``complementary'' partition $\lambda_{[m]}$ satisfying $\lambda = \upperp{\lambda}{m} + \lowerp{\lambda}{m}$:
    \[
        (\lowerp{\lambda}{m})_i= \begin{cases}
                                                   \lambda_{m+1} & i \leq m, \\
                                                   \lambda_{i}       & i \geq m+1.
                                                   \end{cases}
    \]
\end{definition}
\begin{lemma}\label{lem:gonna-apply-this-twice}
    Let $\rho \in \C^{d \times d}$ be a density matrix with spectrum~$\alpha$ and let $\lambda \vdash n$ have height at most~$d$. Let $m \in [d]$ and let $f_m$ be an $m$-variate symmetric polynomial. Then
    \[
        \E_{\bU \sim \keyl{\lambda}{\rho}} f_m(\bbeta) =
            \sch_{\lambda}(\alpha)^{-1} \cdot
                \E_{\bU \sim \unitary{d}} \left[ f_m(\bbeta) \cdot \sch_{\upperp{\lambda}{m}}(\bbeta) \cdot \power{\lowerp{\lambda}{m}}(\bU^\dagger \rho \bU) \right],
    \]
    where we write $\bbeta = \spec_m(\bU^\dagger \rho \bU)$ for the spectrum of the top-left $m \times m$ submatrix of $\bU^\dagger \rho \bU$.
\end{lemma}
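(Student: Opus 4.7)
The plan is to apply Keyl's integration formula~\eqref{eqn:keyl-formula} and then exploit the right-translation invariance of Haar measure under the block subgroup $\{V_1 \oplus I_{d-m} : V_1 \in \unitary{m}\} \subset \unitary{d}$.

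First, applying~\eqref{eqn:keyl-formula} to the function $U \mapsto f_m(\spec_m(U^\dagger \rho U))$ rewrites the left-hand side as
\[
    \sch_{\lambda}(\alpha)^{-1} \cdot \E_{\bU \sim \unitary{d}}\bigl[f_m(\bbeta) \cdot \power{\lambda}(\bU^\dagger \rho \bU)\bigr].
\]
Since $\lambda = \upperp{\lambda}{m} + \lowerp{\lambda}{m}$ and the exponent of $\prm_k$ in $\power{\mu}$ is additive in $\mu$, I can factor $\power{\lambda} = \power{\upperp{\lambda}{m}} \cdot \power{\lowerp{\lambda}{m}}$. Tallying exponents shows that $\power{\upperp{\lambda}{m}}$ involves only the minors $\prm_k$ with $k \leq m$, so it depends only on the top-left $m \times m$ submatrix $M$ of its argument; meanwhile $\power{\lowerp{\lambda}{m}}$ involves only the $\prm_k$ with $k \geq m+1$, since the exponent of $\prm_m$ in $\power{\lowerp{\lambda}{m}}$ vanishes (both $(\lowerp{\lambda}{m})_m$ and $(\lowerp{\lambda}{m})_{m+1}$ equal $\lambda_{m+1}$).

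Next, for any fixed $V_1 \in \unitary{m}$, the substitution $\bU \mapsto \bU(V_1 \oplus I_{d-m})$ preserves the Haar law and sends $\bU^\dagger \rho \bU$ to $(V_1^\dagger \oplus I)\bU^\dagger \rho \bU(V_1 \oplus I)$. I would then verify that two of the three factors in the integrand are unchanged: the top-left $m \times m$ block $M$ becomes $V_1^\dagger M V_1$, so $\bbeta = \spec(M)$ and hence $f_m(\bbeta)$ are invariant; and for $k \geq m+1$, the leading $k \times k$ submatrix of the conjugated matrix factors as $(V_1^\dagger \oplus I_{k-m}) Y_k (V_1 \oplus I_{k-m})$ for some $k \times k$ matrix $Y_k$, so its determinant equals $\prm_k(\bU^\dagger \rho \bU)$, making $\power{\lowerp{\lambda}{m}}(\bU^\dagger \rho \bU)$ invariant. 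The only factor that changes is $\power{\upperp{\lambda}{m}}$, which becomes $\power{\upperp{\lambda}{m}}(V_1^\dagger M V_1)$.

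Averaging the resulting identity over $V_1 \sim \unitary{m}$ moves only this last factor inside the $V_1$-expectation, and the spherical-polynomial formula~\eqref{eqn:spherical}, applied in dimension $m$ to the PSD matrix $M$ (extended from density matrices to general PSD matrices by the common homogeneity of $\power{\upperp{\lambda}{m}}$ and $\sch_{\upperp{\lambda}{m}}$; the $M = 0$ case is trivial), yields $\E_{V_1 \sim \unitary{m}} \power{\upperp{\lambda}{m}}(V_1^\dagger M V_1) = \sch_{\upperp{\lambda}{m}}(\bbeta)$. Substituting back produces exactly the desired identity. The main bookkeeping obstacle is keeping the exponents of the $\prm_k$'s cleanly split between $\power{\upperp{\lambda}{m}}$ and $\power{\lowerp{\lambda}{m}}$, and verifying via the block-determinant factorization that the principal minors with $k \geq m+1$ are genuinely invariant under right-translation by $V_1 \oplus I_{d-m}$; once those are in hand, the rest is a direct application of~\eqref{eqn:keyl-formula} and~\eqref{eqn:spherical}.
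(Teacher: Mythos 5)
Your proposal is correct and follows essentially the same route as the paper: apply Keyl's formula \eqref{eqn:keyl-formula}, exploit Haar invariance under right-translation by $V_1 \oplus I_{d-m}$, observe that $\power{\lowerp{\lambda}{m}}$ and $\bbeta$ are invariant while only $\power{\upperp{\lambda}{m}}$ changes, and finish by applying \eqref{eqn:spherical} in dimension $m$. The one small thing you add beyond the paper's exposition is the explicit remark that \eqref{eqn:spherical} extends from density matrices to general PSD matrices by common homogeneity of $\power{\mu}$ and $\sch_\mu$, which the paper uses silently; that is a legitimate and slightly more careful touch, but the argument is otherwise identical.
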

\begin{proof}
    Let $\bV \sim \unitary{m}$ and write $\ol{\bV} = \bV \oplus I$, where $I$ is the $(d-m)$-dimensional identity matrix.  By translation-invariance of Haar measure we have $\bU \ol{\bV} \sim \unitary{d}$, and hence from~\eqref{eqn:keyl-formula},
    \begin{equation}    \label{eqn:halfway}
        \E_{\bU \sim \keyl{\lambda}{\rho}} f_m(\bbeta) =  \sch_{\lambda}(\alpha)^{-1} \E_{\bU\sim \unitary{d}, \bV \sim \unitary{m}}  \left[f_m(\spec_m(\ol{\bV}^\dagger\bU^\dagger \rho \bU \ol{\bV})) \cdot  \power{\lambda}(\ol{\bV}^\dagger\bU^\dagger \rho \bU \ol{\bV})\right].
    \end{equation}
    Note that conjugating a matrix by $\ol{\bV}$ does not change the spectrum of its upper-left $k \times k$ block for any $k \geq m$.
   Thus $\spec_m(\ol{\bV}^\dagger\bU^\dagger \rho \bU \ol{\bV})$ is identical to~$\beta$, and $\prm_{k}(\ol{\bV}^\dagger\bU^\dagger \rho \bU \ol{\bV}) = \prm_k(\bU^\dagger \rho \bU)$ for all $k \geq m$. Thus using $\power{\lambda} = \power{\lowerp{\lambda}{m}} \cdot \power{\upperp{\lambda}{m}}$ we have
     \[
        \eqref{eqn:halfway} = \sch_{\lambda}(\alpha)^{-1} \E_{\bU \sim \unitary{d}}  \left[f_m(\bbeta) \cdot  \power{\lowerp{\lambda}{m}}(\bU^\dagger \rho \bU) \cdot \E_{\bV \sim \unitary{m}} \left[\power{\upperp{\lambda}{m}}(\ol{\bV}^\dagger\bU^\dagger \rho \bU \ol{\bV})\right]\right].
    \]
    But the inner expectation equals $\sch_{\upperp{\lambda}{m}}(\bbeta)$ by~\eqref{eqn:spherical}, completing the proof.
\end{proof}

\begin{lemma}\label{thm:lemma}
    In the setting of Lemma~\ref{lem:gonna-apply-this-twice},
    \begin{equation}    \label{eqn:avg-equals-avg}
        \E_{\bU \sim \keyl{\lambda}{\rho}} \avg_{i=1}^m \left\{(\bU^\dagger \rho \bU)_{i,i}\right\} =
             \sum_{i=1}^m \frac{s_{\lambda^{[m]}+e_i}(1/m)}{\phantom{{}_{+e_i}}s_{\lambda^{[m]}}(1/m)}
             \cdot \frac{\sch_{\lambda + e_i}(\alpha)}{\sch_{\lambda}(\alpha)},
    \end{equation}
    where $1/m$ abbreviates $1/m, \dots, 1/m$ (repeated $m$ times).
\end{lemma}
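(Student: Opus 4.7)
My plan is to apply Lemma~\ref{lem:gonna-apply-this-twice} twice, linked by Pieri's rule. First I would observe that $\sum_{i=1}^m (\bU^\dagger \rho \bU)_{i,i}$ equals the trace of the top-left $m\times m$ block of $\bU^\dagger \rho \bU$, and thus equals $p_1(\bbeta) = \bbeta_1+\cdots+\bbeta_m$, the first power-sum symmetric polynomial. Hence the left-hand side of~\eqref{eqn:avg-equals-avg} is $\E_{\bU \sim \keyl{\lambda}{\rho}} f_m(\bbeta)$ for $f_m = p_1/m$, putting us in the setting of Lemma~\ref{lem:gonna-apply-this-twice}. Applying that lemma and writing $\sch_{\lambda^{[m]}}(\bbeta) = s_{\lambda^{[m]}}(\bbeta)/s_{\lambda^{[m]}}(1,\ldots,1)$, the expectation becomes a constant multiple of
\[ \E_{\bU \sim \unitary{d}}\!\left[p_1(\bbeta)\,s_{\lambda^{[m]}}(\bbeta)\,\power{\lambda_{[m]}}(\bU^\dagger \rho \bU)\right]. \]

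The key step is Pieri's rule in~$m$ variables, $p_1(\bbeta)\,s_{\lambda^{[m]}}(\bbeta) = \sum_{i=1}^m s_{\lambda^{[m]}+e_i}(\bbeta)$, which splits this into a sum of~$m$ pieces (with terms in which $\lambda^{[m]}+e_i$ is not a valid partition vanishing formally from the determinantal definition). For each piece I would apply Lemma~\ref{lem:gonna-apply-this-twice} in reverse, using the elementary identity $(\lambda+e_i)^{[m]} = \lambda^{[m]}+e_i$ and $(\lambda+e_i)_{[m]} = \lambda_{[m]}$ for $i \in [m]$, which follows directly from the definitions of $\upperp{\cdot}{m}$ and $\lowerp{\cdot}{m}$. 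Consequently, applying Lemma~\ref{lem:gonna-apply-this-twice} to the partition $\lambda+e_i$ with $f_m \equiv 1$ yields
\[ \E_{\bU \sim \unitary{d}}\!\left[s_{\lambda^{[m]}+e_i}(\bbeta)\,\power{\lambda_{[m]}}(\bU^\dagger \rho \bU)\right] = s_{\lambda^{[m]}+e_i}(1,\ldots,1)\cdot\sch_{\lambda+e_i}(\alpha). \]

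Assembling the pieces and cleaning up the prefactors using homogeneity of $s_\mu$ (namely $s_{\lambda^{[m]}+e_i}(1,\ldots,1)/(m\,s_{\lambda^{[m]}}(1,\ldots,1)) = s_{\lambda^{[m]}+e_i}(1/m)/s_{\lambda^{[m]}}(1/m)$ by degree counting) produces the right-hand side of~\eqref{eqn:avg-equals-avg}. I do not foresee a real obstacle here: the algebra is essentially forced by the shape of the target formula. The one conceptual insight is recognizing that the ``outer'' normalized-Schur factor $\sch_{\lambda^{[m]}}(\bbeta)$ appearing in Lemma~\ref{lem:gonna-apply-this-twice} can be upgraded from $\lambda$ to $\lambda+e_i$ by absorbing $p_1(\bbeta)$ via Pieri, which is exactly what dictates the double application of the lemma --- once forwards with partition~$\lambda$, and once backwards with partition~$\lambda+e_i$.
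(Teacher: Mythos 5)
Your proposal is correct and follows essentially the same route as the paper: apply Lemma~\ref{lem:gonna-apply-this-twice} once with $f_m = p_1/m$, use Pieri's rule to split $p_1(\bbeta)\,s_{\lambda^{[m]}}(\bbeta)$ into $\sum_i s_{\lambda^{[m]}+e_i}(\bbeta)$, and then recognize each piece as an instance of Lemma~\ref{lem:gonna-apply-this-twice} with $\lambda+e_i$ and $f_m \equiv 1$. Your explicit statement of the identities $\upperp{(\lambda+e_i)}{m}=\upperp{\lambda}{m}+e_i$ and $\lowerp{(\lambda+e_i)}{m}=\lowerp{\lambda}{m}$ for $i\in[m]$ is a small expository plus over the paper, which leaves this implicit.
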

\begin{remark}  \label{rem:avg}
    The right-hand side of~\eqref{eqn:avg-equals-avg} is also a weighted average --- of the quantities ${\sch_{\lambda + e_i}(\alpha)}/{\sch_{\lambda}(\alpha)}$ --- by virtue of~\eqref{eqn:pieri-interp}.  The lemma also generalizes~\eqref{eqn:11-formula}, as ${s_{\lambda^{[1]}+e_1}(1)}/{s_{\lambda^{[1]}}(1)}$ is simply~$1$.
\end{remark}
\begin{proof}
    On the left-hand side of~\eqref{eqn:avg-equals-avg} we have $\frac{1}{m}$ times the expected trace of the upper-left $m \times m$ submatrix of $\bU^\dagger \rho \bU$.  So by applying Lemma~\ref{lem:gonna-apply-this-twice} with $f_m(\beta) = \frac{1}{m} (\beta_1 + \cdots + \beta_m)$, it is equal to
   \begin{align*}
        &\phantom{=}\ \;\sch_{\lambda}(\alpha)^{-1} \cdot
                \E_{\bU \sim \unitary{d}} \left[ \frac1m (\bbeta_1 + \cdots + \bbeta_m)\cdot \frac{s_{\upperp{\lambda}{m}}(\bbeta)}{s_{\upperp{\lambda}{m}}(1, \dots, 1)} \cdot \power{\lowerp{\lambda}{m}}(\bU^\dagger \rho \bU) \right] \\
        &= \sch_{\lambda}(\alpha)^{-1} \cdot
                \E_{\bU \sim \unitary{d}} \left[ \frac1m \sum_{i=1}^m \frac{s_{\upperp{\lambda}{m}+e_i}(\bbeta)}{s_{\upperp{\lambda}{m}}(1, \dots, 1)} \cdot \power{\lowerp{\lambda}{m}}(\bU^\dagger \rho \bU) \right] \tag{by Pieri~\eqref{eqn:pieri}} \\
        &= \sch_{\lambda}(\alpha)^{-1} \cdot
                 \sum_{i=1}^m \frac{\;s_{\upperp{\lambda}{m}+e_i}(1, \dots, 1)}{m \cdot s_{\upperp{\lambda}{m}}(1, \dots, 1)} \cdot \E_{\bU \sim \unitary{d}} \left[\sch_{\upperp{\lambda}{m}+e_i}(\bbeta)  \cdot \power{\lowerp{\lambda}{m}}(\bU^\dagger \rho \bU) \right] \\
        &= \sch_{\lambda}(\alpha)^{-1} \cdot
                 \sum_{i=1}^m \frac{\;s_{\upperp{\lambda}{m}+e_i}(1, \dots, 1)}{m \cdot s_{\upperp{\lambda}{m}}(1, \dots, 1)} \cdot \sch_{\lambda+e_i}(\alpha),
   \end{align*}
   where in the last step we used Lemma~\ref{lem:gonna-apply-this-twice} again, with $f_m \equiv 1$ and $\lambda + e_i$ in place of~$\lambda$. But this is equal to the right-hand side of~\eqref{eqn:avg-equals-avg}, using the homogeneity of Schur polynomials.
\end{proof}

\begin{lemma}\label{lem:diagonal}
    Assume the setting of Lemma~\ref{lem:gonna-apply-this-twice}. Then $\eta_i \coloneqq \E_{\bU \sim \keyl{\lambda}{\rho}} (\bU^\dagger \rho \bU)_{m,m}$ is a convex combination of the quantities $R_i \coloneqq {\sch_{\lambda + e_i}(\alpha)}/{\sch_{\lambda}(\alpha)}$, $1 \leq i \leq m$.\footnote{To be careful, we may exclude all those~$i$ for which $\lambda + e_i$ is an invalid partition and thus $R_i = 0$.}
\end{lemma}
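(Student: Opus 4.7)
The plan is to express $\eta_m$ as a difference of two instances of the weighted-average formula provided by Lemma~\ref{thm:lemma}. Writing $A_j := \E \avg_{i=1}^j (\bU^\dagger\rho\bU)_{i,i}$, the trace identity $\sum_{i=1}^j (\bU^\dagger\rho\bU)_{i,i} = \tr((\bU^\dagger\rho\bU)_{1..j,1..j})$ immediately gives $\eta_m = m A_m - (m-1) A_{m-1}$. Applying Lemma~\ref{thm:lemma} at scales $m$ and $m-1$ and collecting terms expresses $\eta_m$ as a linear combination
\[
\eta_m = m w_m^{(m)} R_m + \sum_{i=1}^{m-1} \bigl(m w_i^{(m)} - (m-1) w_i^{(m-1)}\bigr) R_i,
\]
where $w_i^{(j)} := s_{\lambda^{[j]}+e_i}(1/j,\dots,1/j) / s_{\lambda^{[j]}}(1/j,\dots,1/j)$. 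Since Remark~\ref{rem:avg} gives $\sum_i w_i^{(j)} = 1$ for each $j$, the coefficients sum to $m - (m-1) = 1$, so it only remains to establish that each coefficient is non-negative.

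The coefficient of $R_m$ is $m w_m^{(m)} \geq 0$, so the task reduces to proving, for each $i \leq m-1$, the inequality
\[
\frac{s_{\lambda^{[m]}+e_i}(1,\dots,1)}{s_{\lambda^{[m]}}(1,\dots,1)} \;\geq\; \frac{s_{\lambda^{[m-1]}+e_i}(1,\dots,1)}{s_{\lambda^{[m-1]}}(1,\dots,1)},
\]
where the left (resp.\ right) hand side has $m$ (resp.\ $m-1$) ones, after rescaling via homogeneity. Both sides can be expanded using the Weyl dimension formula in~\eqref{eqn:ssyt-formula}. The shifts $\lambda_{m+1}$ and $\lambda_m$ that distinguish the definitions of $\lambda^{[m]}$ and $\lambda^{[m-1]}$ cancel inside each pairwise factor, so after simplification each side becomes a product of $(\lambda_i - \lambda_b + b - i + 1)/(\lambda_i - \lambda_b + b - i)$ over $b$ in the relevant range, multiplied by identical factors coming from indices $a < i$. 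The two ranges differ only by the single factor at $b = m$, so the quotient collapses to $(\lambda_i - \lambda_m + m - i + 1)/(\lambda_i - \lambda_m + m - i) \geq 1$, since $\lambda_i \geq \lambda_m$ and $m > i$.

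The main obstacle is the explicit Weyl-dimension cancellation, but it collapses cleanly to a single telescoping factor. A more conceptual, symmetry-based derivation seems elusive, because $\keyl{\lambda}{\rho}$ is not invariant under unitaries acting only on the first $m-1$ coordinates (the weight $\power{\lambda}$ depends on all principal minors, not only on $\prm_m$), so the telescoping route above appears to be the simplest available argument. Note finally that for indices~$i$ where $\lambda + e_i$ is not a valid partition, the Weyl formula makes both $w_i^{(m)}$ and $w_i^{(m-1)}$ vanish, so such indices can be harmlessly dropped from the sum, consistent with the footnote in the statement.
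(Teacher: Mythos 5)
Your proof is correct and follows essentially the same route as the paper: both decompose $\eta_m$ as $m$ times the $m$-average minus $(m-1)$ times the $(m-1)$-average, apply Lemma~\ref{thm:lemma} at both scales, note the resulting coefficients telescope to sum to~$1$, and prove nonnegativity by showing via the Weyl dimension formula that the relevant ratio equals $1 + \frac{1}{(\lambda_i - \lambda_m) + (m-i)} \geq 1$. The only cosmetic difference is that you name the partial averages $A_j$ while the paper invokes Remark~\ref{rem:avg} directly; the cancellation computation is identical.
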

\begin{proof}
    This is clear for $m = 1$. For $m > 1$, Remark~\ref{rem:avg} implies
    \[
        \avg_{i=1}^m \{\eta_i\} = p_1 R_1 + \cdots + p_m R_m, \quad
        \avg_{i=1}^{m-1} \{\eta_i\} = q_1 R_1 + \cdots + q_{m} R_m,
    \]
    where $p_1 + \cdots + p_m = q_{1} + \cdots + q_{m} = 1$ and $q_m = 0$. Thus
    $
        \eta_i = \sum_{i=1}^m r_i R_i$, where ${r_i = (mp_i - (m-1)q_i)}$,
    and evidently $\sum_{i=1}^m r_i = m - (m-1) = 1$.  It remains to verify that each $r_i  \geq 0$. This is obvious for $i = m$; for $i < m$, we must check that
    \begin{equation}            \label{eqn:weird-ratio}
         \frac{s_{\upperp{\lambda}{m}+e_i}(1, \dots, 1)}{\phantom{{}_{+e_i}}s_{\upperp{\lambda}{m}}(1, \dots, 1)} \geq \frac{s_{\upperp{\lambda}{m-1}+e_i}(1, \dots, 1)}{\phantom{{}_{+e_i}}s_{\upperp{\lambda}{m-1}}(1, \dots, 1)}.
    \end{equation}
    Using the Weyl dimension formula from~\eqref{eqn:ssyt-formula}, one may explicitly compute that the ratio of the left side of~\eqref{eqn:weird-ratio} to the right side is precisely $1 + \frac{1}{(\lambda_i - \lambda_m) + (m-i)} \geq 1$. This completes the proof.
\end{proof}
We will in fact only need the following corollary:
\begin{corollary}\label{cor:diagonal}
    Let $\rho \in \C^{d \times d}$ be a density matrix with spectrum~$\alpha$ and let $\lambda \vdash n$ have height at most~$d$. Then $\E_{\bU \sim \keyl{\lambda}{\rho}} (\bU^\dagger \rho \bU)_{m,m} \geq {\sch_{\lambda + e_m}(\alpha)}/{\sch_{\lambda}(\alpha)}$ for every $m \in [d]$,
\end{corollary}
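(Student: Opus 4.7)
My plan is to derive the corollary directly---bypassing Lemma~\ref{lem:diagonal}---by reinterpreting $R_m \coloneqq \sch_{\lambda+e_m}(\alpha)/\sch_\lambda(\alpha)$ as itself a Keyl expectation, generalizing the ``in particular'' clause of~\eqref{eqn:11-formula}. The case $\lambda_{m-1} = \lambda_m$ makes $\lambda + e_m$ not a partition, so $R_m = 0$, and the claim is immediate from positive semidefiniteness of $\bU^\dagger\rho\bU$; so I would assume $\lambda_{m-1} > \lambda_m$. From the definition of the generalized power function and the fact that $(e_m)_k - (e_m)_{k+1}$ is $+1$ at $k=m$, $-1$ at $k=m-1$, and $0$ otherwise,
\[
    \power{\lambda+e_m}(Z) \;=\; \power{\lambda}(Z) \cdot \prm_m(Z)/\prm_{m-1}(Z)
\]
wherever $\prm_{m-1}(Z) \neq 0$. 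Plugging $f(\bU) = \prm_m(Z)/\prm_{m-1}(Z)$ (with $Z = \bU^\dagger\rho\bU$) into the Keyl density formula~\eqref{eqn:keyl-formula} and invoking~\eqref{eqn:spherical}, I would obtain
\[
    R_m \;=\; \E_{\bU \sim \keyl{\lambda}{\rho}}\!\left[\frac{\prm_m(Z)}{\prm_{m-1}(Z)}\right].
\]
The assumption $\lambda_{m-1} > \lambda_m$ guarantees that $\power{\lambda}(Z)$ contains a positive power of $\prm_{m-1}(Z)$, so the Keyl density vanishes on $\{\prm_{m-1}(Z) = 0\}$ and the integrand is defined almost surely.

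The corollary then reduces to a pointwise bound. For any positive semidefinite $Z$ whose $(m-1)\times(m-1)$ top-left principal submatrix $Z_{[m-1]}$ is invertible, the classical Schur complement identity reads
\[
    \frac{\prm_m(Z)}{\prm_{m-1}(Z)} \;=\; Z_{m,m} - z^\dagger (Z_{[m-1]})^{-1} z \;\leq\; Z_{m,m},
\]
where $z \in \C^{m-1}$ is the off-diagonal column of $Z_{[m]}$; the inequality holds because $(Z_{[m-1]})^{-1}$ is positive semidefinite. Since $Z = \bU^\dagger\rho\bU$ is PSD for every $\bU$ and $Z_{[m-1]}$ is Keyl-almost-surely invertible (by the same support argument), integrating against $\keyl{\lambda}{\rho}$ yields $R_m \leq \E_{\bU \sim \keyl{\lambda}{\rho}} (\bU^\dagger \rho \bU)_{m,m}$, which is the corollary.

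I foresee no serious obstacle: both ingredients---the integral reinterpretation of $R_m$ and the Schur complement inequality---are essentially routine once spelled out. The only mild care needed is in handling the measure-zero set $\{\prm_{m-1}(Z) = 0\}$, where the ratio is undefined; this is automatic via the support considerations above, and one may alternatively rephrase the argument entirely in terms of the always-defined polynomial $\power{\lambda+e_m}(Z)$.
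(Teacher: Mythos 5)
Your proof is correct, and it takes a genuinely different and arguably superior route. The paper establishes Corollary~\ref{cor:diagonal} via Lemma~\ref{lem:diagonal}, which shows $\E(\bU^\dagger\rho\bU)_{m,m}$ is a \emph{convex combination} of $R_1, \dots, R_m$; the paper then invokes the monotonicity $R_1 \geq R_2 \geq \cdots \geq R_m$ of normalized Schur polynomials, which is the Cuttler--Greene--Skandera conjecture, only recently proved by Sra. Proving the convex-combination lemma in turn requires the two-stage averaging argument of Lemma~\ref{thm:lemma} and a Weyl-dimension-formula computation to verify nonnegativity of the weights. Your approach sidesteps all of this: the identity $\power{\lambda+e_m}(Z) = \power{\lambda}(Z)\cdot \prm_m(Z)/\prm_{m-1}(Z)$ (telescoping in the generalized power function) recasts $R_m$ directly as the Keyl expectation of the Schur-complement quantity $\prm_m(Z)/\prm_{m-1}(Z)$, which is pointwise $\leq Z_{m,m}$ for any PSD $Z$ by the classical determinantal identity $\prm_m/\prm_{m-1} = Z_{m,m} - z^\dagger Z_{[m-1]}^{-1} z$. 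Your handling of degeneracies (the trivial case $\lambda_{m-1}=\lambda_m$, and the observation that the Keyl density itself carries a positive power of $\prm_{m-1}$ and so vanishes on the singular set) is correct. The trade-off is that your argument proves only the corollary, not the finer structure of Lemma~\ref{lem:diagonal}; but since only the corollary is used downstream in the paper, your route both shortens the argument and removes the dependence on Sra's nontrivial external result, which is a real improvement in self-containedness.
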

\begin{proof}
    This is immediate from Lemma~\ref{lem:diagonal} and the fact that $\sch_{\lambda + e_i}(\alpha) \geq \sch_{\lambda + e_m}(\alpha)$ whenever $i < m$ (assuming $\lambda + e_i$ is a valid partition).  This latter fact was recently proved by Sra~\cite{Sra15}, verifying a conjecture of Cuttler et al.~\cite{CGS11}.
\end{proof}

\subsection{Proof of Theorem~\ref{thm:tomography-error-main}}

Throughout the proof we assume $\blambda \sim \SW{n}{\alpha}$ and $\bU \sim \keyl{\blambda}{\rho}$.  
We have
\begin{multline}
     n^2 \cdot \E_{\blambda, \bU} \|\bU \diag(\ulblam) \bU^\dagger - \rho\|_F^2
  = n^2 \cdot \E_{\blambda, \bU} \|\diag(\ulblam) - \bU^\dagger \rho \bU\|_F^2 \\
= \E_{\blambda} \sum_{i=1}^d \blambda_i^2 + \sum_{i=1}^d (\alpha_i n)^2 -2n \E_{\blambda, \bU} \sum_{i=1}^d \blambda_i (\bU^\dagger \rho \bU)_{i,i}
\leq dn + 2\sum_{i=1}^d (\alpha_i n)^2 - 2n \E_\lambda \sum_{i=1}^d \blambda_i \E_{\bU}(\bU^\dagger \rho \bU)_{i,i},  \label{eqn:tomog1}
\end{multline}
using Lemma~\ref{lem:second-moment}.  Then by Corollary~\ref{cor:diagonal},
\begin{multline}
\E_\lambda \sum_{i=1}^d \blambda_i \E_{\bU}(\bU^\dagger \rho \bU)_{i,i} \geq \E_\lambda \sum_{i=1}^d \blambda_i \frac{\sch_{\blambda+e_i}(\alpha)}{\sch_{\blambda}(\alpha)}
= \E_\lambda \sum_{i=1}^d \blambda_i \frac{s_{\blambda+e_i}(\alpha)}{\phantom{{}_{+e_i}}s_{\blambda}(\alpha)} \frac{\phantom{{}_{+e_i}}s_{\blambda}(1, \dots, 1)}{s_{\blambda+e_i}(1, \dots, 1)} \\
\geq \E_{\blambda} \sum_{i=1}^d \blambda_i \frac{s_{\blambda+e_i}(\alpha)}{\phantom{{}_{+e_i}}s_{\blambda}(\alpha)}\left(2-\frac{s_{\blambda+e_i}(1, \dots, 1)}{\phantom{{}_{+e_i}}s_{\blambda}(1, \dots, 1)}\right) =
2\E_{\blambda} \sum_{i=1}^d \blambda_i \frac{s_{\blambda+e_i}(\alpha)}{\phantom{{}_{+e_i}}s_{\blambda}(\alpha)}- \E_{\blambda} \sum_{i=1}^d \blambda_i \frac{s_{\blambda+e_i}(\alpha)}{\phantom{{}_{+e_i}}s_{\blambda}(\alpha)}\frac{s_{\blambda+e_i}(1, \dots, 1)}{\phantom{{}_{+e_i}}s_{\blambda}(1, \dots, 1)}, \label{eqn:tomog2}
\end{multline}
where we used $r \geq 2-\tfrac{1}{r}$ for $r > 0$. We lower-bound the first term in~\eqref{eqn:tomog2} by first using the inequality~\eqref{eqn:majorization-ineq} and Proposition~\ref{prop:random-walk-majorization}, and then using inequality~\eqref{eqn:majorization-ineq} and Lemma~\ref{lem:lambda-majorize} (as in the proof of Theorem~\ref{thm:EYD-error-main}):
\begin{equation}    \label{eqn:tomog3}
    2\E_{\blambda} \sum_{i=1}^d \blambda_i \frac{s_{\blambda+e_i}(\alpha)}{\phantom{{}_{+e_i}}s_{\blambda}(\alpha)}
    \geq 2\E_{\blambda} \sum_{i=1}^d \blambda_i \alpha_i \geq 2 n\sum_{i=1}^d  \alpha_i^2.
\end{equation}
As for the second term in~\eqref{eqn:tomog2}, we use~\eqref{eqn:sw-probs} and the first formula in~\eqref{eqn:ssyt-formula} to compute
\begin{align}
    \E_{\blambda} \sum_{i=1}^d \blambda_i \frac{s_{\blambda+e_i}(\alpha)}{\phantom{{}_{+e_i}}s_{\blambda}(\alpha)}\frac{s_{\blambda+e_i}(1, \dots, 1)}{\phantom{{}_{+e_i}}s_{\blambda}(1, \dots, 1)}
&= \sum_{i=1}^d
\sum_{\lambda \vdash n} \symdim{\lambda}  s_{\lambda}(\alpha)\cdot \lambda_i \cdot \frac{s_{\lambda+e_i}(\alpha)}{\phantom{{}_{+e_i}}s_{\lambda}(\alpha)} \frac{\symdim{\lambda+e_i}(d+\lambda_i - i+1)}{\symdim{\lambda}(n+1)} \nonumber\\
&= \sum_{i=1}^d
\sum_{\lambda \vdash n} \symdim{\lambda+e_i} s_{\lambda+e_i}(\alpha)\cdot \frac{\lambda_i(d-i+\lambda_i+1)}{n+1}\nonumber\\
&\leq \sum_{i=1}^d \E_{\blambda' \sim \SW{n+1}{\alpha}} \frac{(\blambda'_i-1)(d-i+\blambda'_i)}{n+1} \tag{by $\eqref{eqn:sw-probs}$ again}\\
&\leq \frac{1}{n+1} \left(\E_{\blambda' \sim \SW{n+1}{\alpha}} \sum_{i=1}^d (\blambda'_i)^2 + \E_{\blambda' \sim \SW{n+1}{\alpha}} \sum_{i=1}^d (d-i-1) \blambda'_i\right) \nonumber\\
&\leq \frac{1}{n+1} \left((n+1)n\sum_{i=1}^d \alpha_i^2 + \sum_{i=1}^d (d+i-2)((n+1)/d)\right) \nonumber\\
&=  n\sum_{i=1}^d \alpha_i^2 + \frac32d - \frac32\label{eqn:tomog4}
\end{align}
where the last inequality is deduced exactly as in the proof of Lemma~\ref{lem:second-moment}.  Finally, combining \eqref{eqn:tomog1}--\eqref{eqn:tomog4} we get
\[
    n^2 \cdot \E_{\blambda, \bU} \|\bU \diag(\ulblam) \bU^\dagger - \rho\|_F^2 \leq 4dn - 3n.
\]
Dividing both sides by~$n^2$ completes the proof. \hfill $\square$

\section{Truncated spectrum estimation}

In this section we prove Theorem~\ref{thm:top-k-eigs-main}, from which Corollary~\ref{cor:truncated-EYD-main} follows in the same way as Corollary~\ref{cor:EYD-main}.  The key lemma involved is the following:
\begin{lemma}\label{lem:k-root-n}
Let $\alpha \in \R^d$ be a sorted probability distribution.  Then for any $k \in [d]$,
\begin{equation*}
\E_{\blambda \sim \SWdist{n}{\alpha}}  \sum_{i=1}^k \blambda_i \leq \sum_{i=1}^k \alpha_in + 2\sqrt{2} k\sqrt{n}.
\end{equation*}
\end{lemma}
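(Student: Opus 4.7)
The plan is to use Theorem~\ref{thm:coupling-main} to dominate $\blambda$ by the RSK shape of a word drawn from a cleverly ``compressed'' distribution $\beta$ supported on only $k+1$ letters, and then apply the $\ell_2$ error bound of Theorem~\ref{thm:EYD-error-main} on that much smaller alphabet.

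First I would define the sorted probability distribution
\[
    \beta = (\alpha_1, \alpha_2, \dots, \alpha_k,\; \alpha_{k+1} + \alpha_{k+2} + \cdots + \alpha_d,\; 0, \dots, 0),
\]
which satisfies $\beta \succ \alpha$ since their partial sums agree through index $k$ and the $(k+1)$-th partial sum of $\beta$ is $1$. By Theorem~\ref{thm:coupling-main} there is a coupling $(\blambda, \bmu)$ with $\blambda \sim \SWdist{n}{\alpha}$, $\bmu \sim \SWdist{n}{\beta}$, and $\bmu \unrhd \blambda$ always, so in particular $\sum_{i=1}^k \blambda_i \leq \sum_{i=1}^k \bmu_i$ pointwise. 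Because $\beta$ is supported on $[k+1]$, the shape $\bmu$ has height at most $k+1$ almost surely, and hence $\sum_{i=1}^k \bmu_i = n - \bmu_{k+1}$. The lemma therefore reduces to showing
\[
    \E \bmu_{k+1} \;\geq\; n\beta_{k+1} - 2\sqrt{2}\,k\sqrt{n}.
\]

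To establish this, I would regard $\bmu$ as drawn from $\SWdist{n}{\beta}$ on the alphabet $[k+1]$ and apply Theorem~\ref{thm:EYD-error-main} with ambient dimension $k+1$, getting $\E\|\underline{\bmu} - \beta\|_2^2 \leq (k+1)/n$. In particular $\E(\underline{\bmu}_{k+1} - \beta_{k+1})^2 \leq (k+1)/n$, so by Cauchy--Schwarz (or Jensen), $|\E \underline{\bmu}_{k+1} - \beta_{k+1}| \leq \sqrt{(k+1)/n}$, which gives $\E \bmu_{k+1} \geq n\beta_{k+1} - \sqrt{(k+1)n}$. Combining with the coupling bound yields
\[
    \E \sum_{i=1}^k \blambda_i \;\leq\; n - \E \bmu_{k+1} \;\leq\; n(1 - \beta_{k+1}) + \sqrt{(k+1)n} \;=\; n\sum_{i=1}^k \alpha_i + \sqrt{(k+1)n},
\]
and the elementary estimate $\sqrt{(k+1)n} \leq 2\sqrt{2}\,k\sqrt{n}$ (valid for $k\geq 1$) finishes the job.

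There is no real obstacle here, because the coupling theorem does all of the combinatorial heavy lifting; once the domination $\bmu \unrhd \blambda$ is in place, the only remaining task is to show that the bottommost row $\bmu_{k+1}$ concentrates around its ``correct'' value $n\beta_{k+1}$, and this is a direct one-line consequence of the per-coordinate accuracy provided by Theorem~\ref{thm:EYD-error-main}. The crucial design choice is simply to pack all of the tail mass of $\alpha$ into a single coordinate: this both respects majorization (so the coupling applies) and keeps the auxiliary alphabet at size $k+1$, ensuring the EYD error contributes only $O(\sqrt{k/n})$ rather than $O(\sqrt{d/n})$.
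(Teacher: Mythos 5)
The crucial design choice you highlight --- packing all the tail mass $\alpha_{k+1}+\cdots+\alpha_d$ into a single coordinate --- is exactly what breaks the argument. Your $\beta$ is not a sorted distribution in general: already for $\alpha = (1/d,\dots,1/d)$ and $k=1$ you get $\beta = (1/d,\ 1-1/d,\ 0,\dots,0)$, whose sorted version is $(1-1/d,\ 1/d,\ 0,\dots)$. Theorem~\ref{thm:EYD-error-main} controls $\|\ul{\bmu}-\alpha\|_2$ only for the \emph{sorted} spectrum, so it estimates $\ul{\bmu}_{k+1}$ against the smallest nonzero sorted entry, which equals $\alpha_k$ (not the tail mass $T=\alpha_{k+1}+\cdots+\alpha_d$) whenever $T>\alpha_k$. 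Your substitution $1-\beta_{k+1}=\sum_{i=1}^k\alpha_i$ silently assumes $\beta_{k+1}=T$ is the bottom sorted coordinate, which fails in precisely this regime.

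This is not a cosmetic issue: no amount of re-sorting saves the argument, because the coupling itself has become hopelessly lossy. In the uniform example with $k=1$ and $d$ large, the Schur--Weyl process under (sorted) $\beta$ has $\bmu_1\approx n(1-1/d)$, whereas the target bound is $\E\blambda_1\leq n/d+2\sqrt{2}\sqrt{n}$. The pointwise domination $\blambda_1\leq\bmu_1$ is correct but off by nearly a factor of $d$. Once you collapse the whole tail to one letter, the resulting word has a huge longest increasing subsequence, and $\bmu_1$ runs away. The paper's choice of $\beta$ avoids this by spreading the tail mass uniformly at the \emph{flat} level $\alpha_{k+1}$ over as many coordinates as needed (with a small remainder coordinate), so $\beta$ stays sorted and the two processes $\blambda,\bmu$ stay close. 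The paper then uses Greene's theorem to split the $\beta$-word into a piece supported on $[k]$ and two uniform pieces, and applies the sharper single-coordinate bound $\E_{\SW{m}{1/r}}\bmu_1\leq m/r+2\sqrt{m}$ (Theorem~\ref{thm:k-root-n-unif}, proved via a Vershik--Kerov style recursion) to the uniform pieces. Note that this bound is genuinely stronger than what your appeal to Theorem~\ref{thm:EYD-error-main} yields for one coordinate, which is only $\E\bmu_1\leq m/r+\sqrt{rm}$; the $\sqrt{r}$ savings is essential to making the constants work out. So the fix requires both a different $\beta$ and a sharper tool than the $\ell_2$ EYD bound.
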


We remark that it is easy to \emph{lower}-bound this expectation by $\sum_{i=1}^k  \alpha_i n$ via Lemma~\ref{lem:lambda-majorize}.  We now show how to deduce Theorem~\ref{thm:top-k-eigs-main} from Lemma~\ref{lem:k-root-n}.  Then in Section~\ref{sec:proof-of-lemma} we prove the lemma.

\begin{proof}[Proof of Theorem~\ref{thm:top-k-eigs-main}.]
    Let $\bw \sim \alpha^{\otimes n}$, let $\rsk(\bw) = (\bP,\bQ)$, and let $\blambda = \sh(\bP)$, so $\blambda \sim \SW{n}{\alpha}$.  Write $\bw'$ for the string formed from~$\bw$ by deleting all letters bigger than~$k$.  Then it is a basic property of the RSK algorithm that $\rsk(\bw')$ produces the insertion tableau $\bP'$ formed from~$\bP$ by deleting all boxes with labels bigger than~$k$.  Thus $\blambda' = \sh(\bP') = \shRSK(\bw')$.  Denoting $\alpha_{[k]} = \alpha_1 + \cdots + \alpha_k$, we have $\blambda' \sim \SW{\bm}{\alpha'}$, where $\bm \sim \mathrm{Binomial}(n,\alpha_{[k]})$ and $\alpha'$ denotes $\alpha$ conditioned on the first~$k$ letters; i.e., $\alpha' = ({\alpha_i}/{\alpha_{[k]}})_{i=1}^k$.
    Now by the triangle inequality,
    \begin{equation}
        2n \cdot \E \dtvk{k}{\underline{\blambda}}{\alpha}
    = \E \sum_{i=1}^k \left\vert \blambda_i - \alpha_in\right\vert
    \leq \E \sum_{i=1}^k (\blambda_i - \blambda'_i) + \E \sum_{i=1}^k \left\vert \blambda'_i - \alpha'_i\bm\right\vert +  \sum_{i=1}^k \left\vert \alpha'_i \bm - \alpha_i n \right\vert.  \label{eqn:trunc-deduction}
    \end{equation}
    The first quantity in~\eqref{eqn:trunc-deduction} is at most $2\sqrt{2}k\sqrt{n}$, using Lemma~\ref{lem:k-root-n} and the fact that $\E[\sum_{i=1}^k \blambda'_i] = \E[\bm] = \sum_{i=1}^k \alpha_i n$.  The second quantity in~\eqref{eqn:trunc-deduction} is at most $k \sqrt{n}$ using Theorem~\ref{thm:EYD-error-main}:
    \[
         \E \sum_{i=1}^k \left\vert \blambda'_i - \alpha'_i\bm\right\vert = \E_{\bm} \bm \cdot \E_{\blambda'} \|\ul{\blambda'} - \alpha'\|_1 \leq \E_{\bm}\bm \sqrt{k}\sqrt{\E_{\blambda'} \|\ul{\blambda'} - \alpha'\|_2^2} \leq k \E_{\bm} \sqrt{\bm} \leq k\sqrt{n}.
    \]
    And the third quantity in~\eqref{eqn:trunc-deduction} is at most~$\sqrt{n}$:
    \[
        \E_{\bm} \sum_{i=1}^k \left\vert \alpha'_i \bm - \alpha_i n \right\vert = \E_{\bm} \sum_{i=1}^k \tfrac{\alpha_i}{\alpha_{[k]}} \left\vert \bm - \alpha_{[k]} n \right\vert = \E_{\bm} |\bm - \alpha_{[k]} n|\leq \stddev(\bm) \leq \sqrt{n}.
    \]
    Thus $2n \cdot \E \dtvk{k}{\underline{\blambda}}{\alpha} \leq ((2\sqrt{2}+1)k+1)\sqrt{n}$, and dividing by $2n$ completes the proof.
\end{proof}

\subsection{Proof of Lemma~\ref{lem:k-root-n}}          \label{sec:proof-of-lemma}

Our proof of Lemma~\ref{lem:k-root-n} is essentially by reduction to the case when $\alpha$ is the uniform distribution and $k = 1$.  We thus begin by analyzing the uniform distribution.

\subsubsection{The uniform distribution case}
In this subsection we will use the abbreviation $(1/d)$ for the uniform distribution $(1/d, \dots, 1/d)$ on~$[d]$.  Our goal is the following fact, which is of independent interest:

\begin{theorem} \label{thm:k-root-n-unif}
$\displaystyle \E_{\blambda \sim \SW{n}{1/d}}  \blambda_1 \leq n/d + 2\sqrt{n}$.
\end{theorem}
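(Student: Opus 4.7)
The plan is to telescope $\E\,\blambda_1$ via the Schur--Weyl growth process from the preliminaries. By the Pieri interpretation in~\eqref{eqn:pieri-interp},
\[
  \E\,\blambda_1 \;=\; \sum_{t=0}^{n-1}\E\,p_1(\blambda^{(t)}), \qquad p_1(\lambda) \;:=\; \frac{s_{\lambda+e_1}(1/d,\ldots,1/d)}{s_{\lambda}(1/d,\ldots,1/d)},
\]
where $\blambda^{(t)} \sim \SW{t}{1/d}$. Using homogeneity together with the Weyl dimension formula from~\eqref{eqn:ssyt-formula}, this transition probability has the closed form
\[
  p_1(\lambda) \;=\; \frac{1}{d}\prod_{j=2}^{d}\left(1 + \frac{1}{(\lambda_1 - \lambda_j) + (j-1)}\right) \;\ge\; \frac{1}{d},
\]
the latter inequality recovering the trivial lower bound $\E\,\blambda_1 \ge n/d$. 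The content of the theorem is therefore the upper bound on the cumulative ``excess drift'' $\sum_{t=0}^{n-1}\bigl(\E\,p_1(\blambda^{(t)}) - 1/d\bigr) \le 2\sqrt{n}$.

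The key heuristic is that the typical ``first-row gap'' $\lambda_1^{(t)} - \lambda_j^{(t)}$ for $\blambda^{(t)} \sim \SW{t}{1/d}$ is of order $\sqrt{t}$---this is consistent with the second-moment bound $\E\sum_i(\blambda_i^{(t)} - t/d)^2 \le dt$ implied by Theorem~\ref{thm:EYD-error-main}---making each product factor of the form $1 + O(1/\sqrt{t})$ and contributing an excess drift of order $1/\sqrt{t}$ per step. Since $\sum_{t=1}^{n}1/\sqrt{t} \le 2\sqrt{n}$, the desired bound follows.

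To turn this into a rigorous argument with the sharp constant $2$, I would proceed by induction on $n$: assuming $\E\,\blambda_1^{(t)} \le t/d + 2\sqrt{t}$ for all $t < n$, the inductive step reduces to showing $\E\,p_1(\blambda^{(n-1)}) \le 1/d + 2(\sqrt{n} - \sqrt{n-1})$. Bounding the product via $\prod_j(1+\epsilon_j) - 1 \le \bigl(\sum_j \epsilon_j\bigr)\prod_j(1+\epsilon_j)$ together with the crude estimate $\prod_j(1+\epsilon_j) \le d$ (i.e., $p_1 \le 1$) reduces matters to controlling $\E\,\sum_{j=2}^{d}1/((\blambda_1 - \blambda_j) + (j-1))$. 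The main obstacle is that this last quantity can be as large as $H_{d-1} \approx \log d$ when $\lambda$ is near-rectangular, so the analysis must show that $\SW{n-1}{1/d}$ places only exponentially small mass on such configurations---perhaps via a Chernoff-type tail estimate extracted from the inductive hypothesis and Lemma~\ref{lem:second-moment}. An alternative route is to compare $\SW{n}{1/d}$ with the Plancherel measure $\Planch{n}$ using the explicit density ratio $\prod_{(i,j)\in\lambda}(1 + (j-i)/d)$ (derivable from the hook-content formula) and invoke the classical Vershik--Kerov--Logan--Shepp bound $\E_{\Planch{n}}\blambda_1 \le 2\sqrt{n}$, which matches the ``$d\to\infty$'' limit of the present theorem.
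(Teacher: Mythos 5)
Your setup is exactly the paper's: telescope $\E\,\blambda_1$ via the Schur--Weyl growth process and study the per-step drift $\E\,p_1(\blambda^{(t)})$. But the proposal is not a proof. You arrive at the reduction
\[
  \E\,p_1(\blambda^{(n-1)}) - \tfrac1d \;\le\; \E \sum_{j=2}^{d} \frac{1}{(\blambda_1 - \blambda_j) + (j-1)},
\]
you observe the right-hand side can be as large as $H_{d-1}$ when $\blambda$ is near-rectangular, and then you stop: you say the analysis ``must show'' the near-rectangular mass is exponentially small, ``perhaps via a Chernoff-type tail,'' and you offer a Plancherel comparison as an ``alternative route.'' Neither is carried out, and neither is straightforward. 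Indeed, the reduction itself is the problem: for small $t$ the partition \emph{is} near-rectangular with high probability (e.g.\ $p_1(\emptyset)=1$), so the bound $\E\sum_j \epsilon_j \le 2(\sqrt{n}-\sqrt{n-1})$ fails badly for small $n$; and tail bounds of the sort you'd need are not available from Lemma~\ref{lem:second-moment} alone. The Plancherel route is also blocked: the Radon--Nikodym derivative $\prod_{(i,j)\in\lambda}(1+(j-i)/d)$ is unbounded, there is no stochastic domination in the relevant direction, and the raw Vershik--Kerov--Logan--Shepp bound $\E\,\blambda_1 \le 2\sqrt{n}$ is simply false under $\SW{n}{1/d}$ for small $d$.

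What the paper does instead, after the same telescoping, is the key idea you are missing. Writing $\delta_m = \E\,p_1(\blambda^{(m-1)})$, it applies Cauchy--Schwarz to get $\delta_m^2 \le \E\bigl[p_1(\blambda^{(m-1)})^2\bigr]$, expands this expectation as a sum over $\lambda \vdash m-1$ weighted by $\dim(\lambda)\,s_\lambda(1/d)$, and then uses the identity
\[
  \dim(\lambda)\,s_\lambda(1/d)\cdot\left(\frac{s_{\lambda+e_1}(1/d)}{s_\lambda(1/d)}\right)^{\!2}
  \;=\; \dim(\lambda+e_1)\,s_{\lambda+e_1}(1/d)\cdot\frac{d+\lambda_1}{dm},
\]
which follows from the first equality in \eqref{eqn:ssyt-formula}. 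This is a change of measure: the sum over $\lambda \vdash m-1$ is re-expressed as $\E_{\blambda \sim \SW{m}{1/d}}[(d+\blambda_1)/(dm)] = (d+\delta_1+\cdots+\delta_m)/(dm)$. This yields the self-referential recurrence $\delta_m \le \frac{1}{\sqrt{dm}}\sqrt{d+\delta_1+\cdots+\delta_m}$, which is then solved by induction to give $\delta_m \le \frac1d + \frac{1}{\sqrt m}$, and summing gives the theorem. The ``excess drift $\approx 1/\sqrt m$ per step'' heuristic you state is exactly right, but the mechanism that makes it rigorous is this quadratic self-improving recurrence, not the direct product-expansion bound you propose. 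Without that step, your argument has a genuine gap.

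Your instinct about the Plancherel connection is correct in spirit — the paper explicitly notes that the $d\to\infty$ limit recovers $\E\,\blambda_1 \le 2\sqrt n$, and that the proof extends Vershik--Kerov's original argument — but the extension is through the recurrence, not through a density-ratio comparison.
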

We remark that Theorem~\ref{thm:k-root-n-unif} implies Lemma~\ref{lem:k-root-n} (with a slightly better constant) in the case of $\alpha = (1/d, \dots, 1/d)$, since of course $\blambda_i \leq \blambda_1$ for all $i \in [k]$.  Also, by taking $d \to \infty$ we recover the well known fact that $\E \blambda_1 \leq 2\sqrt{n}$ when $\blambda$ has the Plancherel distribution.  Indeed, our proof of Theorem~\ref{thm:k-root-n-unif} extends the original proof of this fact by Vershik and Kerov~\cite{VK85} (cf.~the exposition in~\cite{Rom14}).
\begin{proof}
    Consider the Schur--Weyl growth process under the uniform distribution $(1/d, \dots, 1/d)$ on~$[d]$.  For $m \geq 1$ we define
     \[
        \delta_m = \E[\blambda_1^{(m)} - \blambda_1^{(m-1)}] = \Pr[\text{the $m$th box enters into the $1$st row}] = \E_{\blambda \sim \SW{m-1}{1/d}} \frac{s_{\blambda+e_1}(1/d)}{s_{\blambda}(1/d)},
     \]
     where we used~\eqref{eqn:pieri-interp}. By Cauchy--Schwarz and identity~\eqref{eqn:sw-probs},
    \begin{align}
        \delta_m^2 &\leq \E_{\blambda \sim \SW{m-1}{1/d}} \left(\frac{s_{\blambda+e_1}(1/d)}{s_{\blambda}(1/d)}\right)^2
    = \sum_{\lambda \vdash m-1} \dim(\lambda) s_{\lambda}(1/d) \cdot \left(\frac{s_{\lambda+e_1}(1/d)}{s_{\lambda}(1/d)}\right)^2
    \nonumber\\
    &= \sum_{\lambda \vdash m-1} \dim(\lambda) s_{\lambda+e_1}(1/d) \cdot \left(\frac{s_{\lambda+e_1}(1/d)}{s_{\lambda}(1/d)}\right) =
\sum_{\lambda \vdash m-1} \dim(\lambda+e_1) s_{\lambda+e_1}(1/d) \cdot \left(\frac{d + \lambda_1}{d m}\right) \label{eqn:switch}\\
&\hspace{7.14cm}\leq \E_{\blambda \sim \SW{m}{1/d}}\left(\frac{d + \blambda_1}{d m}\right)
= \left(\frac{d + \delta_1 + \ldots + \delta_m}{d m}\right), \nonumber
    \end{align}
    where the ratio in~\eqref{eqn:switch} was computed using the first formula of~\eqref{eqn:ssyt-formula} (and the homogeneity of Schur polynomials).  Thus we have established the following recurrence:
    \begin{equation}            \label{eqn:sw-recur}
        \delta_m  \leq \frac{1}{\sqrt{dm}}\sqrt{d + \delta_1 + \cdots + \delta_m}.
    \end{equation}
    We will now show by induction that $\delta_m \leq \frac{1}{d} + \frac{1}{\sqrt{m}}$ for all $m \geq 1$.  Note that this will complete the proof, by summing over $m \in [n]$.  The base case, $m = 1$, is immediate since $\delta_1 = 1$.  For general $m > 1$, think of $\delta_1, \dots, \delta_{m-1}$ as fixed and $\delta_m$ as variable.  Now if~$\delta_m$ satisfies~\eqref{eqn:sw-recur}, it is bounded above by the (positive) solution $\delta^*$ of
    \[
        \delta = \frac{1}{\sqrt{dm}}\sqrt{c+\delta}, \qquad \text{where $c = d + \delta_1 + \cdots + \delta_{m-1}$.}
    \]
    Note that if $\delta > 0$ satisfies
    \begin{equation} \label{eqn:if-delta}
        \delta \geq \frac{1}{\sqrt{dm}}\sqrt{c+\delta}
    \end{equation}
    then it must be that $\delta \geq \delta^* \geq \delta_m$.  Thus it suffices to show that~\eqref{eqn:if-delta} holds for $\delta = \frac{1}{d} + \frac{1}{\sqrt{m}}$.  But indeed,
    \begin{multline*}
           \frac{1}{\sqrt{dm}}\sqrt{c+\frac{1}{d} + \frac{1}{\sqrt{m}}}
        = \frac{1}{\sqrt{dm}}\sqrt{d + \delta_1 + \cdots + \delta_{m-1} +\frac{1}{d} + \frac{1}{\sqrt{m}}}
    \\ \leq \frac{1}{\sqrt{dm}}\sqrt{d + \sum_{i=1}^m \left(\frac1d + \frac{1}{\sqrt{i}}\right)}
        \leq \frac{1}{\sqrt{dm}}\sqrt{d + \frac{m}{d} + 2\sqrt{m}}
        = \frac{1}{\sqrt{dm}}\left(\sqrt{d} + \sqrt{\frac{m}{d}}\right)
        = \frac1d + \frac{1}{\sqrt{m}},
    \end{multline*}
    where the first inequality used induction.  The proof is complete.
\end{proof}

\subsubsection{Reduction to the uniform case}
\begin{proof}[Proof of Lemma~\ref{lem:k-root-n}]
    Given the sorted distribution $\alpha$ on $[d]$, let $\beta$ be the sorted probability distribution on~$[d]$ defined, for an appropriate value of~$m$, as
    \begin{equation*}
    \beta_1 = \alpha_1,  \ldots, \beta_k = \alpha_k,\quad \beta_{k+1} = \ldots = \beta_{m} = \alpha_{k+1} > \beta_{m+1} \geq 0,
    \quad \beta_{m+2} = \ldots = \beta_{d} = 0.
    \end{equation*}
    In other words, $\beta$ agrees with $\alpha$ on the first~$k$ letters and is otherwise uniform, except for possibly a small ``bump'' at $\beta_{m+1}$.  By construction we have $\beta \succ \alpha$. Thus it follows from our coupling result, Theorem~\ref{thm:coupling-main}, that
    \[
        \E_{\blambda \sim \SWdist{n}{\alpha}}  \sum_{i=1}^k \blambda_i
            \leq \E_{\bmu \sim \SWdist{n}{\beta}}  \sum_{i=1}^k \bmu_i,
    \]
    and hence it suffices to prove the lemma for $\beta$ in place of~$\alpha$.  Observe that $\beta$ can be expressed as a mixture
    \begin{equation}        \label{eqn:mixture}
        \beta = p_1 \cdot \calD_1 + p_2 \cdot \calD_2 + p_3 \cdot \calD_3,
    \end{equation}
    of a certain distribution $\calD_1$ supported on~$[k]$, the uniform distribution $\calD_2$ on~$[m]$, and the uniform distribution $\calD_3$ on~$[m+1]$.  We may therefore think of a draw $\bmu \sim \SWdist{n}{\beta}$ occurring as follows.  First, $[n]$ is partitioned into three subsets $\bI_1, \bI_2, \bI_3$ by including each $i \in [n]$ into $\bI_j$ independently with probability~$p_j$. Next we draw strings $\bw^{(j)} \sim \calD_j^{\otimes \bI_j}$ independently for $j \in [3]$.  Finally, we let $\bw = (\bw^{(1)}, \bw^{(2)}, \bw^{(3)}) \in [d]^n$ be the natural composite string and define $\bmu = \shRSK(\bw)$.  Let us also write $\bmu^{(j)} = \shRSK(\bw^{(j)})$ for $j \in [3]$.  We now claim that
    \[
        \sum_{i=1}^k \bmu_i \leq \sum_{i=1}^k \bmu_i^{(1)}  + \sum_{i=1}^k \bmu_i^{(2)}   + \sum_{i=1}^k \bmu_i^{(3)}
    \]
    always holds. Indeed, this follows from Greene's Theorem: the left-hand side is $|\bs|$, where $\bs \in [d]^n$ is a maximum-length disjoint union of~$k$ increasing subsequences in~$\bw$; the projection of $\bs^{(j)}$ onto coordinates~$\bI_j$ is a disjoint union of~$k$ increasing subsequences in $\bw^{(j)}$ and hence the right-hand side is at least $|\bs^{(1)}| + |\bs^{(2)}|  + |\bs^{(3)}| = |\bs|$.  Thus to complete the proof of the lemma, it suffices to show
    \begin{equation}    \label{eqn:complete-me}
        \E \sum_{i=1}^k \bmu_i^{(1)}  + \E \sum_{i=1}^k \bmu_i^{(2)}   + \E \sum_{i=1}^k \bmu_i^{(3)} \leq \sum_{i=1}^k \alpha_i n + 2\sqrt{2}\,k\sqrt{n}.
    \end{equation}
    Since $\calD_1$ is supported on~$[k]$, the first expectation above is equal to $\E[|\bw^{(1)}|] = p_1 n$.  By (the remark just after) Theorem~\ref{thm:k-root-n-unif}, we can bound the second expectation as
    \[
        \E \sum_{i=1}^k \bmu_i^{(2)} \leq k \E \bmu_1^{(2)} \leq k \E|\bw^{(2)}|/m + 2 k\E\sqrt{|\bw^{(2)}|} \leq k(p_2 n)/m + 2k\sqrt{p_2 n}.
    \]
    Similarly the third expectation in~\eqref{eqn:complete-me} is bounded by $k(p_3 n)/(m+1) + 2k\sqrt{p_3 n}$. Using $\sqrt{p_2} + \sqrt{p_3} \leq \sqrt{2}$, we have upper-bounded the left-hand side of~\eqref{eqn:complete-me} by
    \[
         (p_1 + p_2 \tfrac{k}{m} + p_3 \tfrac{k}{m+1}) n + 2\sqrt{2}\,k\sqrt{n} = \left(\sum_{i=1}^k \beta_i\right) n + 2\sqrt{2}\,k\sqrt{n},
    \]
    as required.
\end{proof}

\section{Principal component analysis}\label{sec:pca}

In this section we analyze a straightforward modification to Keyl's tomography algorithm
that allows us to perform principal component analysis on an unknown density matrix $\rho \in \C^{d \times d}$.
The PCA algorithm is the same as Keyl's algorithm, except that having measured $\blambda$ and $\bU$, it outputs the rank-$k$ matrix $\bU \diag^{(k)}(\underline{\blambda}) \bU^\dagger$ rather than the potentially full-rank matrix $\bU \diag(\underline{\blambda}) \bU^\dagger$.
Here we recall the notation $\diag^{(k)}(\underline{\blambda})$ for the $d\times d$ matrix $\diag(\ulblam_1, \dots, \ulblam_k, 0, \dots, 0)$.

Before giving the proof of Theorem~\ref{thm:pca-error-main}, let us show why the case of Frobenius-norm PCA appears to be less interesting than the case of trace-distance PCA.
The goal for Frobenius PCA would be to output a rank-$k$ matrix $\widetilde{\rho}$ satisfying
\begin{equation*}
	\|\widetilde{\rho}- \rho\|_F
	\leq \sqrt{\alpha_{k+1}^2 + \ldots + \alpha_{d}^2} + \eps,
\end{equation*}
with high probability, while trying to minimize the number of copies~$n$ as a function of $k$, $d$, and $\eps$.
However, even when $\rho$ is guaranteed to be of rank~$1$, it is likely that any algorithm will require $n =\Omega(d/\eps^2)$ copies to output an $\eps$-accurate rank-$1$ approximator~$\widetilde{\rho}$.
This is because such an approximator will satisfy $\|\widetilde{\rho}-\rho\|_1 \leq \sqrt{2}\cdot \|\widetilde{\rho}-\rho\|_F = O(\eps)$,
and it is likely that $n = \Omega(d/\eps^2)$  copies of~$\rho$ are required for such a guarantee (see, for example, the lower bounds of~\cite{HHJ+15},
which show that $n = \Omega(\tfrac{d}{\eps^2 \log(d/\eps)})$ copies are necessary for tomography of rank-$1$ states.).
Thus, even in the simplest case of rank-$1$ PCA of rank-$1$ states, we probably cannot improve on the $n = O(d/\eps^2)$ copy complexity for full tomography given by Corollary~\ref{cor:tomography-main}.

Now we prove Theorem~\ref{thm:pca-error-main}.  We note that the proof shares many of its steps with the proof of Theorem~\ref{thm:tomography-error-main}.
\begin{proof}[Proof of Theorem~\ref{thm:pca-error-main}]
Throughout the proof we assume $\blambda \sim \SW{n}{\alpha}$ and $\bU \sim \keyl{\blambda}{\rho}$.
We write $\bR$ for the lower-right $(d-k) \times (d-k)$ submatrix of $\bU^\dagger \rho \bU$
and we write $\bGamma = \bU^\dagger \rho \bU - \bR$.  Then
\begin{equation}
\E_{\blambda, \bU} \| \bU \diag^{(k)}(\ulblam) \bU^\dagger -\rho\|_1
= \E_{\blambda, \bU} \|\diag^{(k)}(\ulblam) - \bU^\dagger \rho \bU\|_1
\leq \E_{\blambda, \bU} \|\diag^{(k)}(\ulblam) - \bGamma\|_1 + \E_{\blambda, \bU} \| \bR \|_1.\label{eqn:pca1}
\end{equation}
We can upper-bound the first term in~\eqref{eqn:pca1} using
\begin{equation}\label{eqn:never-gonna-reference-this-prelude}
\E_{\blambda, \bU} \|\diag^{(k)}(\ulblam) - \bGamma\|_1
\leq \sqrt{2k} \E_{\blambda, \bU} \|\diag^{(k)}(\ulblam) - \bGamma\|_F
\leq \sqrt{2k} \E_{\blambda, \bU} \|\diag(\ulblam) - \bU^\dagger \rho \bU\|_F
\leq \sqrt{\frac{8kd}{n}}.
\end{equation}
The first inequality is Cauchy--Schwarz together with the fact that $\rank(\diag^{(k)}(\ulblam) - \bGamma) \leq 2k$ (since the matrix is nonzero only in its first~$k$ rows and columns).  The second inequality uses that $\diag(\ulblam) - \bU^\dagger \rho \bU$ is formed from $\diag^{(k)}(\ulblam) - \bGamma$ by adding a matrix, $\diag(\ulblam)-\diag^{(k)}(\ulblam)-\bR$, of disjoint support; this can only increase the squared Frobenius norm (sum of squares of entries). Finally, the third inequality uses Theorem~\ref{thm:tomography-error-main}.
To analyze the second term in~\eqref{eqn:pca1}, we note that $\bR$ is a principal submatrix of $\bU^\dagger \rho \bU$, and so it is positive semidefinite.
As a result,
\begin{equation}\label{eqn:never-gonna-reference-this}
\E_{\blambda, \bU} \| \bR \|_1
= \E_{\blambda, \bU} \tr(\bR)
= 1- \E_{\blambda, \bU} \tr(\bGamma).
\end{equation}
By Corollary~\ref{cor:diagonal},
\begin{multline}
\E_{\blambda, \bU} \tr(\bGamma)
= \E_{\blambda} \sum_{i=1}^k \E_{\bU}(\bU^\dagger \rho \bU)_{i,i} \geq \E_{\blambda} \sum_{i=1}^k \frac{\sch_{\blambda+e_i}(\alpha)}{\sch_{\blambda}(\alpha)}
= \E_{\blambda} \sum_{i=1}^k \frac{s_{\blambda+e_i}(\alpha)}{\phantom{{}_{+e_i}}s_{\blambda}(\alpha)} \frac{\phantom{{}_{+e_i}}s_{\blambda}(1, \dots, 1)}{s_{\blambda+e_i}(1, \dots, 1)} \\
\geq \E_{\blambda} \sum_{i=1}^k \frac{s_{\blambda+e_i}(\alpha)}{\phantom{{}_{+e_i}}s_{\blambda}(\alpha)}\left(2-\frac{s_{\blambda+e_i}(1, \dots, 1)}{\phantom{{}_{+e_i}}s_{\blambda}(1, \dots, 1)}\right) =
2\E_{\blambda} \sum_{i=1}^k  \frac{s_{\blambda+e_i}(\alpha)}{\phantom{{}_{+e_i}}s_{\blambda}(\alpha)}- \E_{\blambda} \sum_{i=1}^k \frac{s_{\blambda+e_i}(\alpha)}{\phantom{{}_{+e_i}}s_{\blambda}(\alpha)}\frac{s_{\blambda+e_i}(1, \dots, 1)}{\phantom{{}_{+e_i}}s_{\blambda}(1, \dots, 1)}, \label{eqn:pca2}
\end{multline}
where we used $r \geq 2-\tfrac{1}{r}$ for $r > 0$. The first term here is lower-bounded using Proposition~\ref{prop:random-walk-majorization}:
\begin{equation}    \label{eqn:pca3}
    2\E_{\blambda} \sum_{i=1}^k \frac{s_{\blambda+e_i}(\alpha)}{\phantom{{}_{+e_i}}s_{\blambda}(\alpha)}
    \geq 2 \sum_{i=1}^k \alpha_i.
\end{equation}
As for the second term in~\eqref{eqn:pca2}, we use~\eqref{eqn:sw-probs} and the first formula in~\eqref{eqn:ssyt-formula} to compute
\begin{align}
    \E_{\blambda} \sum_{i=1}^k \frac{s_{\blambda+e_i}(\alpha)}{\phantom{{}_{+e_i}}s_{\blambda}(\alpha)}\frac{s_{\blambda+e_i}(1, \dots, 1)}{\phantom{{}_{+e_i}}s_{\blambda}(1, \dots, 1)}
&= \sum_{i=1}^k
\sum_{\lambda \vdash n} \symdim{\lambda}  s_{\lambda}(\alpha) \cdot \frac{s_{\lambda+e_i}(\alpha)}{\phantom{{}_{+e_i}}s_{\lambda}(\alpha)} \frac{\symdim{\lambda+e_i}(d+\lambda_i - i+1)}{\symdim{\lambda}(n+1)} \nonumber\\
&= \sum_{i=1}^k
\sum_{\lambda \vdash n} \symdim{\lambda+e_i} s_{\lambda+e_i}(\alpha)\cdot \frac{(d-i+\lambda_i+1)}{n+1}\nonumber\\
&\leq \sum_{i=1}^k \E_{\blambda' \sim \SW{n+1}{\alpha}} \frac{(d-i+\blambda'_i)}{n+1} \tag{by $\eqref{eqn:sw-probs}$ again}\\
&\leq \frac{1}{n+1} \cdot \E_{\blambda' \sim \SW{n+1}{\alpha}} \sum_{i=1}^k \blambda'_i + \frac{kd}{n} \nonumber\\
&\leq \sum_{i=1}^k \alpha_i + \frac{2\sqrt{2}k}{\sqrt{n}} + \frac{kd}{n},\label{eqn:pca4}
\end{align}
where the last step is by Lemma~\ref{lem:k-root-n}.
Combining \eqref{eqn:pca1}--\eqref{eqn:pca4} we get
\begin{equation*}
    \E_{\blambda, \bU} \| \bU \diag^{(k)}(\ulblam) \bU^\dagger -\rho\|_1
    \leq \left(1-\sum_{i=1}^k \alpha_i\right) + \sqrt{\frac{8kd}{n}} + \frac{2\sqrt{2}k}{\sqrt{n}} + \frac{kd}{n}
    \leq \sum_{i=k+1}^d \alpha_i + \sqrt{\frac{32kd}{n}} + \frac{kd}{n},
\end{equation*}
where the second inequality used $k \leq \sqrt{kd}$.  Finally, as the expectation is also trivially upper-bounded by~$2$, we may use $6\sqrt{r} \geq \min(2, \sqrt{32r} + r)$ (which holds for all $r \geq 0$) to conclude
\[
    \E_{\blambda, \bU} \| \bU \diag^{(k)}(\ulblam) \bU^\dagger -\rho\|_1
    \leq \sum_{i=k+1}^d \alpha_i + 6\sqrt{\frac{kd}{n}}. \qedhere
\]
\end{proof}

\section{Majorization for the RSK algorithm}                \label{sec:coupling}

In this section we prove Theorem~\ref{thm:coupling-main}.  The key to the proof will be the following strengthened version of the $d = 2$ case, which we believe is of independent interest.
\begin{theorem}                                     \label{thm:coupling-ssLIS}
    Let $0 \leq p, q \leq 1$ satisfy $|q - \frac12| \geq |p - \frac12|$; in other words, the $q$-biased probability distribution $(q,1-q)$ on $\{1,2\}$ is ``more extreme'' than the $p$-biased distribution $(p,1-p)$. 
    Then for any $n \in \N$ there is a coupling $(\bw, \bx)$ of the $p$-biased distribution on $\{1,2\}^n$ and the $q$-biased distribution on $\{1,2\}^n$ such that for all $1 \leq i \leq j \leq n$ we have $\LIS(\bx[i \dd j]) \geq \LIS(\bw[i \dd j])$ always.
\end{theorem}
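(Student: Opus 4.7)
My plan has two phases: a symmetry reduction to the case $q \ge p$, and the construction of a coupling in that case, which is the heart of the argument.

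For the reduction, I use the flip-reverse involution $*\colon \{1,2\}^n \to \{1,2\}^n$ defined by $w^*_i := 3 - w_{n-i+1}$. It maps the $p$-biased product distribution to the $(1-p)$-biased one, and preserves substring LIS lengths via the joint re-indexing $k \mapsto n-k+1$ and letter flip $\ell \mapsto 3 - \ell$: a nondecreasing subsequence $w_{j_1} \le \dots \le w_{j_L}$ of $w[a \dd b]$ corresponds bijectively to the nondecreasing subsequence $3 - w_{j_L} \le \dots \le 3 - w_{j_1}$ of $w^*[n-b+1 \dd n-a+1]$. Since the hypothesis $|q - \tfrac12| \ge |p - \tfrac12|$ is invariant under $(p, q) \mapsto (1-p, 1-q)$, and this swaps ``$q \ge p$'' with ``$q \le p$'', we may assume $q \ge p$ after applying $*$ to both $\bw$ and $\bx$ if necessary.

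For $q \ge p$, my first instinct is the pointwise monotone coupling: draw $u_i$ iid uniform, set $\bw_i = 1$ iff $u_i \le p$ and $\bx_i = 1$ iff $u_i \le q$, which gives $\bx_i \le \bw_i$ coordinatewise. However, I realize this is insufficient: the event $(\bw, \bx) = (22, 21)$ occurs with positive probability yet violates $\LIS(\bx) \ge \LIS(\bw)$ on $[1 \dd 2]$. Independent per-position flipping of $\bw$'s 2's to $\bx$'s 1's can destroy a nondecreasing $22$ pair. A ``leftmost-first'' flip rule would preserve substring LIS (flipping a 2 at the leftmost position naturally extends the 1-block of any $1^s 2^t$ ascending pattern), but purely deterministic leftmost-first flipping is incompatible with the iid $q$-biased marginal of $\bx$ for $n \ge 3$: for example, with $p = \tfrac12$, $q = 0.7$, $n = 3$, the $q$-biased mass of $\bx = 211$ is $0.147$, which already exceeds the $p$-biased mass of $\bw = 211$, the only source for it under the leftmost rule.

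So the coupling must combine leftmost-biased flip logic with carefully calibrated auxiliary randomness. The main obstacle is simultaneously enforcing the iid marginal on $\bx$ and the substring LIS inequality. I expect this requires either an inductive construction on $n$, extending a ssLIS coupling on length $n-1$ to length $n$ while preserving both marginals, or a non-constructive application of Strassen's theorem on the substring-LIS partial order: it suffices to verify that $\Pr_q[U] \ge \Pr_p[U]$ for every up-set $U$ in this order, a monotonicity statement on product measures that should ultimately reduce, via a suitable symmetrization, to the scalar inequality $p(1-p) \ge q(1-q)$ implied by $|q - \tfrac12| \ge |p - \tfrac12|$.
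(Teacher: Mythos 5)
Your proposal contains correct and thoughtful preliminary observations — the flip-reverse symmetry reducing to $q \geq p$, the demonstration that naive monotone coupling fails on $(\bw, \bx) = (22, 21)$, and the mass-counting argument showing deterministic leftmost-first flipping breaks the $q$-biased marginal for $n = 3$ — but it stops precisely where the real work begins. You never construct the coupling. The last paragraph merely lists two plausible directions (induction on $n$, or Strassen's theorem on the substring-LIS partial order) and guesses that the Strassen route ``should ultimately reduce, via a suitable symmetrization, to the scalar inequality $p(1-p) \geq q(1-q)$''; no such reduction is exhibited, and it is far from clear that one exists in a form simpler than what the paper actually does.

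The paper's proof has genuine combinatorial content that your sketch does not anticipate. It decomposes the biased product measure as a mixture over the ``symmetric Hamming weight classes'' $\SymHam{\calA}{n}{k} = \Ham{\calA}{n}{k} \cup \Ham{\calA}{n}{n-k}$, reduces the coupling problem (via the RSK recording tableau, which is uniform on $\SYTleq{n}{k}$ when the word is uniform on such a class) to building $\ssLIS$-respecting couplings between uniform distributions on $\SYTleq{n}{\lambda_2}$ and $\SYTleq{n}{\lambda_2 - 1}$, and does this via an explicit bijection $f$ on Dyck paths (Theorem~\ref{thm:dyck-bij}) built from ``hinged ranges'' whose $\ssLIS$-preservation is verified by an induction over \beheadings. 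The only place the extremeness hypothesis $|q - \tfrac12| \geq |p - \tfrac12|$ enters is the elementary CDF inequality $L_q(\ell) \geq L_p(\ell)$ on symmetric Hamming classes, which is then used to stitch the class-level couplings together by a standard inverse-CDF argument. None of this — the Hamming-class decomposition, the reduction to recording tableaux, the Dyck-path/hinged-range bijection, the \beheading induction — appears in your proposal, so the central step of the theorem is missing.
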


We now show how to prove Theorem~\ref{thm:coupling-main} given Theorem~\ref{thm:coupling-ssLIS}.  Then in the following subsections we will prove Theorem~\ref{thm:coupling-ssLIS}.

\begin{proof}[Proof of Theorem~\ref{thm:coupling-main} given Theorem~\ref{thm:coupling-ssLIS}.]  A classic result of Muirhead~\cite{Mui02} (see also~\cite[B.1~Lemma]{MOA11}) says that $\beta \succ \alpha$ implies there is a sequence $\beta = \gamma_0 \succ \gamma_1 \succ \cdots \succ \gamma_t = \alpha$ such $\gamma_i$ and $\gamma_{i+1}$ differ in at most~$2$ coordinates.  Since the $\unrhd$ relation is transitive, by composing couplings it suffices to assume that $\alpha$ and $\beta$ themselves differ in at most two coordinates. Since the Schur--Weyl distribution is symmetric with respect to permutations of~$[d]$, we may assume that these two coordinates are~$1$ and~$2$.  Thus we may assume $\alpha = (\alpha_1, \alpha_2, \beta_3, \beta_4, \dots, \beta_d)$, where $\alpha_1 + \alpha_2 = \beta_1 + \beta_2$ and $\alpha_1, \alpha_2$ are between $\beta_1,\beta_2$.

We now define the coupling $(\blambda, \bmu)$ as follows:  We first choose a string $\bz \in (\{\ast\} \cup \{3, 4, \dots, d\})^n$ according to the product distribution in which symbol~$j$ has probability $\beta_j$ for $j \geq 3$ and symbol~$\ast$ has the remaining probability $\beta_1+\beta_2$.  Let $\bn_\ast$ denote the number of $\ast$'s in~$\bz$.  Next, we use Theorem~\ref{thm:coupling-ssLIS} to choose coupled strings $(\bw,\bx)$ with the $p$-biased distribution on $\{1,2\}^{\bn_\ast}$ and the $q$-biased distribution on $\{1,2\}^{\bn_\ast}$ (respectively), where $p = \frac{\alpha_1}{\beta_1 + \beta_2}$ and $q = \frac{\beta_1}{\beta_1 + \beta_2}$.  Note indeed that $|q - \frac12| \geq |p - \frac12|$, and hence $\LIS(\bx[i \dd j]) \geq \LIS(\bw[i \dd j])$ for all $1 \leq i \leq \bn_{\ast}$.  Now let ``$\bz \cup \bw$'' denote the string in~$[d]^n$ obtained by filling in the $\ast$'s in $\bz$ with the symbols from $\bw$, in the natural left-to-right order; similarly define ``$\bz \cup \bx$''.  Note that $\bz \cup \bw$ is distributed according to the product distribution $\alpha^{\otimes n}$ and likewise for $\bz \cup \bx$ and $\beta^{\otimes n}$.  Our final coupling is now obtained by taking $\blambda = \shRSK(\bz \cup \bw)$ and $\bmu = \shRSK(\bz \cup \bx)$. We need to show that $\bmu \unrhd \blambda$ always.

By Greene's Theorem, it suffices to show that if $s_1, \dots, s_k$ are disjoint increasing subsequences in $\bz \cup \bw$ of total length~$S$, we can find~$k$ disjoint increasing subsequences $s'_1, \dots, s'_k$ in $\bz \cup \bx$ of total length at least~$S$.   We first dispose of some simple cases. If none of $s_1, \dots, s_k$ contains any $1$'s or~$2$'s, then we may take $s'_i = s_i$ for $i \in [k]$, since these subsequences all still appear in $\bz \cup \bx$. The case when exactly one of $s_1, \dots, s_k$ contains any $1$'s or $2$'s is also easy.  Without loss of generality, say that $s_k$ is the only subsequence containing $1$'s and $2$'s.  We may partition it as $(t,u)$, where $t$ is a subsequence of $\bw$ and $u$ is a subsequence of the non-$\ast$'s in~$\bz$ that follow $\bw$.  Now let $t'$ be the longest increasing subsequence in~$\bx$.  As $t$ is an increasing subsequence of $\bw$, we know that $t'$ is at least as long as $t$.  Further, $(t', u)$ is an increasing subsequence in $\bz \cup \bx$.  Thus we may take $s'_i = s_i$ for $i < k$, and $s'_k = (t',u)$.

We now come to the main case, when at least two of $s_1, \dots, s_k$ contain $1$'s and/or $2$'s.  Let's first look at the position $j \in [n]$ of the rightmost~$1$ or~$2$ among $s_1, \dots, s_k$.  Without loss of generality, assume it occurs in~$s_k$.  Next, look at the position $i \in [n]$ of the rightmost~$1$ or~$2$ among $s_1, \dots, s_{k-1}$.  Without loss of generality, assume it occurs in $s_{k-1}$.  We will now modify the subsequences $s_1, \dots, s_k$ as follows:
\begin{itemize}
    \item all $1$'s and $2$'s are deleted from $s_1, \dots, s_{k-2}$ (note that these all occur prior to position~$i$);
    \item $s_{k-1}$ is changed to consist of all the $2$'s within $(\bz \cup \bw)[1 \dd i]$;
    \item the portion of $s_{k}$ to the right of position~$i$ is unchanged, but the preceding portion is changed to consist of all the $1$'s within $(\bz \cup \bw)[1 \dd i]$.
\end{itemize}
It is easy to see that the new $s_1, \dots, s_k$ remain disjoint subsequences of $\bz \cup \bw$, with total length at least~$S$.  We may also assume that the portion of $s_k$ between positions $i+1$ and $j$ consists of a longest increasing subsequence of~$\bw$.

Since the subsequences $s_1, \dots, s_{k-2}$ don't contain any $1$'s or $2$'s, they still appear in $\bz \cup \bx$, and we may take these as our $s'_1, \dots, s'_{k-2}$.  We will also define $s'_{k-1}$ to consist of all $2$'s within $(\bz \cup \bx)[1 \dd i]$. Finally, we will define $s'_k$ to consist of all $1$'s within $(\bz \cup \bz)[1 \dd i]$, followed by the longest increasing subsequence of $\bx$ occurring within positions $(i+1) \dd j$ in $\bz \cup \bx$, followed by the portion of $s_k$ to the right of position~$j$ (which does not contain any $1$'s or $2$'s and hence is still in $\bz \cup \bx$).  It is clear that $s'_1, \dots, s'_k$ are indeed disjoint increasing subsequences of $\bz \cup \bx$.  Their total length is the sum of four quantities:
\begin{itemize}
    \item the total length of $s_1, \dots, s_{k-2}$;
    \item the total number of $1$'s and $2$'s within $(\bz \cup \bx)[1 \dd i]$;
    \item the length of the longest increasing subsequence of $\bx$ occurring within positions $(i+1) \dd j$ in $\bz \cup \bx$;
    \item the length of the portion of $s_k$ to the right of position~$j$.
\end{itemize}
By the coupling property of $(\bw, \bx)$, the third quantity above is at least the length of the longest increasing subsequence of $\bw$ occurring within positions $(i+1) \dd j$ in $\bz \cup \bw$.   But this precisely shows that the total length of $s'_1, \dots, s'_k$ is at least that of $s_1, \dots, s_k$, as desired.
\end{proof}

\subsection{Substring-LIS-dominance: RSK and Dyck paths}
In this subsection we make some preparatory definitions and observations toward proving Theorem~\ref{thm:coupling-ssLIS}.  We begin by codifying the key property therein.

\begin{definition}
    Let $w, w' \in \calA^n$ be strings of equal length.  We say $w'$ \emph{substring-LIS-dominates}~$w$, notated $w' \ssLIS w$, if $\LIS(w'[i\dd j]) \geq \LIS(w[i\dd j])$ for all $1 \leq i \leq j \leq n$.  (Thus the coupling in Theorem~\ref{thm:coupling-ssLIS} satisfies $\bw \ssLIS \bv$ always.)  The relation $\ssLIS$ is reflexive and transitive.  If we have the substring-LIS-dominance condition just for $i = 1$ we say that $w'$ \emph{prefix-LIS-dominates}~$w$. If we have it just for $j = n$ we say that $w'$~\emph{suffix-LIS-dominates}~$w$.
\end{definition}

\begin{definition}
    For a string $w \in \calA^n$ we write $\behead(w)$ for $w[2 \dd n]$ and $\curtail(w)$ for $w[1 \dd n-1]$.
\end{definition}

\begin{remark}  \label{rem:recursive-ssLIS}
    We may equivalently define substring-LIS-dominance recursively, as follows.  If $w'$ and $w$ have length~$0$ then $w' \ssLIS w$.  If $w'$ and $w$ have length $n > 0$, then $w' \ssLIS w$ if and only if $\LIS(w') \geq \LIS(w)$ and $\behead(w') \ssLIS \behead(w)$ and $\curtail(w') \ssLIS \curtail(w)$.  By omitting the second/third condition we get a recursive definition of prefix/suffix-LIS-dominance.
\end{remark}

\begin{definition}
    Let $Q$ be a (nonempty) standard Young tableau.  We define $\curtail(Q)$ to be the standard Young tableau obtained by deleting the box with maximum label from~$Q$.
\end{definition}
The following fact is immediate from the definition of the RSK correspondence:
\begin{proposition}                                     \label{prop:curtail-rsk}
    Let $w \in \calA^n$ be a nonempty string.  Suppose $\RSK(w) = (P,Q)$ and $\RSK(\curtail(w)) = (P',Q')$.  Then $Q' = \curtail(Q)$.
\end{proposition}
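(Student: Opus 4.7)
The plan is to unpack the RSK insertion algorithm, which builds $(P,Q)$ iteratively by inserting the letters of $w$ one at a time from left to right. Concretely, for each $t \in [n]$, let $(P_t, Q_t) = \RSK(w[1 \dd t])$, starting with $(P_0, Q_0)$ both empty. Then $P_t$ is obtained from $P_{t-1}$ by row-inserting $w_t$ (with a chain of bumps), and $Q_t$ is obtained from $Q_{t-1}$ by placing the label~$t$ in the unique new box that was created in going from~$\sh(P_{t-1})$ to~$\sh(P_t)$. In particular, this is exactly how RSK is usually defined, so nothing needs to be proved beyond unwinding notation.

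Given this, the proposition is essentially immediate. First I would observe that by definition $\RSK(\curtail(w)) = \RSK(w[1 \dd n-1]) = (P_{n-1}, Q_{n-1})$, so $P' = P_{n-1}$ and $Q' = Q_{n-1}$, while $(P, Q) = (P_n, Q_n)$. Next, since $Q_n$ is built from $Q_{n-1}$ by adjoining a single box labeled~$n$, and all labels in $Q_{n-1}$ lie in $\{1, \dots, n-1\}$, the box bearing the maximum label~$n$ in~$Q = Q_n$ is precisely this newly added box. Therefore $\curtail(Q)$, which by definition removes the box with maximum label, recovers $Q_{n-1} = Q'$.

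There is essentially no obstacle here; the proposition is just a notational consequence of the iterative structure of RSK. The only thing worth being careful about is matching the two different uses of $\curtail$ (one on words, truncating the last letter; one on tableaux, deleting the maximum-labeled box), and verifying that under the RSK correspondence these operations correspond via the recording tableau. This correspondence is built into the definition of~$Q$, so the proof is a one-line appeal to it.
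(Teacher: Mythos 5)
Your proof is correct and matches the paper's treatment: the paper simply declares the proposition ``immediate from the definition of the RSK correspondence,'' and your unwinding of the iterative insertion process (observing that $Q_n$ is $Q_{n-1}$ with one new box labeled~$n$, which is necessarily the maximum label) is exactly the verification that makes it immediate.
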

The analogous fact for \beheading is more complicated.
\begin{definition}                                      \label{def:behead-tableau}
    Let $Q$ be a (nonempty) standard Young tableau.  We define $\behead(Q)$ to be the standard Young tableau obtained by deleting the top-left box of~$Q$, sliding the hole outside of the tableau according to jeu de taquin (see, e.g.,~\cite{Ful97,Sag01}), and then decreasing all entries by~$1$.  (The more traditional notation for $\behead(Q)$ is $\Delta(Q)$.)
\end{definition}
The following fact is due to~\cite{Sch63}; see~\cite[Proposition~3.9.3]{Sag01} for an explicit proof.\footnote{Technically, therein it is proved only for strings with distinct letters.  One can recover the result for general strings in the standard manner; if the letters $w_i$ and $w_j$ are equal we break the tie by using the order relation on $i,j$.  See also~\cite[Lemma]{Lee13}.}
\begin{proposition}                                        \label{prop:behead-rsk}
    Let $w \in \calA^n$ be a nonempty string. Suppose $\RSK(w) = (P,Q)$ and $\RSK(\behead(w)) = (P',Q')$.  Then $Q' = \behead(Q)$.
\end{proposition}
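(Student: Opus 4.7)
The plan is to prove this by induction on $n$, strengthening the statement to track both the insertion and recording tableaux simultaneously. Specifically, I would show: if $\RSK(w) = (P, Q)$ and $\RSK(\behead(w)) = (P', Q')$, then $P'$ is obtained from $P$ by deleting the $(1,1)$ cell and performing a jeu de taquin slide (without any relabeling, since $P$ is semistandard), while $Q' = \behead(Q)$ in the sense of Definition~\ref{def:behead-tableau}. The $P$-statement is what keeps the induction going; the $Q$-statement is what we want. The base case $n = 1$ is immediate, since $\behead(w)$ is empty and $\behead$ applied to a single-box standard tableau is empty as well.

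For the inductive step, write $w = (u, w_n)$ with $u = w[1 \dd n-1]$, and let $\RSK(u) = (P^-, Q^-)$ and $\RSK(\behead(u)) = (P'', Q'')$. Then $P = P^- \leftarrow w_n$ (row insertion), with bumping path terminating at some outer corner $\pi$, and $Q$ is obtained from $Q^-$ by labeling the new box at $\pi$ with $n$. Likewise $P' = P'' \leftarrow w_n$ with a bumping path ending at some outer corner $\pi'$, and $Q'$ is obtained from $Q''$ by placing a new box labeled $n-1$ at $\pi'$. By the inductive hypothesis, $P''$ is the $(1,1)$-slide of $P^-$ and $Q'' = \behead(Q^-)$. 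To conclude $Q' = \behead(Q)$, it thus suffices to establish a commutativity lemma: the jeu de taquin $(1,1)$-slide on $P^- \leftarrow w_n$ yields the same tableau as inserting $w_n$ into the $(1,1)$-slide of $P^-$, with the outer corners $\pi$ and $\pi'$ matched up in precisely the way needed to reconcile the two chains of shapes.

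Spelling this out, there are two generic cases: either the jeu de taquin slide path in $Q$ terminates at $\pi$ (in which case the slide removes the new box, and one checks that $\pi' = \tau$, where $\tau$ is the outer corner removed by the $(1,1)$-slide on $P^-$), or the slide path avoids $\pi$ and terminates at $\tau$ (in which case one checks that $\pi' = \pi$). The main obstacle is the verification of this commutativity, which requires tracking the bumping path of $w_n$ and the jeu de taquin slide path in parallel, row by row: at each row they occupy a small number of local configurations (disjoint, meeting at a cell, one overtaking the other), and in each configuration a direct check shows the two operations rearrange consistently. For strings with repeated letters one breaks ties by position, replacing each letter $w_i$ by the pair $(w_i, i)$ under lexicographic order, reducing to the distinct-letter case treated in Schützenberger's original argument; a uniform alternative is to invoke Fomin's local growth rules for RSK, which packages the same commutativity as a routine verification for a single grid square.
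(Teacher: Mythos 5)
Your strengthened induction hypothesis is false. You claim that $P'$ is obtained from $P$ by a jeu de taquin slide out of the $(1,1)$ cell, but this fails already for $w = (2,1)$: there $P$ is the two-cell column with entries $1,2$, whose $(1,1)$-slide is the single cell containing $2$, whereas $\behead(w) = (1)$ has $P'$ equal to the single cell containing $1$. The failure can even occur at the level of shapes: for $w=(2,3,1)$, $P$ has first row $1,3$ and second row $2$, so its $(1,1)$-slide has shape $(2)$, while $\behead(w) = (3,1)$ gives $P'$ of shape $(1,1)$. This last example shows your $P$-hypothesis is actually inconsistent with the statement you're proving, since $\sh(P') = \sh(Q')$ always and $\behead(Q)$ for this $w$ has shape $(1,1)$, not $(2)$: a $(1,1)$-slide on $P$ and on $Q$ generally push the hole in different directions and alter the shape differently. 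So the inductive step cannot close as written.

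The operation you describe is the right answer to a different question: deleting the \emph{smallest} letter of a distinct-letter word $\tau$ changes $P(\tau)$ by exactly a $(1,1)$-slide, because the smallest letter always sits at cell $(1,1)$ of the insertion tableau. That statement and the one you want are exchanged by the Schützenberger symmetry $Q(\pi) = P(\pi^{-1})$: removing the first letter of $\pi$ corresponds, after standardizing, to removing the letter $1$ from $\pi^{-1}$. A self-contained inductive proof along your lines should make this reduction first, after which you run the slide/row-insertion commutativity argument on the single tableau $P(\pi^{-1})$ with no companion claim about a recording tableau needed. For comparison, the paper itself gives no proof: it attributes the result to Schützenberger, cites Sagan's Proposition 3.9.3 for the distinct-letter case, and handles repeated letters by the standard tie-break (replace $w_i$ by $(w_i,i)$), exactly as in your last paragraph. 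Your passing mention of Fomin's local growth rules is another legitimate route that sidesteps the $P$-versus-$Q$ asymmetry entirely; but the main argument you actually wrote has a genuine gap.
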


\begin{proposition}                                     \label{prop:ssLIS-from-recording}
    Let $w, w' \in \calA^n$ be strings of equal length and write $\RSK(w) = (P,Q)$, $\RSK(w') = (P',Q')$. Then whether or not $w' \ssLIS w$ can be determined just from the recording tableaus $Q'$ and~$Q$.
\end{proposition}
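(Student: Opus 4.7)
The plan is an induction on~$n$, using the recursive characterization of substring-LIS-dominance in Remark~\ref{rem:recursive-ssLIS} together with the fact that the operations $\behead$ and $\curtail$ on words correspond, at the level of recording tableaus, to operations that depend only on the recording tableau itself.

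First I would observe the base case $n=0$ is trivial, as $w' \ssLIS w$ holds vacuously. For $n \geq 1$, Remark~\ref{rem:recursive-ssLIS} tells us that $w' \ssLIS w$ holds if and only if three conditions hold: (i) $\LIS(w') \geq \LIS(w)$; (ii) $\behead(w') \ssLIS \behead(w)$; (iii) $\curtail(w') \ssLIS \curtail(w)$. I would then argue that each of these three conditions depends only on the recording tableaus $Q'$ and $Q$. For (i), Greene's Theorem gives $\LIS(w) = \sh(Q)_1$ and $\LIS(w') = \sh(Q')_1$, so condition (i) is determined by the shapes of $Q'$ and $Q$. For (ii) and (iii), Propositions~\ref{prop:curtail-rsk} and~\ref{prop:behead-rsk} say that the recording tableau of $\curtail(w)$ is $\curtail(Q)$ and of $\behead(w)$ is $\behead(Q)$—and both $\curtail$ and $\behead$ on tableaus are defined purely in terms of the tableau, with no reference to the underlying word. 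The same holds for $w'$.

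By the inductive hypothesis applied to the length-$(n-1)$ strings $\behead(w'), \behead(w)$ and $\curtail(w'), \curtail(w)$, the validity of conditions (ii) and (iii) is determined by their recording tableaus, namely $\behead(Q'), \behead(Q)$ and $\curtail(Q'), \curtail(Q)$. These tableaus are themselves functions of $Q'$ and $Q$ alone, so conditions (ii) and (iii) are indeed determined by $Q'$ and $Q$. Combining the three conditions completes the induction.

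I do not expect any real obstacle here; the proposition is essentially an immediate consequence of the two cited propositions about $\behead$ and $\curtail$ together with Greene's Theorem. The only mild subtlety is keeping track of the fact that every substring $w[i\dd j]$ can be reached from $w$ by some composition of $\behead$ and $\curtail$ operations, which is precisely what the recursive form of substring-LIS-dominance in Remark~\ref{rem:recursive-ssLIS} encodes.
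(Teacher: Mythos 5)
Your proof is correct and takes essentially the same approach as the paper: both use the recursive characterization from Remark~\ref{rem:recursive-ssLIS}, read off $\LIS$ from the first row of the recording tableau (Greene's Theorem), and invoke Propositions~\ref{prop:curtail-rsk} and~\ref{prop:behead-rsk} to propagate the recursion; you simply spell out the induction that the paper leaves implicit.
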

\begin{proof}
    This follows from the recursive definition of $\ssLIS$ given in Remark~\ref{rem:recursive-ssLIS}: whether $\LIS(w') \geq \LIS(w)$ can be determined by checking whether the first row of~$Q'$ is at least as long as the first row of~$Q$; the recursive checks can then be performed with the aid of Propositions~\ref{prop:curtail-rsk},~\ref{prop:behead-rsk}.
\end{proof}
\begin{definition}                                        \label{def:ssLIS-tableau}
    In light of Proposition~\ref{prop:ssLIS-from-recording} we may define the relation $\ssLIS$ on standard Young tableaus.
\end{definition}
\begin{remark}                                                  \label{rem:visual-sslis-tableau}
    The simplicity of Proposition~\ref{prop:curtail-rsk} implies that it is very easy to tell, given $w, w' \in \calA^n$ with recording tableaus $Q$ and $Q'$, whether $w'$ suffix-LIS-dominates $w$.  One only needs to check whether $Q'_{1j} \leq Q_{1j}$ for all $j \geq 1$ (treating empty entries as~$\infty$).  On the other hand, it is not particularly easy to tell from $Q'$ and $Q$ whether $w'$ prefix-LIS-dominates $w$; one seems to need to execute all of the jeu de taquin slides.
\end{remark}

We henceforth focus attention on alphabets of size~$2$. Under RSK, these yield standard Young tableaus with at most $2$-rows.  (For brevity, we henceforth call these \emph{$2$-row Young tableaus}, even when they have fewer than $2$~rows.) In turn, $2$-row Young tableaus can be identified with Dyck paths (also known as ballot sequences).
\begin{definition}
    We define a \emph{Dyck path of length~$n$} to be a path in the $xy$-plane that starts from $(0,0)$, takes $n$ steps of the form $(+1,+1)$ (an \emph{upstep}) or $(+1,-1)$ (a \emph{downstep}), and never passes below the $x$-axis.  We say that the \emph{height} of a step~$s$, written $\high(s)$, is the $y$-coordinate of its endpoint; the \emph{(final) height} of a Dyck path~$W$, written $\high(W)$, is the height of its last step.  We do \emph{not} require the final height of a path to be~$0$; if it is we call the path \emph{complete}, and otherwise we call it \emph{incomplete}.  A \emph{return} refers to a point where the path returns to the $x$-axis; i.e., to the end of a step of height~$0$. An \emph{arch} refers to a minimal complete subpath of a Dyck path; i.e., a subpath between two consecutive returns (or between the origin and the first return).
\end{definition}

\begin{definition}              \label{def:dyck-ident}
    We identify each $2$-row standard Young tableau~$Q$ of size~$n$ with a Dyck path~$W$ of length~$n$.  The identification is the standard one: reading off the entries of $Q$ from $1$ to $n$, we add an upstep to $W$ when the entry is in the first row and a downstep when it is in the second row.  The fact that this produces a Dyck path (i.e., the path does not pass below the $x$-axis) follows from the standard Young tableau property.  Note that the final height of $W$  is the difference in length between $Q$'s two rows.  We also naturally extend the terminology ``return'' to $2$-row standard Young tableaus~$Q$: a \emph{return} is a second-row box labeled $2j$ such that boxes in~$Q$ labeled $1, \dots, 2j$ form a rectangular $2 \times j$ standard Young tableau.
\end{definition}

\begin{definition}                                          \label{def:ssLIS-dyck}
    In light of Definition~\ref{def:ssLIS-tableau} and the above identification, we may define the relation $\ssLIS$ on Dyck paths.
\end{definition}

Of course, we want to see how \beheading and \curtailment apply to Dyck paths.  The following fact is immediate:
\begin{proposition}                                     \label{prop:curtail-dyck}
    If $W$ is the Dyck path corresponding to a nonempty $2$-row standard Young tableau~$Q$, then the Dyck path $W'$ corresponding to $\curtail(Q)$ is formed from $W$ by deleting its last segment.  We write $W' = \curtail(W)$ for this new path.
\end{proposition}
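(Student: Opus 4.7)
The plan is to unwind the identification given in Definition~\ref{def:dyck-ident}. Under that bijection, the $i$th step of the Dyck path $W$ corresponds to the box of $Q$ labeled $i$: an upstep if that box lies in row~$1$, a downstep if it lies in row~$2$. Since $\curtail(Q)$ removes the box with maximum label, namely~$n$, and that box corresponds to the $n$th (i.e., last) step of~$W$, it is clear that the sequence of steps assigned to $\curtail(Q)$ is exactly the sequence of steps of~$W$ with its final step deleted.

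Two small sanity checks remain. First, I need to confirm that the truncated sequence is still a legitimate Dyck path, i.e., never passes below the $x$-axis; but this is automatic, since a prefix of a Dyck path is itself a Dyck path. Second, I should verify that the box labeled~$n$ really is a corner that can be removed to produce a standard Young tableau, so that $\curtail(Q)$ is well-defined and contains precisely the boxes of~$Q$ with labels $1, \dots, n-1$ in their original positions; this holds because in any standard Young tableau the cell containing the largest entry must be an outer corner (otherwise the entry below or to its right would have to exceed it).

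The only potential subtlety is keeping the bookkeeping consistent when the removed box lies in row~$2$ versus row~$1$, but in both cases the argument is identical: the deletion of an outer corner labeled~$n$ translates, step for step, into the deletion of the last segment of~$W$. Thus the proof is essentially a one-line consequence of the definitions, and I would present it that way.
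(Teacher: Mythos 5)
Your proof is correct and matches the paper's treatment: the paper declares this proposition ``immediate from the definition of the RSK correspondence'' (more precisely, immediate from the identification in Definition~\ref{def:dyck-ident}) and gives no further argument, and your write-up simply spells out why it is immediate. The two sanity checks you include (that a prefix of a Dyck path is a Dyck path, and that the maximal entry of a standard Young tableau sits at an outer corner) are exactly the right points to note and are both correct.
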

Again, the case of \beheading is more complicated.  We first make some definitions.
\begin{definition}
    \emph{Raising} refers to converting a downstep in a Dyck path to an upstep; note that this increases the Dyck path's height by~$2$.  Conversely, \emph{lowering} refers to converting an upstep to a downstep. Generally, we only allow lowering when the result is still a Dyck path; i.e., never passes below the $x$-axis.
\end{definition}
\begin{proposition}                                     \label{prop:behead-dyck}
    Let $Q$ be a nonempty $2$-row standard Young tableau, with corresponding Dyck path $W$.  Let $W'$ be the Dyck path corresponding to $\behead(Q)$.  Then $W'$ is formed from $W$ as follows: First, the initial step of $W$ is deleted (and the origin is shifted to the new initial point).  If $W$ had no returns then the operation is complete and $W'$ is the resulting Dyck path.  Otherwise, if $W$ had at least one return, then in the new path~$W'$ that step (which currently goes below the $x$-axis) is raised.  In either case, we write $W' = \behead(W)$ for the resulting path.
\end{proposition}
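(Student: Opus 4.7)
The plan is to analyze the jeu de taquin slides that implement $\behead$ on the 2-row tableau $Q$ and match them step-by-step against the described operations on the Dyck path $W$. Since label~$1$ always occupies position $(1,1)$, the first step of $W$ is necessarily an upstep, and removing the $(1,1)$-box initiates a slide from a hole at $(1,1)$. In a 2-row tableau the hole can only move right or make at most one descent into row~$2$, after which it must travel right to the end of row~$2$. I will distinguish two cases according to whether this descent occurs.

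Case~A (the hole never descends): the hole travels right through row~$1$ and exits at $(1,\lambda_1)$, so $\behead(Q)$ is just $Q$ with its first row shifted one place left and every entry then decremented by~$1$. I will show this case is exactly equivalent to ``$W$ has no returns'': if $W$ has no return then at every position $k \geq 1$ the height of $W$ is $\geq 1$, so for $k = Q_{2,j}$ the number of row-$1$ labels in $\{1,\ldots,k\}$ strictly exceeds the number of row-$2$ labels, forcing $Q_{1,j+1} < Q_{2,j}$ (so the hole slides right at column~$j$) for every valid~$j$; conversely, if the hole stays in row~$1$, a hypothetical return at $2m$ would put $\{1,\ldots,2m\}$ inside the $2\times m$ rectangle and force $Q_{1,m+1} > 2m$, contradicting $Q_{1,m+1} < Q_{2,m} = 2m$. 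The bijection with Dyck paths then immediately yields the desired $W'$: deleting the initial upstep of $W$ shifts all later heights down by~$1$, and these were all $\geq 1$, so the result is a valid Dyck path.

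Case~B (the hole descends at some column~$j$): the hole then slides right through row~$2$ and is removed at $(2,\lambda_2)$, leaving row~$1$ equal to $Q_{1,2},\ldots,Q_{1,j},Q_{2,j},Q_{1,j+1},\ldots,Q_{1,\lambda_1}$ and row~$2$ equal to $Q_{2,1},\ldots,Q_{2,j-1},Q_{2,j+1},\ldots,Q_{2,\lambda_2}$. So compared with $Q$ restricted to labels $\{2,\ldots,n\}$, exactly one label---namely $Q_{2,j}$---migrates from row~$2$ to row~$1$; on the Dyck path side this precisely flips the step at position $Q_{2,j}-1$ of $W'$ (originally position $Q_{2,j}$ of $W$, a downstep) to an upstep.

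The main technical point---and where I expect the bulk of the work to lie---is the identification $Q_{2,j} = r$, where $r$ is the position of the first return of $W$. Writing $r = 2m$, I will use the same height-counting idea as in Case~A: for any $j' < m$, if $Q_{2,j'} < Q_{1,j'+1}$ then the height of $W$ at position $Q_{2,j'}$ equals exactly $j' - j' = 0$, a return strictly before $2m$, contradicting the minimality of~$r$; so $Q_{2,j'} > Q_{1,j'+1}$ for all $j' < m$ and the hole cannot descend before column~$m$. At $j' = m$, however, $Q_{1,m+1}$ is either nonexistent or $> 2m = Q_{2,m}$ (since $\{1,\ldots,2m\}$ fills the $2 \times m$ rectangle), so the hole must descend there, giving $j = m$ and $Q_{2,j} = r$. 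To conclude, a short check shows that in the naive ``$W$ with first step deleted and shifted'' path the step at position $r-1$ is the unique first step below the $x$-axis (earlier heights are $\geq 0$ because $W$ had no return before $r$) and raising it lifts every subsequent height by~$2$ so the resulting path is a legitimate Dyck path---matching the ``raise'' operation in the statement.
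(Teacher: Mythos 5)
Your proof is correct and follows essentially the same route as the paper's: translate the $\behead$ operation into the jeu-de-taquin slide, observe that the hole travels right along row~$1$ until (and only if) it reaches a column~$j$ where $Q_{1,j+1} > Q_{2,j}$, and identify this column (when it exists) with the first return of~$W$. The paper states the equivalence ``first hole-descent position $\leftrightarrow$ first return'' without justification, whereas you prove it via the height-counting/balance argument; otherwise the two arguments coincide.
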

\begin{proof}
    We use Definitions~\ref{def:behead-tableau} and~\ref{def:dyck-ident}.  Deleting the top-left box of $Q$ corresponds to deleting the first step of~$W$, and decreasing all entries in~$Q$ by~$1$ corresponds to shifting the origin in~$W$.  Consider now the jeu de taquin slide in~$Q$.  The empty box stays in the first row until it first reaches a position~$j$ such that $Q_{1,j+1} > Q_{2,j}$ --- if such a position exists.  Such a position does exist if and only if~$Q$ contains a return (with box $(2,j)$ being the first such return).  If $Q$ (equivalently,~$W$) has no return then the empty box slides out of the first row of~$Q$, and indeed this corresponds to making no further changes to~$W$.  If $Q$ has its first return at box $(2,j)$, this means the jeu de taquin will slide up the box labeled~$2j$ (corresponding to raising the first return step in~$W$); then all remaining slides will be in the bottom row of~$Q$, corresponding to no further changes to~$W$.
\end{proof}

\begin{remark}                                          \label{rem:visual-sslis-dyck}
    Similar to Remark~\ref{rem:visual-sslis-tableau}, it is easily to ``visually'' check the suffix-LIS-domination relation for Dyck paths: $W'$ suffix-LIS-dominates~$W$ if and only if $W'$ is at least as high as $W$ throughout the length of both paths.  On the other hand, checking the full substring-LIS-domination relation is more involved; we have $W' \ssLIS W$ if and only if for any number of simultaneous \beheadings to $W'$ and $W$, the former path always stays at least as high as the latter.
\end{remark}

Finally, we will require the following definition:
\begin{definition}
    A \emph{hinged range} is a sequence $(R_0, s_1, R_1, s_2, R_2, \dots, s_k, R_k)$ (with $k \geq 0$), where each $s_i$ is a step (upstep or downstep) called a \emph{hinge} and each $R_i$ is a Dyck path (possibly of length~$0$) called a \emph{range}.  The ``internal ranges'' $R_1, \dots, R_{k-1}$ are required to be complete Dyck paths; the ``external ranges'' $R_0$ and $R_k$ may be incomplete.

    We may identify the hinged range with the path formed by concatenating its components; note that this need not be a Dyck path, as it may pass below the origin.

    If $H$ is a hinged range and $H'$ is formed by raising zero or more of its hinges (i.e., converting downstep hinges to upsteps), we say that $H'$ is a \emph{raising} of $H$ or, equivalently, that $H$ is a \emph{lowering} of $H'$. 
    We call a hinged range \emph{fully lowered} (respectively, \emph{fully raised}) if all its hinges are downsteps (respectively, upsteps).
\end{definition}

\subsection{A bijection on Dyck paths}

\begin{theorem}                                     \label{thm:dyck-bij}
    Fix integers $n \geq 2$ and $1 \leq \lambda_2 \leq \lfloor \frac{n}{2}\rfloor$.
    Define
    \begin{align*}
        \calW = \bigl\{ (W, s_1) : &\textnormal{ $W$ is a length-$n$ Dyck path with exactly $\lambda_2$ downsteps;} \\
                                               &\textnormal{ $s_1$ is a downstep in~$W$}\bigr\}
    \end{align*}
    and
    \begin{align*}
        \calW' = \smash{\bigcup_{k=1}^{\lambda_2}}
                     \bigl\{ (W', s'_1) : &\textnormal{ $W'$ is a length-$n$ Dyck path with exactly $\lambda_2-k$ downsteps;} \\
                                                 &\textnormal{ $s'_1$ is an upstep in~$W'$ with $k+1 \leq \high(s'_1) \leq \high(W')-k+1$;} \\
                                                 &\textnormal{ $s'_1$ is the rightmost upstep in~$W'$ of its height}\bigr\}.
    \end{align*}
    Then there is an explicit bijection $f \co \calW \to \calW'$ such that whenever $f(W,s_1) = (W',s'_1)$ it holds that $W' \ssLIS W$.
\end{theorem}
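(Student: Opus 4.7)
I would prove the theorem by giving an explicit construction of $f$, an explicit inverse, and then verifying $\ssLIS$ by induction on~$n$. Given $(W, s_1) \in \calW$ with $h_1 \coloneqq \high(s_1)$, I would canonically decompose $W$ as a hinged range $(R_0, s_1, R_1, s_2, \ldots, s_k, R_k)$ in which $s_{i+1}$ is the first downstep after $s_i$ ending at height $h_1 - i$ (chosen greedily, stopping when no such downstep exists). Then $W'$ is obtained by fully raising this hinged range, and $s'_1$ is the raised image of $s_1$. To verify $(W', s'_1) \in \calW'$: $W'$ has $\lambda_2 - k$ downsteps by construction, and $s'_1$ has height $h_1 + 2$. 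The bound $h_1 + 2 \geq k+1$ holds because the lowest hinge $s_k$ has height $h_1 - k + 1 \geq 0$, while $h_1 + 2 \leq \high(W') - k + 1$ holds because the incomplete range $R_k$ climbs from $h_1 - k + 1$ up to $\high(W)$. To see $s'_1$ is the rightmost upstep of height $h_1 + 2$ in $W'$, note that every step to the right of $s'_1$ sits either in a shifted range $R_i$ (whose path stays at heights $\geq h_1 + i + 1 > h_1 + 2$) or is a raised hinge $s'_i$ for $i \geq 2$ (which ends at $h_1 + i + 1 > h_1 + 2$).

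\textbf{Inverse.} Given $(W', s'_1) \in \calW'$, recover $k$ as $\lambda_2$ minus the number of downsteps of $W'$, set $h' \coloneqq \high(s'_1)$, and define $s'_i$ (for $i = 2, \ldots, k$) to be the rightmost upstep of $W'$ of height $h' + i - 1$. These exist because $h' + k - 1 \leq \high(W')$. The ``rightmost'' hypothesis on $s'_i$ forces the path between $s'_i$ and $s'_{i+1}$ to stay at heights $\geq h' + i - 1$: if it dropped below, it would have to cross back up via an upstep of height $h' + i - 1$ strictly to the right of $s'_i$, contradicting maximality. Lowering all $s'_i$ then produces a Dyck path~$W$, the tight bound $h' \geq k + 1$ being exactly what is needed for nonnegativity of heights throughout. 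A straightforward structural check (matching the $s_i$'s found by $f$ with the $s'_i$'s found by the inverse) shows this is indeed inverse to $f$.

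\textbf{The $\ssLIS$ relation and main obstacle.} Using the recursive characterization from Remark~\ref{rem:recursive-ssLIS}, I would prove $W' \ssLIS W$ by induction on~$n$. The condition $\LIS(W') \geq \LIS(W)$ is immediate since $W'$ has $k$ more upsteps than $W$. For $\curtail$: if the last step of $W$ lies in $R_k$, the decomposition is preserved and the inductive hypothesis applies to the shortened pair; if the last step is $s_k$ itself, then curtailing both paths yields the bijection image with $k$ decreased by one and $R_k$ empty, again invoking induction. The main obstacle is the $\behead$ step, because ``raise the first return'' can interact with the hinges in a delicate way: when $k = h_1 + 1$ the path $W$ has a return at $s_k$ while $W'$ may have none, so $\behead$ does different things to the two paths. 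The key technical step is to case-split on whether $R_0$ itself contains an interior return, and in each case to identify $(\behead W, \behead W')$ with an image of the bijection on a shorter Dyck path (typically with the same hinge count, or with $k$ decremented by~$1$ when the raised return absorbs the last hinge), at which point induction closes the argument.
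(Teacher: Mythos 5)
Your bijection $f$, the verification that $f(\calW) \subseteq \calW'$, and the inverse $g$ all match the paper's construction exactly: decompose $W$ as a hinged range by greedily walking down one height level at a time, fully raise it, and invert by walking back up via rightmost upsteps. The height-bound calculations are correct (one minor slip: for the ``rightmost'' check you write that the shifted range $R_i$ stays at heights $\geq h_1 + i + 1 > h_1 + 2$, which is not strict when $i = 1$; what you actually need is that a Dyck path pinned at height $h_1 + 2$ has no upstep \emph{ending} at $h_1 + 2$, which still gives the conclusion).

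The one genuine difference is in the $\ssLIS$ verification. You work with the recursive characterization of Remark~\ref{rem:recursive-ssLIS} and therefore must handle both $\curtail$ and $\behead$, and your inductive invariant is the strong statement that the pair remains an $f$-image (with smaller parameters) after each reduction. The paper instead invokes the ``visual'' characterization of Remark~\ref{rem:visual-sslis-dyck} --- $W' \ssLIS W$ iff after any number of simultaneous beheadings $W'$ stays at least as high as $W$ throughout --- which lets it ignore $\curtail$ altogether. Moreover the paper's inductive invariant is deliberately weaker: after $b$ beheadings, $W$ and $W'$ share a hinged-range decomposition in which $H'$ is fully raised, \emph{but $H$ need not be fully lowered}. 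That relaxation means the paper never has to argue that the raised step in $W$ is the \emph{last} hinge $s_k$; any hinge is fine. Your stronger invariant does hold (one can check that when $R_0$ has no return and $h_1 = k - 1$, the first return of $W$ is at $s_k$ and re-running the $f$-decomposition on $\behead W$ indeed yields the hinges $s_1, \dots, s_{k-1}$), but establishing this requires exactly the extra bookkeeping you gesture at in your last paragraph. So the constructions coincide, and your $\ssLIS$ argument is correct in outline but does more work than necessary; the weaker ``$H'$ fully raised'' invariant together with the height-dominance picture is the cleaner route.
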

\begin{remark}                                       \label{rem:dyck-bij}
    Each length-$n$ Dyck path with exactly $\lambda_2$ downsteps occurs exactly $\lambda_2$ times in~$\calW$. Each length-$n$ Dyck path with strictly fewer than $\lambda_2$ downsteps occurs exactly $n-2\lambda_2+1$ times in~$\calW'$.
\end{remark}
\begin{proof}[Proof of Theorem~\ref{thm:dyck-bij}]
    Given any $(W,s_1) \in \calW$, we define $f$'s value on it as follows. Let $s_2$ be the first downstep following $s_1$ in $W$ having height $\high(s_1)-1$; let $s_3$ be the first downstep following $s_2$ in $W$ following $s_2$ having height $\high(s_2) - 1$; etc., until reaching downstep $s_k$ having no subsequent downstep of smaller height.  Now decompose $W$ as a (fully lowered) hinged range $H = (R_0, s_1, R_1, \dots, s_k, R_k)$.  Let $H' = (R_0', s_1', R_1', \dots, s_k', R_k')$ be the fully raised version of~$H$ (where each $R_j'$ is just $R_j$ and each $s'_j$ is an upstep).  Then $f(W,s_k)$ is defined to be $(W', s_1')$, where $W'$ is the Dyck path corresponding to $H'$.

    First we check that indeed $(W',s_1') \in \calW'$.  As $W'$ is formed from $W$ by $k$ raisings, it has exactly $\lambda_2-k$ downsteps. 
    Since $\high(s_k) \geq 0$ it follows that $\high(s_1) \geq k-1$ and hence $\high(s'_1) \geq k+1$.  On the other hand, $\high(s'_1) + (k-1) = \high(s'_k) \leq \high(W')$ and so $\high(s'_1) \leq \high(W') -k+1$.  Finally, $s'_1$ is the rightmost upstep in $W'$ of its height because $H'$ is fully raised.

    To show that $f$ is a bijection, we will define the function $g : \calW' \to \calW$ that will evidently be $f$'s inverse. Given any $(W',s_1') \in \calW$, with $W'$ having exactly $\lambda_2-k$ downsteps, we define $g$'s value on it as follows.  Let $s'_2$ be the \emph{last} (rightmost) upstep following $s'_1$ in $W'$ having height $\high(s'_1)+1$; let $s'_3$ be the last upstep following $s'_2$ in $W'$ having height $\high(s'_2)+1$; etc., until $s'_k$ is defined.  That this $s'_k$ indeed exists follows from the fact that $\high(s_1') \leq \high(W')-k+1$.  Now decompose $W'$ as a (fully raised) hinged range $H' = (R'_0, s_1', R_1', \dots, s_k', R_k')$.  The fact that $R_k'$ is a Dyck path (i.e., does not pass below its starting height) again follows from the fact that $\high(s_k') = \high(s_1')+k-1 \leq \high(W')$. Finally, let $H = (R_0, s_1, R_1, \dots, s_k, R_k)$ be the fully lowered version of~$H'$, and $W$ the corresponding path.  As $W$ has exactly $\lambda_2$ downsteps, we may define $g(W',s_1') = (W,s_1)$ provided $W$ is indeed a Dyck path.  But this is the case, because the lowest point of~$W$ occurs at the endpoint of~$s_k$, and $\high(s_k) = \high(s_1) - k+1 = \high(s'_1)-2 - k+1 = \high(s'_1) - k-1 \geq 0$ since $\high(s'_1) \geq k+1$.

    It is fairly evident that $f$ and $g$ are inverses.  The essential thing to check is that the sequence $s_1, \dots, s_k$ determined from $s_1$ when computing $f(W,s_1)$ is ``the same'' (up to raising/lowering) as the sequence $s_1', \dots, s'_{k'}$ determined from $s_1'$ in computing $g(W',s'_1)$, and vice versa.  The fact that the sequences have the same \emph{length} follows, in the $g \circ f = id$ case, from the fact that $\high(W') = \high(W) + 2k$; it follows, in the $f \circ g = id$ case, from the fact that $R_k'$ is a Dyck path.  The fact that the hinges have the same identity is evident from the nature of fully raising/lowering hinged ranges.

    It remains to show that if $f(W,s_1) = (W',s'_1)$ then $W' \ssLIS W$. Referring to Remark~\ref{rem:visual-sslis-dyck}, we need to show that if $W'$ and $W$ are both simultaneously \beheaded some number of times~$b$, then in the resulting paths, $W'$ is at least as high as $W$ throughout their lengths.  In turn, this is implied by the following more general statement:
    \begin{claim}
        After $b$ \beheadings, $W'$ and $W$ may be expressed as hinged ranges $H' = (R_0, s'_1, R_1, \dots, s'_k, R_k)$ and $H = (R_0, s_1, R_1, \dots, s_k, R_k)$ (respectively) such that $H'$ is the fully raised version of~$H$ (i.e., each $s'_j$ is an upstep).
    \end{claim}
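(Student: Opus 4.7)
The plan is to proceed by induction on $b$. The base case $b = 0$ is immediate from the construction of $f$ at the outset of the proof: the hinged range $H = (R_0, s_1, R_1, \ldots, s_k, R_k)$ is built explicitly (all hinges are downsteps by construction), and $H'$ is taken to be its fully raised version. For the inductive step, assume the claim holds after $b$ \beheadings, with decompositions $H$ (some hinges may now be upsteps, owing to raisings from earlier \beheadings) and $H'$ (every hinge is an upstep). We must show the desired structure is preserved after one more \beheading of $W = H$ and $W' = H'$, using Proposition~\ref{prop:behead-dyck}.

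The crux is the following structural observation: the first return of $W$, if it exists, either lies strictly inside $R_0$ or coincides with some downstep hinge $s_{j^*}$ of $H$. Indeed, for each $j \geq 1$ the range $R_j$ is a standalone Dyck path and therefore stays at height at least its starting height $h_j$ inside $W$; so $R_j$ can touch $0$ only when $h_j = 0$, but in that case $s_j$ must be a downstep landing at $0$ and is itself a return that precedes any interior return of $R_j$. Applied to $H'$, whose hinges are all upsteps, the same reasoning shows $W'$ has no return outside $R_0$. Consequently: if $R_0$ has an interior return then $W$ and $W'$ share the same first return; otherwise $W'$ has no return at all, while $W$'s first return (if any) occurs at a downstep hinge $s_{j^*}$.

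Given this, the inductive step is a direct case analysis. In the case where $R_0$ has an interior return, \beheading acts identically on $W$ and $W'$: it deletes the leading upstep $t_1$ of $R_0$ and raises the shared interior downstep. This modifies only $R_0$, giving some new Dyck path $R_0^{\mathrm{new}}$, while leaving the hinges and $R_1, \ldots, R_k$ untouched; so $H^{\mathrm{new}}$ and $H'^{\mathrm{new}}$ retain the desired common-ranges structure. In the opposite case, $W'$ loses only its first step, while $W$ loses the same first step and possibly has $s_{j^*}$ raised from downstep to upstep (its match in $H'$ was already an upstep). If $R_0$ is nonempty, the shared first step is $t_1 \in R_0$, and $R_0^{\mathrm{new}} = (t_2, \ldots, t_m)$ is a Dyck path precisely because $R_0$ had no interior return. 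If $R_0$ is empty then $s_1$ must already be an upstep for $W$ to begin with one, and deleting it promotes $R_1$ to the new external range while dropping the hinge count by one. In every subcase the hinges in $H^{\mathrm{new}}$ and $H'^{\mathrm{new}}$ sit at identical positions with $H'^{\mathrm{new}}$'s all upsteps and $H^{\mathrm{new}}$'s either matching or still downstep, so $H'^{\mathrm{new}}$ remains the fully raised version of $H^{\mathrm{new}}$. The most delicate point is verifying that $R_0^{\mathrm{new}}$ is a valid (possibly incomplete) Dyck path in each subcase; in the subtle scenario where $R_0$ empties out, this uses that the promoted $R_1$ is an internal range and hence a complete Dyck path in its own right.
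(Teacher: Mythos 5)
Your proof is correct and follows essentially the same inductive strategy as the paper's: after $b$ beheadings, argue that the beheading either acts identically inside the shared range $R_0$ (when $R_0$ contains a return), or else deletes the common first step and, in $W$, raises a hinge whose counterpart in $W'$ is already an upstep — using exactly the structural fact that returns of $W$ outside $R_0$ can only occur at hinges, while $W'$ (being fully raised) has none there. The one organizational difference is cosmetic: the paper normalizes away the $R_0 = \emptyset$ case at the outset of each inductive step by absorbing the leading upstep hinge $s_1 = s'_1$ together with $R_1$ into the common initial Dyck path, whereas you treat it as an explicit subcase of the beheading analysis; both are sound, and your accompanying observation that the promoted $R_1$ is a complete Dyck path is the same justification the paper relies on implicitly.
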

    \noindent (Note that we do not necessarily claim that $H$ is the fully lowered version of $H'$.)

    The claim can be proved by induction on~$b$.  The base case $b = 0$ follows by definition of~$f$.  Throughout the induction we may assume that the common initial Dyck path $R_0$ is nonempty, as otherwise $s_1$ must be an upstep, in which case we can redefine the common initial Dyck path of $W$ and $W'$ to be $(s_1, R_1) = (s'_1, R_1)$.  

    We now show the inductive step. Assume $W'$ and $W$ are nonempty paths as in the claim's statement, with $R_0$ nonempty. 
    Suppose now that $W'$ and $W$ are simultaneously \beheaded.  The first step of $W'$ and $W$ (an upstep belonging to $R_0$) is thus deleted, and the origin shifted.  If $R_0$ contained a downstep to height~$0$ then the first such downstep is raised in both $\behead(W')$ and $\behead(W)$ and the inductive claim is maintained.  Otherwise, suppose $R_0$ contained no downsteps to height~$0$.  It follows immediately that $W'$ originally had no returns to height~$0$ at all; hence the \beheading of $W'$ is completed by the deletion of its first step.  It may also be that $W$ had no returns to height~$0$ at all; then the \beheading of $W$ is also completed by the deletion of its first step and the induction hypothesis is clearly maintained.  On the other hand, $W$ \emph{may} have had some downsteps to~$0$ within~$(s_1, R_1, \dots, s_k, R_k)$.  In this case, the first (leftmost) such downstep must occur at one of the hinges~$s_j$, and the \beheading of $W$ is completed by raising this hinge.  The inductive hypothesis is therefore again maintained.  This completes the induction.
\end{proof}

We derive an immediate corollary, after introducing a bit of notation:
\newcommand{\SYTeq}[2]{\SYT_{#1}({=}#2)}
\newcommand{\SYTleq}[2]{\SYT_{#1}({\leq} #2)}
\begin{definition}
    We write $\SYTeq{n}{\lambda_2}$ (respectively, $\SYTleq{n}{\lambda_2}$) for the set of $2$-row standard Young tableaus of size~$n$ with exactly (respectively, at most) $\lambda_2$ boxes in the second row.
\end{definition}
\begin{corollary}                                       \label{cor:tableau-bij1}
    For any integers $n \geq 2$ and $0 \leq \lambda_2 \leq \lfloor \frac{n}{2} \rfloor$, there is a coupling $(\bQ, \bQ')$ of the uniform distribution on $\SYTeq{n}{\lambda_2}$  and the uniform distribution on $\SYTleq{n}{\lambda_2-1}$ such that $\bQ' \ssLIS \bQ$ always.
\end{corollary}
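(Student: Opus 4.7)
The plan is to deduce Corollary~\ref{cor:tableau-bij1} almost immediately from Theorem~\ref{thm:dyck-bij} via the standard tableau--Dyck-path identification. Under Definition~\ref{def:dyck-ident}, the set $\SYTeq{n}{\lambda_2}$ is in bijection with the length-$n$ Dyck paths having exactly $\lambda_2$ downsteps, and $\SYTleq{n}{\lambda_2-1}$ is in bijection with the length-$n$ Dyck paths having at most $\lambda_2-1$ downsteps. By Definition~\ref{def:ssLIS-dyck}, the relation $\ssLIS$ on tableaus agrees with the relation $\ssLIS$ on the corresponding Dyck paths, so it suffices to produce the desired coupling on the Dyck-path side.

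For the coupling, sample $(\bW,\bs_1)$ uniformly from the set $\calW$ of Theorem~\ref{thm:dyck-bij} and set $(\bW',\bs'_1) = f(\bW,\bs_1)$; the coupled pair $(\bQ,\bQ')$ is the image under the tableau--Dyck-path identification. The marginals are checked via Remark~\ref{rem:dyck-bij}. Its first half tells us that each length-$n$ Dyck path with exactly $\lambda_2$ downsteps occurs exactly $\lambda_2$ times in $\calW$, so the marginal on $\bW$ is uniform on that set. Its second half says that each length-$n$ Dyck path with strictly fewer than $\lambda_2$ downsteps occurs the \emph{same} number of times, $n-2\lambda_2+1$, in $\calW'$, so (since $f$ is a bijection and these are exactly the first coordinates appearing in $\calW'$) the marginal on $\bW'$ is uniform on the length-$n$ Dyck paths with at most $\lambda_2-1$ downsteps. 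Finally, the substring-LIS-dominance property is immediate from Theorem~\ref{thm:dyck-bij}, which guarantees $\bW' \ssLIS \bW$ always, hence $\bQ' \ssLIS \bQ$ always.

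There is essentially no remaining obstacle once Theorem~\ref{thm:dyck-bij} is in hand: the corollary is a clean repackaging of the bijection, and the only thing that has to be verified is the uniformity of multiplicities supplied by Remark~\ref{rem:dyck-bij}, which is precisely what lets $f$ descend from a bijection of pointed-path sets to a coupling of the underlying uniform measures. The edge case $\lambda_2 = 0$ is vacuous (the target set $\SYTleq{n}{-1}$ is empty); for $1 \leq \lambda_2 \leq \lfloor n/2 \rfloor$ the set $\calW$ is nonempty and Theorem~\ref{thm:dyck-bij} applies directly.
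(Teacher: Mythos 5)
Your proposal is essentially identical to the paper's proof: draw $(\bW,\bs_1)$ uniformly from $\calW$, apply $f$, pass through the tableau--Dyck-path identification, and invoke Theorem~\ref{thm:dyck-bij} for the $\ssLIS$ property and Remark~\ref{rem:dyck-bij} for uniformity of the marginals. The extra remark about the $\lambda_2 = 0$ edge case being vacuous is a harmless addition.
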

\begin{proof}
    Let $(\bW, \bs_1)$ be drawn uniformly at random from the set~$\calW$ defined in Theorem~\ref{thm:dyck-bij}, and let $(\bW',\bs'_1) = f(\bW,\bs_1)$. Let $\bQ \in \SYTeq{n}{\lambda_2}$, $\bQ' \in \SYTleq{n}{\lambda_2\!-\!1}$ be the $2$-row standard Young tableaus identified with $\bW$, $\bW'$ (respectively).  Then Theorem~\ref{thm:dyck-bij} tells us that $\bQ' \ssLIS \bQ$ always, and Remark~\ref{rem:dyck-bij} tells us that $\bQ$ and $\bQ'$ are each uniformly distributed.
\end{proof}
\begin{corollary}                                       \label{cor:tableau-bij2}
    For any integers $n \geq 0$ and $0 \leq \lambda_2' \leq \lambda_2 \leq \lfloor \frac{n}{2} \rfloor$, there is a coupling $(\bQ, \bQ')$ of the uniform distribution on $\SYTleq{n}{\lambda_2}$  and the uniform distribution on $\SYTleq{n}{\lambda_2'}$ such that $\bQ' \ssLIS \bQ$ always.
\end{corollary}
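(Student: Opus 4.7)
The natural plan is to reduce Corollary~\ref{cor:tableau-bij2} to Corollary~\ref{cor:tableau-bij1} via a one-step reduction combined with a telescoping argument exploiting the transitivity of~$\ssLIS$. Specifically, I would first prove the single-step claim: for any $1 \leq \lambda_2 \leq \lfloor n/2 \rfloor$, there is a coupling $(\bQ,\bQ')$ of the uniform distribution on $\SYTleq{n}{\lambda_2}$ and the uniform distribution on $\SYTleq{n}{\lambda_2-1}$ with $\bQ' \ssLIS \bQ$ always. The idea is to partition $\SYTleq{n}{\lambda_2} = \SYTeq{n}{\lambda_2} \sqcup \SYTleq{n}{\lambda_2-1}$. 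A uniformly random $\bQ \in \SYTleq{n}{\lambda_2}$ is, conditioned on which piece it lies in, uniform on that piece. On the event $\bQ \in \SYTeq{n}{\lambda_2}$, invoke the coupling from Corollary~\ref{cor:tableau-bij1} to produce $\bQ'$ uniform on $\SYTleq{n}{\lambda_2-1}$ satisfying $\bQ' \ssLIS \bQ$; on the complementary event $\bQ \in \SYTleq{n}{\lambda_2-1}$, simply set $\bQ' = \bQ$ (and use the reflexivity of $\ssLIS$).

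The key observation is that in \emph{both} cases, conditioned on the event, $\bQ'$ is already uniform on $\SYTleq{n}{\lambda_2-1}$. Therefore the unconditional marginal of $\bQ'$ is also uniform on $\SYTleq{n}{\lambda_2-1}$, regardless of the mixing probabilities $|\SYTeq{n}{\lambda_2}|/|\SYTleq{n}{\lambda_2}|$ and $|\SYTleq{n}{\lambda_2-1}|/|\SYTleq{n}{\lambda_2}|$. This ``conditional uniformity'' is the only slightly delicate point of the argument, and it drops out essentially for free.

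Given the one-step reduction, I would iterate it $\lambda_2 - \lambda_2'$ times: produce a chain of couplings between consecutive uniform distributions on $\SYTleq{n}{\lambda_2}, \SYTleq{n}{\lambda_2-1}, \ldots, \SYTleq{n}{\lambda_2'}$, each step satisfying $\ssLIS$-domination. Composing these couplings (sampling $\bQ$, then sampling each successor from the relevant conditional distribution) gives a joint distribution $(\bQ^{(\lambda_2)}, \bQ^{(\lambda_2-1)}, \ldots, \bQ^{(\lambda_2')})$ in which consecutive members are $\ssLIS$-related. By the transitivity of $\ssLIS$ recorded just after its definition, $\bQ^{(\lambda_2')} \ssLIS \bQ^{(\lambda_2)}$, and taking $\bQ = \bQ^{(\lambda_2)}$, $\bQ' = \bQ^{(\lambda_2')}$ yields the desired coupling. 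The edge case $\lambda_2' = \lambda_2$ is handled by the identity coupling together with reflexivity of~$\ssLIS$, and the hypothesis $\lambda_2 \leq \lfloor n/2 \rfloor$ is preserved at each step of the iteration, so Corollary~\ref{cor:tableau-bij1} is always applicable. I do not anticipate any real obstacle: the only thing to be careful about is the marginal-uniformity check above, and there are no further calculations to perform.
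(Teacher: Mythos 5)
Your proposal is correct and follows essentially the same route as the paper's own proof: reduce to the single-step case $\lambda_2' = \lambda_2 - 1$ by transitivity of $\ssLIS$, decompose the uniform distribution on $\SYTleq{n}{\lambda_2}$ as a mixture of the uniform distributions on $\SYTeq{n}{\lambda_2}$ and $\SYTleq{n}{\lambda_2-1}$, and couple the two pieces via Corollary~\ref{cor:tableau-bij1} and the identity coupling respectively. Your explicit remark about conditional uniformity of $\bQ'$ being independent of the mixture weights is the same observation the paper makes implicitly by describing the mixture decomposition.
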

\begin{proof}
    The cases $n < 2$ and $\lambda_2' = \lambda_2$ are trivial, so we may assume $n \geq 2$ and $0 \leq \lambda_2' < \lambda_2 \leq \lfloor \frac{n}{2} \rfloor$.  By composing couplings and using transitivity of~$\ssLIS$, it suffices to treat the case $\lambda_2' = \lambda_2 -1$.  But the uniform distribution on $\SYTleq{n}{\lambda_2}$ is a mixture of (a)~the uniform distribution on $\SYTeq{n}{\lambda_2}$, (b)~the uniform distribution on $\SYTleq{n}{\lambda_2-1}$; and these can be coupled to $\SYTleq{n}{\lambda_2-1}$ under the $\ssLIS$ relation using (a)~Corollary~\ref{cor:tableau-bij1}, (b)~the identity coupling.
\end{proof}

Before giving the next corollary, we have a definition.
\begin{definition}                          \label{def:Ham-SymHam}
    Let $\calA$ be any $2$-letter alphabet. We write $\Ham{\calA}{n}{k}$ for the set of length-$n$ strings over~$\calA$ with exactly $k$ copies of the larger letter, and we write $\SymHam{\calA}{n}{k} = \Ham{\calA}{n}{k} \cup \Ham{\calA}{n}{n-k}$.
\end{definition}
\begin{corollary}                                       \label{cor:string-bij}
    For $\calA$ a $2$-letter alphabet and integers $0 \leq k' \leq k \leq \lfloor \frac{n}{2} \rfloor$, there is a coupling $(\bw,\bw')$ of the uniform distribution on $\SymHam{\calA}{n}{k}$ and the uniform distribution on $\SymHam{\calA}{n}{k'}$ such that $\bw' \ssLIS \bw$ always.
\end{corollary}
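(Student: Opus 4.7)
The plan is to reduce Corollary~\ref{cor:string-bij} to the tableau-level coupling of Corollary~\ref{cor:tableau-bij2} via the RSK correspondence, exploiting that in the $2$-letter case the insertion tableau is essentially determined by the recording tableau together with the letter count. The main obstacle (which is quite minor) is a structural counting step establishing that uniform-on-strings pushes forward to uniform-on-recording-tableaux; once that is in place, Proposition~\ref{prop:ssLIS-from-recording} lets us transfer the tableau coupling back to strings.

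First I would establish the following structural observation. For any $w \in \calA^n$ with $a$ copies of the smaller letter and $b = n-a$ copies of the larger letter, write $\RSK(w) = (P,Q)$ and $\lambda = \sh(Q)$. The second row of~$P$ must consist entirely of copies of the larger letter (by column-strictness), so row~$1$ of~$P$ consists of exactly $a$ smaller letters followed by $\lambda_1 - a$ larger letters; thus $P$ is completely determined by $\lambda$ and~$a$, and validity requires $\lambda_2 \leq \min(a,b)$. In particular, if $w \in \SymHam{\calA}{n}{k}$ then $\min(a,b) = k$, so $Q \in \SYTleq{n}{k}$. Conversely, each $Q \in \SYTleq{n}{k}$ arises from exactly $2$ such strings when $k < n/2$ (the choices $a \in \{k, n-k\}$ produce distinct valid insertion tableaux) and from exactly $1$ when $k = n/2$ (the two choices coincide). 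In either case the preimage size is constant over $Q$, so the pushforward of the uniform distribution on $\SymHam{\calA}{n}{k}$ along $w \mapsto Q(w)$ is the uniform distribution on $\SYTleq{n}{k}$; the same statement holds with $k$ replaced by $k'$.

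Given this pushforward identity, the coupling is assembled in two stages. Apply Corollary~\ref{cor:tableau-bij2} with parameters $k \geq k'$ to obtain a coupling $(\bQ, \bQ')$ of the uniform distributions on $\SYTleq{n}{k}$ and $\SYTleq{n}{k'}$ with $\bQ' \ssLIS \bQ$ always. Then, using independent conditional randomness, sample $\bw$ uniformly from the RSK-preimage of $\bQ$ inside $\SymHam{\calA}{n}{k}$ and $\bw'$ uniformly from the RSK-preimage of $\bQ'$ inside $\SymHam{\calA}{n}{k'}$. By the pushforward identity the marginals of $\bw$ and $\bw'$ are the required uniform distributions, and by Proposition~\ref{prop:ssLIS-from-recording} together with Definition~\ref{def:ssLIS-tableau} the relation $\ssLIS$ on strings depends only on the recording tableaux, so $\bQ' \ssLIS \bQ$ immediately yields $\bw' \ssLIS \bw$, completing the proof.
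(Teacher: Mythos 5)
Your proof is correct and takes essentially the same route as the paper: push the uniform distribution on $\SymHam{\calA}{n}{k}$ forward to the uniform distribution on $\SYTleq{n}{k}$ via the RSK recording tableau (using the fact that the insertion tableau is forced once the shape and the letter count are fixed), invoke Corollary~\ref{cor:tableau-bij2} for the tableau-level coupling, and transfer back via Proposition~\ref{prop:ssLIS-from-recording}. The only difference is that you spell out the preimage counting (exactly two strings per recording tableau when $k < n/2$, one when $k = n/2$) more explicitly than the paper does, which makes the pushforward claim more self-contained.
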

\begin{proof}
    We first recall that if $\bx \sim \Ham{\calA}{n}{k}$ is uniformly random and $(\bP,\bQ) = \RSK(\bx)$, then the recording tableau $\bQ$ is uniformly random on $\SYTleq{n}{k}$. This is because for each possible recording tableau $Q \in \SYTleq{n}{k}$ there is a unique insertion tableau~$P$ of the same shape as~$Q$ having exactly $k$ boxes labeled with the larger letter of~$\calA$. (Specifically, if $P \vdash (\lambda_1, \lambda_2)$, then the last $k-\lambda_2$ boxes of $P$'s first row, and all of the boxes of $P$'s second row, are labeled with $A$'s larger letter.)  It follows that the same is true if $\bx \sim \SymHam{\calA}{n}{k}$ is uniformly random.  But now the desired coupling follows from Corollary~\ref{cor:tableau-bij2} (recalling Definition~\ref{def:ssLIS-tableau}).
\end{proof}

In fact, Corollary~\ref{cor:string-bij} is fundamentally stronger than our desired Theorem~\ref{thm:coupling-ssLIS}, as we now show:

\begin{proof}[Proof of Theorem~\ref{thm:coupling-ssLIS}.]
    For $r \in [0,1]$, suppose we draw an $r$-biased string $\by \in \{1,2\}^n$ and define the random variable $\bj$ such that $\by \in \SymHam{\{1,2\}}{n}{\bj}$.  (Note that given $\bj$, the string $\by$ is uniformly distributed on $\SymHam{\{1,2\}}{n}{\bj}$.)  Write $L_r(\ell)$ for the cumulative distribution function of~$\bj$; i.e., $L_r(\ell) = \Pr[\by \in \cup_{j \leq \ell} \SymHam{\{1,2\}}{n}{j}]$, where $\by$ is $r$-biased.

    \medskip

    \emph{Claim:}
        $L_q(\ell) \geq L_p(\ell)$ for all $0 \leq \ell \leq \lfloor \frac{n}{2} \rfloor$.

    \medskip

    Before proving the claim, let us show how it is used to complete the proof of Theorem~\ref{thm:coupling-ssLIS}. We define the required coupling $(\bw,\bx)$ of $p$-biased and $q$-biased distributions as follows:  First we choose $\btheta \in [0,1]$ uniformly at random. Next we define $\bk$ (respectively, $\bk'$) to be the least integer such that $L_p(\bk) \geq \btheta$ (respectively, $L_q(\bk') \geq \btheta$); from the claim it follows that $\bk' \leq \bk$ always.  Finally, we let $(\bw, \bx)$ be drawn from the coupling on $\SymHam{\{1,2\}}{n}{\bk}$ and $\SymHam{\{1,2\}}{n}{\bk'}$ specified in Corollary~\ref{cor:string-bij}. Then as required, we have that $\bx' \ssLIS \bw$ always, and that $\bw$ has the $p$-biased distribution and $\bx$ has the $q$-biased distribution.

    It therefore remains to prove the claim.  We may exclude the trivial cases $\ell = \frac{n}{2}$ or $q \in \{0,1\}$, where $L_q(\ell) = 1$. Also, since $L_r(\ell) = L_{1-r}(\ell)$ by symmetry, we may assume $0 < q \leq p \leq \frac12$. Thus it suffices to show that $\frac{d}{dr} L_r(\ell) \leq 0$ for $0 < r \leq \frac12$. Letting $\bh$ denote the ``Hamming weight'' (number of~$2$'s) in an $r$-biased random string on $\{1,2\}^n$, we have
    \begin{align*}
        L_r(\ell) &= \Pr[\bh \leq \ell] + \Pr[\bh \geq n-\ell] = 1- \Pr[\bh > \ell] + \Pr[\bh > n-\ell-1] \\
         \Rightarrow \frac{d}{dr} L_r(\ell) &= -\frac{d}{dr} \Pr[\bh > \ell] + \frac{d}{dr} \Pr[\bh > n-1-\ell].
    \end{align*}
    (The first equality used $\ell < \frac{n}{2}$.)  But it is a basic fact that $\frac{d}{dr} \Pr[\bh > t] = n \binom{n-1}{t}r^{t}(1-r)^{n-1-t}$. Thus
    \[
        \frac{d}{dr} L_r(\ell) = n \binom{n-1}{\ell}\left(- r^\ell(1-r)^{n-1-\ell}+r^{n-1-\ell}(1-r)^\ell\right),
    \]
    and we may verify this is indeed nonpositive:
    \[
        - r^\ell(1-r)^{n-1-\ell}+r^{n-1-\ell}(1-r)^\ell \leq 0  \iff 1 \leq \left(\tfrac{1-r}{r}\right)^{n-1-2\ell},
    \]
    which is true since $0 < r \leq \frac12$ and $n-1-2\ell \geq 0$ (using $\ell < \frac{n}{2}$ again).
\end{proof}

\bibliographystyle{alpha}
\bibliography{wright}

\end{document}